\newtheorem{theorem}{Theorem}
\newtheorem{proposition}{Proposition}
\newtheorem{remark}{Remark}
\DeclareAcronym{PC}{short = PrO, long = predictively oriented}
\DeclareAcronym{SVGD}{short = SVGD, long = Stein variational gradient descent}
\DeclareAcronym{VGD}{short = VGD, long = variational gradient descent}
\DeclareAcronym{MCMC}{short = MCMC, long = Markov chain Monte Carlo}
\DeclareAcronym{MFLD}{short = MFLD, long = mean field Langevin dynamics}
\DeclareAcronym{RKHS}{short = RKHS, long = reproducing kernel Hilbert space}
\DeclareAcronym{KLD}{short = KLD, long = Kullback--Leibler divergence}
\DeclareAcronym{KGD}{short = KGD, long = kernel gradient discrepancy}
\DeclareAcronym{KSD}{short = KSD, long = kernel Stein discrepancy}
\DeclareAcronym{ODE}{short = ODE, long = ordinary differential equation}
\DeclareAcronym{MMD}{short = MMD, long = maximum mean discrepancy}
\DeclareMathOperator*{\argmin}{arg\,min}
\newcommand{\nll}{\centernot{\ll}}
\newcommand{\vargrad}{\nabla_{\mathrm{V}}}
\newcommand{\QBayes}{Q_{\mathrm{Bayes}}}
\newcommand{\QPC}{Q_{\mathrm{PrO}}}
\newcommand{\PBayes}{P_{\mathrm{Bayes}}}
\newcommand{\PPC}{P_{\mathrm{PrO}}}
\newcommand{\LBayes}{\mathcal{L}_{\mathrm{Bayes}}}
\newcommand{\LPC}{\mathcal{L}_{\mathrm{PrO}}}
\newcommand{\KLD}{\mathrm{KLD}}
\newif\ifshowalttext
\newcommand{\alttext}[1]{%
\ifshowalttext
{\color{green!50!black}Alt text: #1}%
\fi
}
\begin{document}

\title{Detecting Model Misspecification in Bayesian Inverse Problems via Variational Gradient Descent}
\author{Qingyang Liu$^1$, Matthew A. Fisher$^1$, Zheyang Shen$^1$, \\
Xuebin Zhao$^2$, Katherine Tant$^3$, Andrew Curtis$^2$, Chris. J. Oates$^{1,4}$\footnote{Correspondence should be sent to \texttt{chris.oates@ncl.ac.uk}.} \\
\small $^1$Newcastle University, UK \\
\small $^2$University of Edinburgh, UK \\
\small $^3$University of Glasgow, UK \\
\small $^4$The Alan Turing Institute, UK}
\maketitle

\begin{abstract}
    Bayesian inference is optimal when the statistical model is well-specified, while outside this setting Bayesian inference can catastrophically fail; accordingly a wealth of post-Bayesian methodologies have been proposed.
Predictively oriented (PrO) approaches lift the statistical model $P_\theta$ to an (infinite) mixture model $\int P_\theta \; \mathrm{d}Q(\theta)$ and fit this predictive distribution via minimising an entropy-regularised objective functional.
In the well-specified setting one expects the mixing distribution $Q$ to concentrate around the true data-generating parameter in the large data limit, while such singular concentration will typically not be observed if the model is misspecified.
Our contribution is to demonstrate that one can empirically detect model misspecification by comparing the standard Bayesian posterior to the PrO `posterior' $Q$, providing a novel and widely-applicable diagnostic tool for the standard Bayesian workflow.
To operationalise this, we present an efficient numerical algorithm based on variational gradient descent.
A simulation study, and a more detailed case study involving a Bayesian inverse problem in seismology, confirm that model misspecification can be automatically detected using this framework.
\end{abstract}

\section{Introduction}
\label{sec: intro}

Detecting and mitigating model misspecification is a pertinent but difficult issue in Bayesian statistics, where one must consider the full posterior predictive distribution \emph{in lieu} of e.g. plotting a simple scalar residual.
The two main approaches to detecting model misspecification (also called \emph{model criticism}) are: 
\begin{enumerate}
\item \emph{Predictive}:  Compare the posterior predictive distribution to held-out entries from the dataset \citep[e.g.][]{gelman1996posterior,bayarri2000p,walker2013bayesian,moran2024holdout}.
\item \emph{Comparative}:  Perform model selection over a set of candidates models in which the current model is contained \citep[e.g.][]{kass1995bayes,wasserman2000bayesian,kamary2014testing}.
\end{enumerate}
In the first case, if the held out data are in some sense `unexpected' under the posterior predictive distribution, this serves as evidence that either the prior or the model may be misspecified.
In the second case, if the data strongly support an alternative model, this suggests that the original model may be misspecified.
Of course, this is not a true dichotomy and the predictive and comparative approaches are intimately related; for instance, predictive performance is a common criteria for selecting a suitable model \citep{piironen2017comparison,fong2020marginal}.

For challenging applications, both of these approaches can be impractical.
An example of such a challenging application is \emph{seismic travel time tomography} \citep{curtis2002probing, zhang2020seismic, zhao2022bayesian}, a regression task where one seeks to reconstruct a subsurface seismic velocity field using measured first arrival times of seismic waves travelling between seismic source and sensor (seismometer) locations.
Evaluation of the likelihood and/or its gradient here is associated with a nontrivial computational cost, since  the \emph{Eikonal equation}, a partial differential equation describing the high frequency approximation of the wave equation, needs to be solved to estimate the travel times of the first arriving seismic waves.
It is worth pointing out that `misspecification' as discussed in this work refers to the suitability of the statistical model for the dataset, rather than any errors that may be present in the physical model for seismic wave travel, though the two issues are of course closely related.

Considering the predictive approach in the seismic tomography context, the spatiotemporal nature of the observations renders the data strongly dependent, representing a challenge in constructing a suitable held-out dataset.
Further, while in principle there are solutions to cross-validation for dependent data \citep[e.g.][]{burman1994cross,rabinowicz2022cross}, the computational cost associated with performing multiple folds of held-out prediction can render this approach impractical.

Focussing instead on the comparative approach, constructing plausible alternative models requires modifying the physical assumptions underlying the seismic wave propagation.  This in turn requires the implementation and optimisation of suitable numerical methods, which demands considerable effort.
In the case of model misspecification, there is particular interest in exploring \emph{more sophisticated} models, but without advanced physical insight, constructing such alternatives might be impractical.

The aim of this paper is to propose a \emph{simple, practical and general approach to detecting model misspecification in Bayesian statistics}, and we focus on the seismic travel time tomography problem to demonstrate its potential.
Our solution can be considered both a predictive and a comparative approach; predictive because we explicitly assess the predictive performance of the model, and comparative because our principal innovation is to automate the generation of a candidate model set.
To draw an analogy, recall the seminal paper of \citet{kennedy2001bayesian}, where the authors propose augmenting a misspecified physics-based parametric regression function $f_\theta(x)$ with a nonparametric component, i.e. $f_\theta(x) + g(x)$ where $g(x)$ and the parameter $\theta$ are to be jointly inferred.
In doing so, Kennedy and O'Hagan are in effect automatically generating a set of alternative models without requiring additional physical insight. 
The approach has been generalised beyond additive misspecification \citep[e.g. to misspecified differential equation models;][]{alvarez2013linear}.
However, the drawbacks of this approach are twofold; (i) introducing a nonparametric component increases the data requirement, (ii) any causal predictive power of the model is lost, since the behaviour of the nonparametric component $g(x)$ under intervention cannot be inferred.
As such, an alternative approach to automatic generation of candidate models is often required.

Inspired by \emph{nonparametric maximum likelihood} \citep{laird1978nonparametric}, we consider lifting a statistical model $P_\theta$ to an (infinite) mixture model $P_Q := \int P_\theta \; \mathrm{d}Q(\theta)$ as a mechanism to generate an infinite candidate set of alternative models (parametrised by $Q$), while retaining any causal semantics present in the original statistical model.
However, the nonparametric maximum likelihood estimate for $Q$ is not computable in general.
Instead, we therefore follow the \emph{predictively oriented} (\acs{PC}) \emph{posterior} approach of \citet{lai2024predictive,shen2025prediction,mclatchie2025predictively}, in which learning of the mixing density $Q$ proceeds based on the performance of the predictive distribution $P_Q$ (in this sense our method is a predictive approach) and is cast as an entropy-regularised optimisation task, with the solution being a \ac{PC} `posterior' denoted $\QPC$.
Our specific contributions are as follows:
\begin{itemize}
    \item \emph{Detecting misspecification:}  We propose a direct comparison of the predictive distributions associated with $\QPC$ and the Bayesian posterior $\QBayes$ as a general strategy enabling model misspecification to be detected (in this sense our method is a comparative approach).
    Indeed, \citet{mclatchie2025predictively} proved that in the well-specified setting the predictive distribution associated to $\QPC$ converges to the true data-generating distribution in the large data limit, in agreement with the predictive distribution associated to $\QBayes$, while such agreement is typically not observed when the model is misspecified.
    \item \emph{Testing for misspecification:} A formal hypothesis test for model misspecification is presented, and the asymptotic correctness of a parametric bootstrap in this setting is theoretically established.
    Due to the parametric bootstrap, our test is practical only for statistical models $P_\theta$ for which both simulation and inference can be rapidly performed, motivating the development of efficient numerical methods to obtain $\QPC$ and $\QBayes$.
    \item \emph{Computation as variational gradient descent:}  Since $\QPC$ is defined as the minimiser of a \emph{nonlinear} variational objective, we do not have access to an unnormalised form of the target, and as such standard methods such as \ac{MCMC} cannot be immediately applied.
    As a solution, we turn to \emph{variational gradient descent} \citep[\acs{VGD};][]{wang2019nonlinear,chazal2025computable}, which is a nonlinear generalisation of \ac{SVGD} \citep{liu2016stein}, a popular numerical method for computing $\QBayes$.
    Novel sufficient conditions for the consistency of \ac{VGD} are presented, than can be verified in the seismology context.
    Remarkably, sampling from $\QPC$ can be achieved with a one-line change to standard \ac{SVGD}, enabling both $\QPC$ and $\QBayes$ to be computed using identical code, with an additional argument specifying whether the \ac{PC} or Bayesian posterior is to be computed.
    Using \ac{VGD}, we empirically confirm the ability of the parametric bootstrap hypothesis test to detect when the statistical model is misspecified.
    \item \emph{Application to inverse problems:}  For challenging settings where testing for misspecification using a parametric bootstrap would be computationally impractical, even using \ac{VGD}, we investigate whether a visual comparison of $\QPC$ and $\QBayes$ can still act as a useful diagnostic tool.
    To this end a detailed case study involving seismic travel time tomography is presented.
    Here we indeed find that a visual comparison of $\QPC$ and $\QBayes$ is able to distinguish between the well-specified case and scenarios in which the location of the sensors is misspecified.
\end{itemize}

\noindent Our methods are contained in \Cref{sec: methods}, while our empirical assessment is contained in \Cref{sec: empirical}.
A summary of our findings is presented in \Cref{sec: discuss}, with our proofs and experimental protocol reserved for the Appendix.

\section{Methods}
\label{sec: methods}

The variational formulation of Bayesian updating and the related \ac{PC} approach are recalled in \Cref{subsec: set-up}.
The \ac{VGD} methodology is presented in \Cref{sec: var grads}, and novel theoretical analysis required to establish its validity in our context is presented in \Cref{sec: theory}.
Our formal hypothesis test for misspecification is presented in \Cref{sec: implement} and the asymptotic correctness of the parametric bootstrap null is established in \Cref{subsec: bootstrap}.

\subsection{Bayesian and Predictively Oriented Approaches}
\label{subsec: set-up}

Let $P_\theta$ denote a statistical model parametrised by $\theta \in \mathbb{R}^d$, whose density $p_\theta$ we assume to exist.
Let $\mathfrak{D}_n$ denote the dataset.
In what follows, motivated by our seismic tomography case study, we focus on regression modelling, where responses $\{y_i\}_{i=1}^n$ are conditionally independent given covariates $\{x_i\}_{i=1}^n$, so that
\begin{align*}
    \log p_\theta(\mathfrak{D}_n) = \sum_{i=1}^n \log p_\theta(y_i | x_i)
\end{align*}
where the dependence on the covariates $x_i$ is made explicit.
However it should be noted that our methods are applicable beyond the regression context.

\paragraph{Standard Bayesian Posterior}

Let $\mathcal{P}(\mathbb{R}^d)$ denote the set of distributions\footnote{Measurability is implicitly assumed in this manuscript.} on $\mathbb{R}^d$.
Let $Q \ll Q_0$ denote that $Q$ is absolutely continuous with respect to $Q_0$, and $\mathrm{d} Q / \mathrm{d} Q_0$ the Radon--Nikodym density of $Q$ with respect to $Q_0$.
For $Q \ll Q_0$, the \ac{KLD} is defined as $\KLD(Q \Vert Q_0) := \int \log (\mathrm{d} Q / \mathrm{d} Q_0)\; \mathrm{d} Q$, while for $Q \nll Q_0$ we set $\KLD(Q || Q_0) = \infty$. 
Recall the variational characterisation of the standard Bayesian posterior due to \citet{zellner1988optimal}:
\begin{align*}
    \QBayes := \argmin_{Q \in \mathcal{P}(\mathbb{R}^d)} \;  - \sum_{i=1}^n \int \log p_\theta(y_i | x_i) \; \mathrm{d}Q(\theta) + \KLD(Q || Q_0) 
\end{align*}
where $Q_0 \in \mathcal{P}(\mathbb{R}^d)$ is the prior distribution \citep[see also e.g.][]{knoblauch2022optimization}.
For the integral to be well-defined, i.e. for $\theta \mapsto - \log p_\theta(\mathfrak{D}_n)$ to be $Q$-integrable for all $Q \in \mathcal{P}(\mathbb{R}^d)$, it is sufficient for $\theta \mapsto p_\theta(\mathfrak{D}_n)$ to be bounded.

\paragraph{Predictively Oriented Posterior}

The \ac{PC} approaches of \citet{lai2024predictive,shen2025prediction,mclatchie2025predictively} were developed with the aim of avoiding over-confident predictions when the statistical model is misspecified.
These approaches lift the original parametric model $P_\theta$ into a mixture model $P_Q$, with density
$$
p_Q(y_i | x_i) = \int p_\theta(y_i | x_i) \; \mathrm{d}Q(\theta) ,
$$
and then attempt to learn $Q$ by minimising an entropy-regularised objective functional.
For the purpose of this paper we measure the suitability of $Q$ using the (relative) entropy-regularised mixture log-likelihood
\begin{align}
    \QPC := \argmin_{Q \in \mathcal{P}(\mathbb{R}^d)} \; - \sum_{i=1}^n \log p_Q(y_i | x_i) + \KLD(Q || Q_0)  . \label{eq: def QPC}
\end{align}
This can be viewed as an entropy-regularised form of \emph{nonparametric maximum likelihood} \citep{laird1978nonparametric}; the entropic regularisation is a key ingredient, since otherwise the solution will be atomic \citep[][e.g. Theorem 21 in Chapter 5]{lindsay1995mixture}.
For a discussion of other related work, such as \citet{masegosa2020learning,sheth2020pseudo,jankowiak2020deep,jankowiak2020parametric,morningstar2022pacm}, see \citet{shen2025prediction,mclatchie2025predictively}.

Since the variational formulation \eqref{eq: def QPC} is non-standard, we should first ask if $\QPC$ is well-defined.
Let $\mathcal{P}_\alpha(\mathbb{R}^d)$ denote the subset of $\mathcal{P}(\mathbb{R}^d)$ for which moments of order $\alpha$ exist.
The proof of the following result is contained in \Cref{app: well defined}:

\begin{theorem}[$\QPC$ is well-defined] \label{prop: Qmix well def}
    Let $Q_0$ admit a positive density $q_0$ on $\mathbb{R}^d$.
    Let $p_\theta(y_i | x_i)$ be bounded in $\theta$ for each $(x_i,y_i)$ in the dataset.
    Then there exists a unique solution to \eqref{eq: def QPC}.
    Further, if $Q_0 \in \mathcal{P}_\alpha(\mathbb{R}^d)$ then $\QPC \in \mathcal{P}_\alpha(\mathbb{R}^d)$.
\end{theorem}

The benefit of lifting to a mixture model is as follows:  
It the original statistical model $P_\theta$ was well-specified, so that there really was a correct parameter $\theta_\star$, then we can hope $\QPC$ concentrates around $\theta_\star$ (i.e. collapses to a mixture model with a single mixture component).
Likewise one would expect vanishing posterior uncertainty in the standard Bayesian context.
Inspecting whether the learned $\QPC$ appears to converge to the same limit as the standard Bayesian posterior $\QBayes$ can therefore provide a useful validation that the model is well-specified.
On the other hand, if the original statistical model was misspecified, then we expect $\QPC$ to learn a non-trivial mixture model, assuming such a mixture provides a better explanation of the dataset than any single instance of $P_\theta$ could.
That is, $\QPC$ is able to \emph{adapt} to the level of model misspecification, in a way that standard Bayesian inference cannot.
These intuitions for the asymptotic behaviour of $\QPC$ are confirmed in the recent detailed theoretical treatment in \citet{mclatchie2025predictively}.
An empirical demonstration of the effectiveness of this approach is the subject of \Cref{sec: empirical}; the remainder of this section addresses the key practical question of how to calculate $\QPC$ in \eqref{eq: def QPC}.

\begin{remark}[Comparison to mixture models]
\label{rem: mixture}
    One can always ask whether a mixture model provides a better explanation of the data compared to any single instance of the original statistical model.
    The \ac{PC} posterior approach is fundamentally different to fitting a mixture model; there is no prior on the number of mixture components, and one does not need to extend the dimension of the parameter space as would ordinarily happen when mixture models are considered.
\end{remark}

\begin{remark}[Learning rate-free]
Note that, unlike generalised Bayesian methods \citep{bissiri2016general,knoblauch2022optimization} and in contrast to the earlier work on \ac{PC} approaches in \citet{lai2024predictive,shen2025prediction,mclatchie2025predictively}, no learning rate appears in \eqref{eq: def QPC} since the data term is automatically on the correct scale (being a log-likelihood, it is measured in \emph{nats}).
Earlier works introduced a learning rate $\lambda$ in the form of $\lambda \times \KLD(Q || Q_0)$ to accommodate other choices of data-dependent loss, such as maximum mean discrepancy, for which the units are not directly comparable.
Selection of learning rates is known to be difficult \citep{wu2023comparison} and it is therefore advantageous that these can be avoided.
\end{remark}

\subsection{Variational Gradient Descent}
\label{sec: var grads}

An immediate question is \emph{how to solve \eqref{eq: def QPC}}?
Since the parameter of the mixture model is $Q$, it is unclear how to proceed; $Q$ lives in $\mathcal{P}(\mathbb{R}^d)$ which is not a vector space, making it unclear how to apply operations such as taking a gradient with respect to $Q$.
To resolve this problem we consider a general entropy-regularised variational objective 
\begin{align}
    \mathcal{J}(Q) := \mathcal{L}(Q) + \KLD(Q || Q_0) , \label{eq: objective}
\end{align}
which accommodates both $\QBayes$ and $\QPC$ by taking the loss function to be, respectively, either
\begin{align}
\LBayes(Q) & = - \sum_{i=1}^n \int \log p_\theta(y_i | x_i) \; \mathrm{d}Q(\theta), \label{eq: define losses Bayes} \end{align}
for the standard Bayesian posterior, or
\begin{align}
    \LPC(Q) = - \sum_{i=1}^n \log \int p_\theta(y_i | x_i) \; \mathrm{d}Q(\theta) ,  \label{eq: define losses}
\end{align}
for the \ac{PC} posterior, which differ only in the order in which the integral and the logarithm are performed.
Our aim is a rigorous notion of gradient descent that can be applied to (relative) entropy-regularised objective in \eqref{eq: objective}.
To this end we now explain how the \ac{VGD} algorithm, due originally to \citet{wang2019nonlinear}, can be applied in our context.

\paragraph{Variational Gradient}

The notion of a gradient that we will need is a \emph{variational gradient}.
For a suitably regular functional $\mathcal{F} : \mathcal{P}(\mathbb{R}^d){} \rightarrow \mathbb{R}$, the \emph{first variation} at $Q \in \mathcal{P}(\mathbb{R}^d)$ is defined as a map $\mathcal{F}'(Q) : \mathbb{R}^d \rightarrow \mathbb{R}$ such that 
$$
\lim_{\epsilon\to 0}\frac{1}{\epsilon} \{ \mathcal{F}(Q+\epsilon \chi)-\mathcal{F}(Q) \} = \int\mathcal{F}'(Q) \; \mathrm{d}\chi
$$ 
for all perturbations $\chi$ of the form $\chi=Q' - Q$ with $Q' \in \mathcal{P}(\mathbb{R}^d)$; note that if it exists, the first variation is unique up to an additive constant. 
The first variation extends the standard derivative concept to arguments that are distribution-valued.
Much of the intuition for standard derivatives carries over to the first variation; for example, a linear functional $\mathcal{F}_1(Q) = \int f(\theta) \, \mathrm{d}Q(\theta)$ has a first variation $\mathcal{F}_1'(Q)(\theta) = f(\theta)$ which is constant in $Q$.
Similarly a quadratic functional $\mathcal{F}_2(Q) = \iint f(\theta,\vartheta) \, \mathrm{d}Q(\theta) \mathrm{d}Q(\vartheta)$ has a first variation $\mathcal{F}_2'(Q)(\theta) = \int f(\theta , \vartheta) \, \mathrm{d}Q(\vartheta)$ which is linear in $Q$.
(In both of these examples $f$ must be regular enough for the first variation to be well-defined.)

Given a functional $\mathcal{F}(Q)$ we define the variational gradient of $\mathcal{F}$ at $Q$ as the function $\vargrad \mathcal{F}(Q)(\theta) := \nabla_\theta \mathcal{F}'(Q)(\theta)$ where $\mathcal{F}'(Q)$ is the first variation of $\mathcal{F}$ at $Q$ \citep[][Definition 1]{chazal2025computable}.
The variational gradient generalises the better-known \emph{Wasserstein gradient} from optimal transport beyond the Wasserstein space $Q \in \mathcal{P}_2(\mathbb{R}^d)$, so that we do not need to assume a second moment; see e.g. \citet{otto2001geometry}. 
For the examples in the previous paragraph, again with $f$ sufficiently regular, the variational gradients are $\nabla_{\mathrm{V}} \mathcal{F}_1(Q)(\theta) = \nabla_\theta f(\theta)$ and $\nabla_{\mathrm{V}} \mathcal{F}_2(Q)(\theta) = \int \nabla_\theta f(\theta,\vartheta) \, \mathrm{d}Q(\vartheta)$.
For the loss functions in \eqref{eq: define losses Bayes} and \eqref{eq: define losses} we have
\begin{align}
\vargrad \LBayes(Q)(\theta) & = - \sum_{i=1}^n \nabla_\theta \log p_\theta(y_i | x_i) \label{eq: vargrad LBayes} \\
\vargrad \LPC(Q)(\theta) & = - \sum_{i=1}^n w_\theta^Q(y_i | x_i) \nabla_\theta \log p_\theta(y_i | x_i) , \qquad w_\theta^Q(y_i | x_i) := \frac{ p_\theta(y_i | x_i) }{ p_Q(y_i | x_i) } ,   \label{eq: vargrad LPC}
\end{align}
see \Cref{prop: explicit variational grad} in \Cref{app: prelim}.
Note that \eqref{eq: vargrad LPC} can be seen as a weighted version of \eqref{eq: vargrad LBayes}, which agrees when $Q = \delta_\theta$ is a Dirac distribution at $\theta \in \mathbb{R}^d$.

\paragraph{Computing Directional Derivatives}

Let $T_\# Q$ denote the distribution of $T(X)$ where $X \sim Q$.
Consider the directional derivatives 
$$
\left. \frac{\mathrm{d}}{\mathrm{d}\epsilon} \mathcal{J}((\mathrm{I}_d + \epsilon v)_\# Q) \right|_{\epsilon = 0} 
$$
as specified by a suitable vector field $v : \mathbb{R}^d \rightarrow \mathbb{R}^d$, where $\mathrm{I}_d$ is the identity map on $\mathbb{R}^d$.
For the purpose of optimisation, we seek a vector field $v$ for which the rate of decrease in $\mathcal{J}$ is maximised. 
To this end, letting
\begin{align*}
	 \mathcal{T}_{Q} v(\theta) & := \left[ (\nabla\log q_0)(\theta) - \vargrad  \mathcal{L}(Q)(\theta)  \right] \cdot v(\theta) + (\nabla \cdot v)(\theta) ,
\end{align*}
it follows from the fundamental theorem of calculus \citep[see e.g. Section 3.2.2.2 of][]{chazal2025computable} that
\begin{align}
    \left. \frac{\mathrm{d}}{\mathrm{d}\epsilon} \mathcal{J}((\mathrm{I}_d + \epsilon v)_\# Q) \right|_{\epsilon = 0} = - \int \mathcal{T}_{Q} v(\theta) \; \mathrm{d} Q(\theta) .  \label{eq: KL to KGD}
\end{align}
That is, the directional derivative of the objective $\mathcal{J}$ in \eqref{eq: objective} can be expressed as an explicit $Q$-dependent linear functional applied to the vector field.

\paragraph{Following the Directions of Steepest Descent}

Next we pick the vector field $v_Q$ from the unit ball of an appropriate Hilbert space for which the magnitude of the negative gradient in \eqref{eq: KL to KGD} is maximised.
For a multivariate function, let $\partial_{i,j}$, $\nabla_i$, etc, indicate the action of the differential operators with respect to the $i$th argument.
Letting $\mathcal{H}_k$ denote the \ac{RKHS} associated to a symmetric positive semi-definite kernel $k : \mathbb{R}^d \times \mathbb{R}^d \rightarrow \mathbb{R}$, we seek $v_Q \in \mathcal{H}_k^d$, the $d$-fold Cartesian product, which leads to
\begin{align}
    v_Q(\cdot) \propto \int \{k(\theta,\cdot ) (\nabla \log q_0 - \vargrad  \mathcal{L}(Q))(\theta) + \nabla_1 k(\theta, \cdot) \} \; \mathrm{d} Q(\theta).  \label{eq: steepest}
\end{align}
To numerically approximate this gradient descent, we initialise $\{\theta_j^0\}_{j=1}^N$ as independent samples from $\mu_0$ at time $t = 0$ and then update $\{\theta_j^t\}_{j=1}^N$ deterministically, via the coupled system of \acp{ODE}
\begin{align}
\frac{\mathrm{d}\theta_i^t}{\mathrm{d}t} & = \frac{1}{N} \sum_{j=1}^N k(\theta_i^t , \theta_j^t) (\nabla \log q_0 - \vargrad  \mathcal{L}(Q_N^t))(\theta_j^t) + \nabla_1 k(\theta_j^t , \theta_i^t) , \; \; Q_N^t := \frac{1}{N} \sum_{j=1}^N \delta_{\theta_j^t}
\label{eq: gen svgd odes}
\end{align}
up to a time horizon $T$.
The first consistency result in this context was established in Proposition 3 of \citet{chazal2025computable}, who called the algorithm \emph{variational gradient descent} (\acs{VGD}).
For $\QBayes$, \ac{VGD} coincides with \ac{SVGD} \citep{liu2016stein} and the sharpest convergence analysis of \ac{SVGD} to-date appears in \citet{banerjee2025improved}.
Unfortunately the assumptions made in \citet{chazal2025computable} are too restrictive to handle $\QPC$.
To resolve this issue, a novel convergence guarantee for \ac{VGD} in the context of $\QPC$ is presented next.

\begin{remark}[Other numerical methods for $\QPC$]
\label{rem: other numerics}
One could approximate $\QPC$ using established numerical methods designed for variational tasks, the most well-studied of which is mean-field Langevin dynamics.
However, the seismic tomography community have reported empirical evidence in favour of gradient flow-type algorithms such as \ac{SVGD} over numerical methods based on long-run ergodic averages \citep[see e.g.][]{zhang2020seismic,zhang20233}.
Here we propose to approximate both $\QBayes$ and $\QPC$ using \ac{VGD}, noting that in the first case \ac{VGD} coincides with \ac{SVGD} due to the linear form of the loss function $\LBayes$.
In fact, we will see in \Cref{sec: implement} that applying \ac{VGD} to $\QPC$ requires only a one line-change to standard \ac{SVGD}.
\end{remark}

\subsection{Consistency of \ac{VGD}}
\label{sec: theory}

To discuss the consistency of \ac{VGD} we need to specify in what sense the approximation is consistent.
To this end, let $\mathcal{B}_{k}^d = \{v \in \mathcal{H}_k^d : \sum_i \|v_i\|_{\mathcal{H}_k}^2 \leq 1\}$ denote the unit ball in $\mathcal{H}_k^d$.
Let $\mathcal{L}^1(Q) := \{f : \mathbb{R}^d \rightarrow \mathbb{R} : \int \|f(x)\| \; \mathrm{d}Q(x) < \infty  \}$ denote\footnote{This should not be confused with the notation $\mathcal{L}$ for the loss function in \eqref{eq: objective}.} the set of $Q$-integrable functions on $\mathbb{R}^d$.
Let $f_{-}$ denote the negative part $f_{-}: x \mapsto \min\{0 , f(x)\}$ of a function $f$. 
The \acf{KGD} \citep[][Definition 4]{chazal2025computable} 
\begin{align}
	\mathrm{KGD}_k(Q) 
        := \sup_{\substack{ v \in \mathcal{B}_k^d \ \text{\normalfont s.t.} \\ (\mathcal{T}_{Q}v)_{-} \in \mathcal{L}^1(Q) } } \left\lvert \int \mathcal{T}_{Q} v(\theta) \; \mathrm{d} Q(\theta) \right\rvert \label{eq: KGD}
\end{align}
can be interpreted as a gradient norm for $\mathcal{J}$ using \eqref{eq: KL to KGD}; a small \ac{KGD} indicates that $Q$ is close to being a stationary point (and in particular a minimiser, due to convexity) of $\mathcal{J}$.
The precise topologies induced on $\mathcal{P}(\mathbb{R}^d)$ by \ac{KGD} can be weaker or stronger depending on how the kernel $k$ is selected; this is discussed in detail in \citet{chazal2025computable} but such discussion is beyond the scope of the present work.

Let $C^r(\mathbb{R}^d)$ denote the set of $r$ times continuously differentiable functionals on $\mathbb{R}^d$.
The proof of the following result is contained in \Cref{app: proof of VGD}:

\begin{theorem}[Consistency of \ac{VGD} for $\QPC$] \label{thm: main text}
Assume that:
\begin{enumerate}
    \item[(i)] \emph{Initialisation:} %
    $\mu_0$ has bounded support, and has a density that is $C^2(\mathbb{R}^d)$.
    \item[(ii)] \emph{Kernel:} $k$ is $C^3(\mathbb{R}^d)$ in each argument with the growth of $(\theta,\vartheta) \mapsto \| \nabla_1 k(\theta,\vartheta) \|$ at most linear, and $\sup_{\theta} | \Delta_1 k (\theta,\theta) | < \infty$. 
    \item[(iii)] \emph{Regularisation:} $\log q_0 \in C^3(\mathbb{R}^d)$ with the growth of $\theta \mapsto k(\theta,\theta) \| \nabla \log q_0(\theta) \|$ at most linear, and $\sup_{\theta } k(\theta,\theta) | \Delta \log q_0(\theta) | < \infty$.
    \item[(iv)] \emph{Regularity of $P_\theta$:} \label{asm: reg ptheta} $\theta \mapsto p_\theta(y_i | x_i)$ is positive, bounded and $C^3(\mathbb{R}^d)$ with
    \begin{enumerate}
        \item[a.] $\displaystyle \sup_{\theta } \sqrt{k(\theta,\theta)} \frac{ \| \nabla_\theta p_\theta(y_i|x_i) \| }{p_\theta(y_i | x_i)} < \infty$
        \item[b.] $\displaystyle \sup_{\theta } k(\theta,\theta) \frac{ \Delta_\theta p_\theta(y_i|x_i) }{ p_\theta(y_i | x_i) } < \infty$
    \end{enumerate}
    for each $(x_i,y_i)$ in the dataset.
\end{enumerate}
Then the dynamics defined in \eqref{eq: gen svgd odes} with $\mathcal{L} = \LPC$ satisfies
    \begin{align*}
        \frac{1}{T} \int_0^T \mathbb{E}[ \mathrm{KGD}_k^2(Q_N^t) ] \; \mathrm{d}t \leq \frac{ \mathrm{KLD}(\mu_0|| \rho_{\mu_0}) }{T} + \frac{C_k}{N}
    \end{align*}
    for some finite constant $C_k$, where $\rho_{\mu_0}$ denotes the distribution with density proportional to $q_0(\theta) \exp( - \LPC'(\mu_0)(\theta))$. 
\end{theorem}

\noindent \Cref{thm: main text} provides the first consistency guarantee for \ac{VGD} in the setting of $\LPC$.

\begin{remark}[On the assumptions]
\label{rem: assumptions}
Our assumptions in \Cref{thm: main text} rule out models for which the \emph{Fisher score} $\theta \mapsto \nabla_\theta \log p_\theta(y_i | x_i)$ is unbounded when the kernel is translation-invariant.
This is a strong assumption in general, but it is satisfied in seismic tomography, where the Fisher score asymptotically vanishes as a result of the wave propagation model being physics-constrained\footnote{Equivalently, physical considerations dictate that model parameters can be confined to a compact set (and then reparametrised back to $\mathbb{R}^d$), as done in \citet{zhang20233}.}.
This is not a unique property of seismic tomography, and we expect that many other physics-constrained inverse problems would similarly satisfy our regularity requirement.
Establishing the consistency of \ac{VGD} under weaker assumptions would be an interesting direction for further work.
\end{remark}

\begin{algorithm}[t!]
\caption{Variational Gradient Descent for $\QBayes$ and $\QPC$} \label{alg: svgd}
\begin{algorithmic}
\Require $\{\theta_j^0\}_{j=1}^N \subset \mathbb{R}^d$ (initial particles), $\epsilon > 0$ (step size)
\For{$t = 0,\dots,T-1$}
    \State $w_\theta^t(y_i | x_i) := p_{\theta}(y_i | x_i) / ( \frac{1}{N} \sum_{r=1}^N p_{\theta_r^t}(y_i | x_i) )$
    \State $s_t(\theta) := \left\{ \begin{array}{ll} (\nabla \log q_0)(\theta) + \sum_{i=1}^n \nabla_\theta \log p_\theta(y_i | x_i) & \text{to target $\QBayes$} \\
    (\nabla \log q_0)(\theta) + \sum_{i=1}^n w_\theta^t(y_i | x_i) \nabla_\theta \log p_\theta(y_i | x_i)  & \text{to target $\QPC$} \end{array} \right. $
\For{$j = 1,\dots,N$} 
    \State $\theta_j^{t+1} \gets \theta_j^t + \frac{\epsilon}{N} \sum_{r=1}^N \nabla_1 k(\theta_r^t , \theta_j^t) + s_t(\theta_r^t) k(\theta_r^t , \theta_j^t) $ 
\EndFor 
\EndFor
\end{algorithmic}
\end{algorithm}

\begin{remark}[Implementation of VGD]
For the purposes of this paper, computation of both $\QBayes$ and $\QPC$ was performed using a time discretisation of \ac{VGD} as described in \Cref{alg: svgd}.
In each case, \Cref{alg: svgd} starts by initialising $N$ particles and then iteratively updating the particles according to an Euler discretisation of the \acp{ODE} \eqref{eq: gen svgd odes} with step size $\epsilon > 0$ to be specified.
After $T$ time steps, the collection of particles represents an empirical approximation to either $\QBayes$ or $\QPC$.
It is worth reiterating that computation of $\QPC$ requires a one-line change to existing implementations of \ac{SVGD}, and that $\QBayes$ and $\QPC$ can be computed in parallel.
The rapid convergence of \ac{VGD} in a toy two-dimensional setting is displayed in \Cref{fig: convergence}.
\end{remark}

\subsection{Testing for Misspecification}
\label{sec: implement}

Our approach, in a nutshell, is to see if both $\QBayes$ and $\QPC$ appear to be converging to the same limit or not.
If these distributions are substantially different, we interpret this as evidence that the model may be misspecified.

An obvious question at this point is \emph{how to decide whether $\QBayes$ and $\QPC$ are converging to the same limit}?
If the model is well-specified, $\QBayes$ concentrates around the true parameter $\theta_\star$.
Thus we are interested in whether $\QPC$ also appears to concentrate around $\theta_\star$ or not.
Unfortunately, it is not the case that $\QBayes$ and $\QPC$ concentrate at the same rate; it is well-known that $\QBayes$ concentrates at a rate $n^{-1/2}$, while it appears that a slower rate is typical for $\QPC$.
The lack of a complete understanding of the concentration of $\QPC$ limits the extent to which the above question can be answered. 
Instead, we propose to compare the predictive distributions 
$$
P_{\QBayes}(\cdot | x) = \int P_\theta(\cdot | x) \; \mathrm{d}\QBayes(\theta) \qquad \text{and} \qquad  P_{\QPC}(\cdot | x) = \int P_\theta(\cdot | x) \; \mathrm{d}\QPC(\theta) ,
$$ 
which for simplicity we will denote in shorthand as $\PBayes$ and $\PPC$, with the dependence on $x$ left implicit. 
In the well-specified case, for a suitable discrepancy $\mathcal{D}_n$ (which we will see later can be weakly $n$-dependent),
we should expect that $\mathcal{D}_n( \PPC , \PBayes)  \rightarrow 0$ in an appropriate sense as $n \rightarrow \infty$ \citep[see][Theorem 1]{mclatchie2025predictively}.
That is, if the number of data $n$ is large enough then $\PPC$ and $\PBayes$ should be almost identical when the model is well-specified.
Conversely, if $\QPC$ does not concentrate around the same parameter $\theta_\star$ as $\QBayes$, then we can expect to detect this as an irreducible difference between the predictive distributions $\PPC$ and $\PBayes$.

\begin{figure}[t!]
    \includegraphics[width=\textwidth]{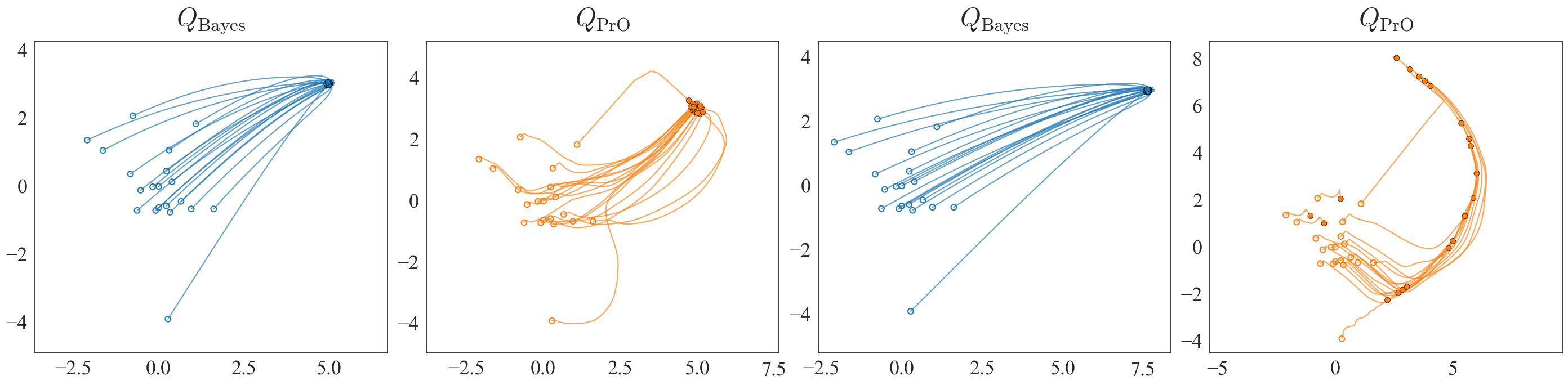}
    \caption{Illustrating the convergence of \acf{VGD} in the context of the two-dimensional example discussed in \Cref{sec: sims}.
    The hollow circular markers depict initial particle locations $\{\theta_j^0\}_{j=1}^N$, lines represent trajectories at intermediate times $\{\theta_j^t\}_{j=1}^N$, and filled circular markers depict final locations $\{\theta_j^T\}_{j=1}^N$.
    The left and centre-left panels correspond to $\QBayes$ and $\QPC$ in a setting where the statistical model is well-specified, while the centre-right and right panels correspond to $\QBayes$ and $\QPC$ in a setting where the statistical model is misspecified.
    Here $N = 20$ particles were used.
    \alttext{A figure with four panels.  Each panel displays the trajectories followed by $N = 20$ particles under the VGD dynamics, where the posterior is either the standard Bayesian posterior or the PrO posterior, and when the statistical model is well-specified and misspecified.}
    }
    \label{fig: convergence}
\end{figure}

To operationalise this idea, suppose that $\{y_i\}_{i=1}^n \subset \mathbb{R}^{p}$ for some $p \in \mathbb{N}$ and let $\kappa : \mathbb{R}^{p} \times \mathbb{R}^{p} \rightarrow \mathbb{R}$ be a symmetric positive semi-definite kernel.
Then we propose to construct an approximate null distribution for the (average squared) \ac{MMD} test statistic associated with $\kappa$ \citep{smola2007hilbert}, i.e.
\begin{align}
    \mathcal{T}(\{(x_i,y_i)\}_{i=1}^n) & \equiv \mathcal{D}_n( \PPC , \PBayes) := \frac{1}{n} \sum_{i=1}^n \mathrm{MMD}_\kappa^2( \PPC(\cdot | x_i) , \PBayes(\cdot | x_i) ),  \label{eq: MMD}
\end{align}
under the hypothesis that the statistical model is well-specified.
To do so, we let $\theta_n$ be any strongly consistent estimator of $\theta_\star$ (for the experiments we report, we take $\theta_n$ to be the mean of $\QBayes$), and use a parametric bootstrap; i.e. repeatedly compute $\mathcal{T}(\{(x_i,\tilde{y}_i)\}_{i=1}^n)$ based on synthetic datasets where $\tilde{y}_i$ is simulated from $P_{\theta_n}(\cdot | x_i)$.
The actual value of the test statistic \eqref{eq: MMD} can then be compared to the bootstrap null distribution so-obtained. 
The asymptotic correctness of this parametric bootstrap null is confirmed theoretically in \Cref{subsec: bootstrap} and empirically in \Cref{sec: empirical}.

\begin{remark}[Comparison to posterior predictive checks]
\label{rem: pp check}
A \emph{posterior predictive check} compares the real dataset to synthetic datasets generated from $\PBayes$ \citep{rubin1984bayesianly,gelman1996posterior}.
Unlike our approach, no assumption that data are (conditionally) independent is needed to conduct a posterior predictive check.
The approach can also be extended to a formal hypothesis test, using the parametric bootstrap to construct an empirical null in an analogous manner to that which we have described.
However, upon failing a posterior predictive check, one may be left with limited guidance on how to build a more suitable model; at best we might hope to compare the posterior predictive performance of models from a given candidate set \citep{moran2023posterior}.
In contrast, if the statistical model is deemed to be misspecified, we immediately have the option of adopting $\PPC$ instead of $\PBayes$ as a viable predictive model (cf. \Cref{sec: discuss}).
\end{remark}

\begin{remark}[Comparison to testing with mixtures]
An approach to selecting among a collection of models $\mathfrak{M}_i$, proposed in \citet{kamary2014testing}, is to first fit a mixture $\sum_i w_i \mathfrak{M}_i$ and then consider the largest learned mixture weights $w_i$ as the basis for selecting a best model.
Though the authors focused on the setting where the true model is contained within the candidate model set, in the setting where all models are misspecified, it could occur that a non-trivial mixture is learned.
This is similar in spirit to our approach if each $\mathfrak{M}_i$ represents an instance of the same original statistical model; however, fitting a mixture model is associated with substantial statistical and computational difficulties (cf. \Cref{rem: mixture}), which our approach is able to avoid.
\end{remark}

\subsection{Consistency of the Parametric Bootsrap Null}
\label{subsec: bootstrap}

This section presents sufficient conditions under which the empirical null distribution generated by the parametric bootstrap, described in \Cref{sec: implement}, is asymptotically correct.
To state these conditions we need to be explicit about how the data are generated under the statistical model.
To this end, let $\PBayes^{\theta,u}$ and $\PPC^{\theta,u}$ respectively denote the Bayesian and \ac{PC} predictive distributions based on the dataset arising from a parametrised \emph{generator}
\begin{align}
\left[ \begin{array}{c} y_1 \\ \vdots \\ y_n \end{array} \right] = \left[  \begin{array}{c} G(\theta, x_1, u) \\ \vdots \\ G(\theta,x_n,u) \end{array} \right] ,  \label{eq: data generation}
\end{align}
where $u \sim \nu$ is a random seed drawn from an appropriate reference distribution $\nu$, and the covariates are $x_i \stackrel{\mathrm{iid}}{\sim} \rho$ where $\rho$ is a probability distribution on a measurable space $\mathcal{X}$. 

The proof of the following result is contained in \Cref{app: bootstrap proof}.
A key ingredient is a novel stability result for the \ac{PC} posterior as the dataset is varied, which we believe is the first of its kind and may be of independent interest; cf. \Cref{app: stability}.

\begin{theorem}[Asymptotic Correctness of the Bootstrap Null]
\label{thm: bootstrap}
Assume that:
\begin{enumerate}
    \item[(i)] \emph{Strongly log-concave prior:} $ - \nabla_\theta^2 \log q_0(\theta) \succeq \lambda_0 I$ for some $\lambda_0 > 0$ and all $\theta$,
    \item[(ii)] \emph{Strongly log-concave likelihood:} $- \nabla_\theta^2 \log p_\theta(y|x) \succeq \lambda I $ for some $\lambda > 0$ and all $\theta$, $x$, $y$,
    \item[(iii)] \emph{Lipschitz log-likelihood}:  The log-likelihood is uniformly Lipschitz in the $y$-argument, i.e.
    \[
    |\log p_\theta(y|x) - \log p_\theta(y'|x)| \le L_\ell \|y - y'\| ,
    \]
    for some $L_\ell \geq 0$ and all $\theta$, $x$, $y$, and $y'$.
    \item[(iv)] \emph{Bounded mean embedding of the model}:
    $\sup_{x, \, \theta} \int \kappa(y,y') \, \mathrm{d}P_\theta(y|x) \mathrm{d}P_\theta(y'|x) < \infty$
    \item[(v)] \emph{Lipschitz generator}: The generator $G$ is uniformly Lipschitz in the $\theta$-argument, i.e.
    $$
    \|G(\vartheta,x,u) - G(\theta,x,u)\| \leq L_G \|\vartheta - \theta\|
    $$
    for some $L_G \geq 0$ and all $x$, $u$, $\vartheta$, and $\theta$. 
    \item[(vi)] \emph{Covariates in a compact set}: $(\mathcal{X},d_{\mathcal{X}})$ is a compact Hausdorff metric space.
    \item[(vii)] \emph{Uniform continuity of MMD}: $\mathrm{MMD}_\kappa^2(P_\theta(\cdot | x) , P_\theta(\cdot | x')) \leq C d_{\mathcal{X}}(x,x')$ for some $C \geq 0$ and all $x$, $x'$, and $\theta$.
\end{enumerate}
Suppose that the model is well-specified with true parameter $\theta_\star$, and let $\theta_n$ be a strongly consistent estimator of $\theta_\star$, meaning that $\theta_n \xrightarrow{a.s.} \theta_\star$.
Then
    \begin{align*}
        \mathcal{D}_n(\PPC^{\theta_n,u} , \PBayes^{\theta_n,u} ) \stackrel{d}{\rightarrow} \mathcal{D}(\PPC^{\theta_\star,u} , \PBayes^{\theta_\star,u} ) 
    \end{align*}
as $n \rightarrow \infty$, where randomness is with respect to both the random seed $u \sim \nu$ and the covariates $x_i \stackrel{\mathrm{iid}}{\sim} \rho$.
\end{theorem}

\noindent That is, the approximate sampling distribution of the test statistic \eqref{eq: MMD} obtained using the parametric bootstrap is asymptotically exact, meaning that our nominal control on the Type-I error is asymptotically exact.

The assumptions of \Cref{thm: bootstrap} are somewhat strong, reflecting the fact that theoretical tools for analysing the \ac{PC} posterior are relatively under-developed.
However, for a bounded kernel $\kappa$, condition (iv) is automatically satisfied.
Further, assumptions (iii), (v), (vi) and (vii) are often satisfied for simulators that are physics-constrained, such as the seismic tomography case study in \Cref{sec: waveform}.

\section{Empirical Assessment}
\label{sec: empirical}

Preempting our application to seismic tomography, our focus in this section is on Gaussian regression models of the form
\begin{align}
p_\theta(y_i | x_i) = \frac{1}{\sqrt{ 2 \pi \det(\Sigma) } } \exp\left( - \frac{1}{2} \| \Sigma^{-\frac{1}{2}}(y_i - f_\theta(x_i)) \|^2 \right), \label{eq: Gaussian location}
\end{align}
for responses $\{y_i\}_{i=1}^n$ conditional on covariates $\{x_i\}_{i=1}^n$, with known measurement error covariance matrix $\Sigma$ and unknown regression parameters $\theta \in \mathbb{R}^d$.
Our interest is in settings where the parametric regression function $f_\theta(x)$ could be misspecified.
Proceeding with Bayesian inference would be problematic if $f_\theta(x)$ is indeed misspecified, since increasing the number of data $n$ would cause the posterior to collapse onto a single `best' parameter $\theta_\star$ and the predictions from the model will collapse also to $f_{\theta_\star}$, i.e. the predictions are simultaneously high-confidence and incorrect.

Our principal interest is in whether a comparison of $\QBayes$ and $\QPC$ based on a sufficiently large dataset enables misspecification to be detected.
\Cref{sec: sims} reports a detailed empirical investigation in a controlled setting where data are generated from a simple known regression model, while a challenging application to seismic tomography is presented in \Cref{sec: waveform}.
In both cases the statistical model takes the form \eqref{eq: Gaussian location}, and we can interpret the sufficient conditions for the convergence of \ac{VGD} in \Cref{thm: main text} this context.
The proof of the following result is contained in \Cref{app: gauss location}:

\begin{proposition}[Regularity conditions for Gaussian regression model \eqref{eq: Gaussian location}] \label{prop: Gauss location MT}
    Let $P_\theta$ be the Gaussian regression model in \eqref{eq: Gaussian location}, and let $k$ be a symmetric positive semi-definite kernel for which $f_\theta(x_i)$, $\sqrt{k(\theta,\theta)} \nabla_\theta f_\theta(x_i)$, and $k(\theta,\theta) \Delta_\theta f_\theta(x_i)$ are bounded in $\theta$ for each $x_i$ in the dataset. 
    Then condition (iv) of \Cref{thm: main text} is satisfied.
\end{proposition}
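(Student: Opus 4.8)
The plan is to verify, one at a time and for an arbitrary fixed data point $(x_i,y_i)$, the four ingredients of the regularity assumption on $P_\theta$ in \Cref{thm: main text}: positivity, boundedness and $C^3$-smoothness of $\theta \mapsto p_\theta(y_i|x_i)$, and the two supremum bounds (a) and (b). Write $f := f_\theta(x_i)$ and $c := (2\pi\det\Sigma)^{-1/2}$, so that $p_\theta(y_i|x_i) = c\exp\!\big(-\tfrac12 (y_i-f)^\top\Sigma^{-1}(y_i-f)\big)$. Positivity is immediate; boundedness holds because the quadratic form is non-negative, so $p_\theta(y_i|x_i)\le c$; and $C^3$-smoothness follows from the chain and product rules given $f_\theta(x_i)\in C^3(\mathbb{R}^d)$, a mild condition already implicit in the hypotheses, which involve $\nabla_\theta f_\theta(x_i)$ and $\Delta_\theta f_\theta(x_i)$. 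A small observation to record first and use throughout: since $f_\theta(x_i)$ is bounded in $\theta$, so is $\|y_i - f_\theta(x_i)\|$.

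For (a) I compute, writing $\nabla_\theta f_\theta(x_i)$ for the Jacobian matrix, $\nabla_\theta \log p_\theta(y_i|x_i) = (\nabla_\theta f_\theta(x_i))^\top\Sigma^{-1}(y_i - f_\theta(x_i))$. Since $\nabla_\theta p_\theta/p_\theta = \nabla_\theta\log p_\theta$, Cauchy--Schwarz gives $\sqrt{k(\theta,\theta)}\,\|\nabla_\theta p_\theta(y_i|x_i)\|/p_\theta(y_i|x_i) \le \big(\sqrt{k(\theta,\theta)}\,\|\nabla_\theta f_\theta(x_i)\|\big)\,\|\Sigma^{-1}\|\,\|y_i-f_\theta(x_i)\|$, and the right-hand side is bounded in $\theta$ by the first boundedness hypothesis, the constancy of $\|\Sigma^{-1}\|$, and the preliminary observation.

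For (b) I use the elementary identity $\Delta_\theta p_\theta/p_\theta = \Delta_\theta\log p_\theta + \|\nabla_\theta\log p_\theta\|^2$ (from $\nabla p = p\,\nabla\log p$ and $\Delta p = \nabla\cdot(p\,\nabla\log p)$). After multiplying by $k(\theta,\theta)$, the $\|\nabla_\theta\log p_\theta\|^2$ contribution is just the square of the quantity already controlled in (a). For the remaining term, a second differentiation yields $\Delta_\theta\log p_\theta(y_i|x_i) = (\Delta_\theta f_\theta(x_i))^\top\Sigma^{-1}(y_i-f_\theta(x_i)) - \operatorname{tr}\!\big((\nabla_\theta f_\theta(x_i))^\top\Sigma^{-1}\nabla_\theta f_\theta(x_i)\big)$, with $\Delta_\theta$ acting componentwise; multiplying by $k(\theta,\theta)$, I bound the first summand by Cauchy--Schwarz using the hypothesis that $k(\theta,\theta)\Delta_\theta f_\theta(x_i)$ is bounded together with the preliminary observation, and the second via $\operatorname{tr}(A^\top\Sigma^{-1}A)\le\|\Sigma^{-1}\|\,\|A\|_F^2$, which reduces it to $k(\theta,\theta)\|\nabla_\theta f_\theta(x_i)\|_F^2 = \big(\sqrt{k(\theta,\theta)}\,\|\nabla_\theta f_\theta(x_i)\|_F\big)^2$, bounded by squaring the first hypothesis. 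Assembling these bounds gives finiteness of the supremum in (b).

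The main obstacle is not a genuine difficulty but a point demanding care: because $\log p_\theta$ is quadratic in $f_\theta(x_i)$, its Laplacian generates a ``$\|\nabla_\theta f_\theta(x_i)\|^2$''-type term alongside the expected ``$\Delta_\theta f_\theta(x_i)$'' term, and one must notice that this term, as well as the $\|\nabla_\theta\log p_\theta\|^2$ contribution to $\Delta_\theta p_\theta/p_\theta$, is tamed after multiplication by $k(\theta,\theta)$ precisely by squaring the hypothesis on $\sqrt{k(\theta,\theta)}\,\nabla_\theta f_\theta(x_i)$, so that no separate assumption on $k(\theta,\theta)\|\nabla_\theta f_\theta(x_i)\|^2$ is required. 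The residual bookkeeping — the chain-rule argument for $C^3$-smoothness and the boundedness of $\|y_i - f_\theta(x_i)\|$ — is routine.
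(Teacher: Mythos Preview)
Your proof is correct and follows essentially the same approach as the paper: compute the first and second derivatives of $p_\theta$ (or equivalently $\log p_\theta$) and bound each term using the three boundedness hypotheses together with the observation that $\|y_i - f_\theta(x_i)\|$ is bounded. Your use of the identity $\Delta_\theta p_\theta/p_\theta = \Delta_\theta\log p_\theta + \|\nabla_\theta\log p_\theta\|^2$ organises the calculation for part (b) a little more cleanly than the paper's direct computation of $\Delta_\theta p_\theta$, and in fact the paper omits the detailed bound for (b) altogether, but the underlying argument is the same.
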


\noindent Although the assumptions of \Cref{prop: Gauss location MT} are too strong for applications such as linear regression, where the regression function $f_\theta(x) = \theta \cdot x$ can be unbounded, for physics-constrained inverse problems (including our seismic tomography case study in \Cref{sec: waveform}) the boundedness requirements are typically satisfied.

\subsection{Simulation Study}
\label{sec: sims}

The aim of this section is to empirically assess whether our methods can detect when the regression function $f_\theta(x)$ is misspecified.
To this end, we considered several toy regression tasks of the form \eqref{eq: Gaussian location}, varying both the size of the dataset, the dimension of the parameter vector, and the extent to which the data-generating distribution departs from the regression model.
Computation for these toy models is straightforward, so the formal test for model misspecification proposed in \Cref{sec: implement}, based on the parametric bootstrap, is practical; this is in contrast to the seismic tomography case study in \Cref{sec: waveform}.
Full details of our simulation setup are reserved for \Cref{app: sims}.

Results are presented in \Cref{fig: sim study}, where each row corresponds to a different regression model.
It is visually apparent that the spread of the posterior predictive $\PPC$ is similar to that of $\PBayes$ when the model is well-specified, and much wider than the spread of $\PBayes$ when the model is misspecified.
This difference in the misspecified setting occurs because the standard Bayesian posterior $\QBayes$ is destined to concentrate on a single `least bad' parameter $\theta_\star$ as the size of the dataset is increased, while the predictively oriented posterior $\QPC$ is able to adapt to the level of model misspecification, resulting in an irreducible uncertainty in $\QPC$ that does not vanish as the number of data is increased. 
The parameter posteriors $\QBayes$ and $\QPC$ themselves for these examples are presented in \Cref{fig: sim study 2} of \Cref{app: additional}, where we also verify the convergence of the \ac{VGD} algorithm used to obtain these results, as measured using the \ac{KGD} in \eqref{eq: KGD}.

The distribution of the test statistic $\mathcal{T}$ under the parametric bootstrap null described in \Cref{sec: implement}, alongside the actual realised value of $\mathcal{T}$, are also displayed in \Cref{fig: sim study}.
It can be seen that the realised value of $\mathcal{T}$ is far into the tail of the null distribution when data are not generated from the statistical model, meaning that the test statistic is able to detect that the statistical model is misspecified. 
Empirically, a larger sample size $n$ increases the power of the test, as expected; see \Cref{fig: diff data size} in \Cref{app: additional}. 
Conversely, the detection of model misspecification is more challenging when a large number of parameters $\theta$ are being estimated; see \Cref{fig: diff dimension} in 
\Cref{app: additional}.

\begin{figure}[t!]
    \includegraphics[width=\textwidth]{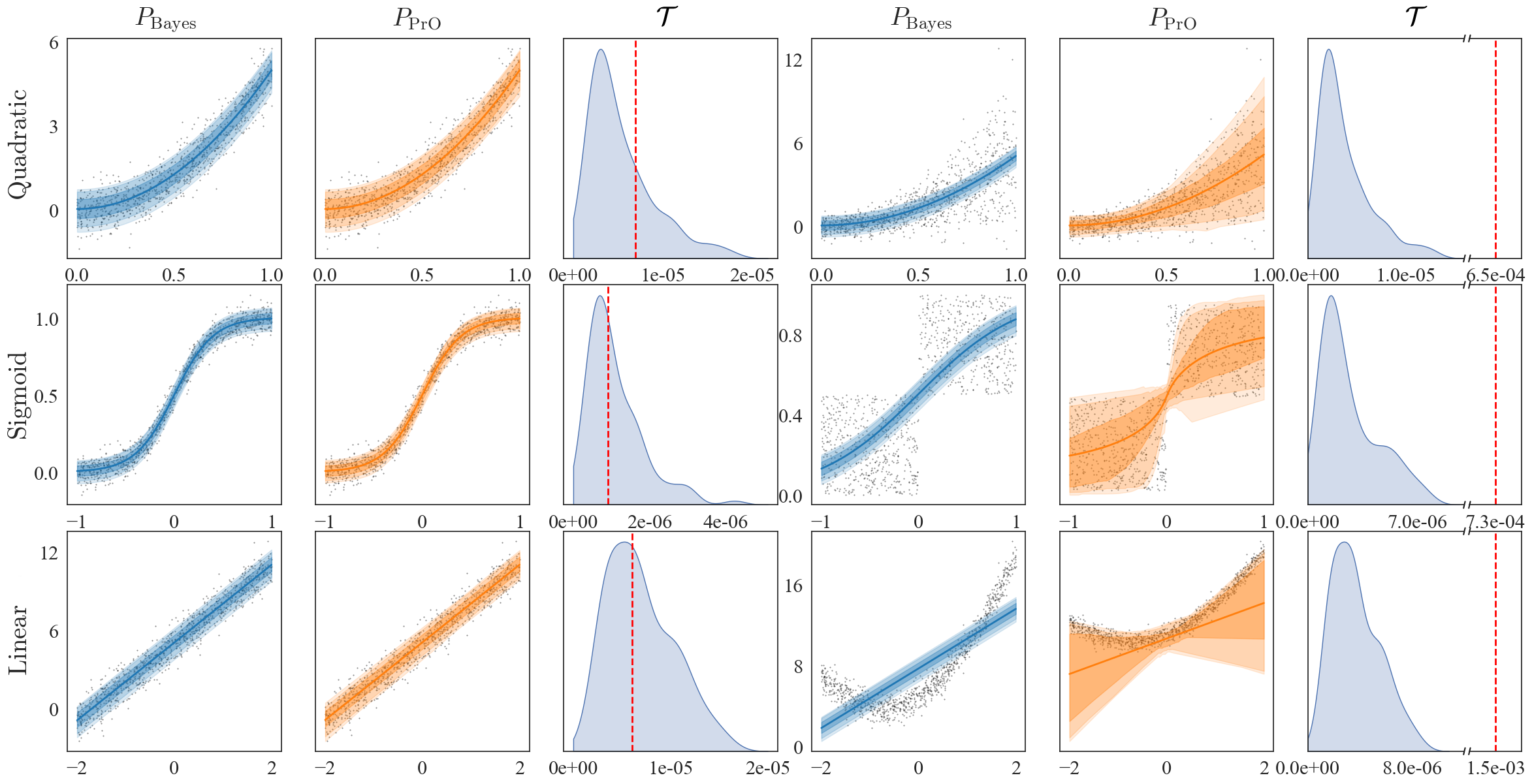}
    \caption{Simulation Study.  Each row considers a regression task in which the data are either generated from the statistical model (well-specified, left) or not generated from the statistical model (misspecified, right). The posterior predictive distributions $\PBayes(\cdot | x)$ (blue) and $\PPC(\cdot | x)$ (orange) are displayed, along with the null distribution under the hypothesis that the statistical model is well-specified, and the actual realised value of the \ac{MMD} test statistic $\mathcal{T}$ in \eqref{eq: MMD} (red dashed). 
    \alttext{A figure with 18 panels arranged into a $3 \times 6$ grid.  On each row the posterior predictive fits are displayed for both the standard and PrO posteriors, along with the distribution of the test statistic $\mathcal{T}$, when the statistical model is both well-specified and misspecified.  Each row corresponds to a different dataset and statistical model.}
    }
    \label{fig: sim study}
\end{figure}

\subsection{Bayesian Seismic Travel Time Tomography}
\label{sec: waveform}

In seismic travel time tomography, the velocity structure of a medium (e.g., the Earth's subsurface) is estimated using measured first arrival times of seismic waves propagating between source and receiver locations \citep{curtis2002probing, zhang2020seismic, zhao2022bayesian}. 
The parameter of interest is a scalar field, with $\theta(x)$ representing wave velocity\footnote{In geophysics it is traditional to refer to speed, i.e. the magnitude of the velocity, simply as \emph{velocity}.} at a spatial location $x \in \mathbb{R}^3$. 
In practice, a bounded domain $\Omega \subset \mathbb{R}^3$ is discretised into a grid and the velocity field $\theta$ is represented as a vector of values associated to each cell in the grid, i.e. the velocity is modelled as being piecewise constant.
The output of the regression model $f_\theta(x)$ represents the signal received by a seismometer at spatial location $x$, computed using a physics-constrained simulation using the velocity field $\theta$.
Typically, the model $f_\theta(x)$ is governed by the Eikonal equation $|\nabla_x f_\theta(x)| = \theta^{-1}(x)$, a high frequency approximation of the scalar wave equation, and is solved using the fast marching method \citep{rawlinson2005fast}.
The measurement error covariance matrix $\Sigma$ is assumed to be diagonal \citet{zhao2022bayesian}, with diagonal entries $\sigma_i^2$, where $\sigma_i$ is set equal to $2\%$ of the data associated to the $i^{\mathrm{th}}$ channel.

In contrast to the examples in \Cref{sec: sims}, the need to solve a partial differential equation here to evaluate the statistical model poses a computational barrier to investigating model misspecification using a hypothesis test.
As such, here we content ourselves with a qualitative exploration of whether a visual comparison of $\QBayes$ and $\QPC$ can serve as a useful diagnostic for model misspecification in this challenging context.
Of course, it is known that the regression model $f_\theta$ is to some extent misspecified relative to real world physics. 
For example, it represents a high frequency approximation of the wave physics, whereas the real case is band-limited. Additionally, travel time tomography relies only on kinematic (phase) information and ignores dynamic (amplitude) information, limiting its ability to reconstruct high resolution velocity structures of the Earth's interior. However, the key question here is whether the \emph{statistical} model is misspecified in a way that could be scientifically consequential.
To assess this, we consider a synthetic test-bed so that we have precise control of exactly how the data are misspecified.

\begin{figure}[t!]
    \centering
    \includegraphics[width=0.85\textwidth]{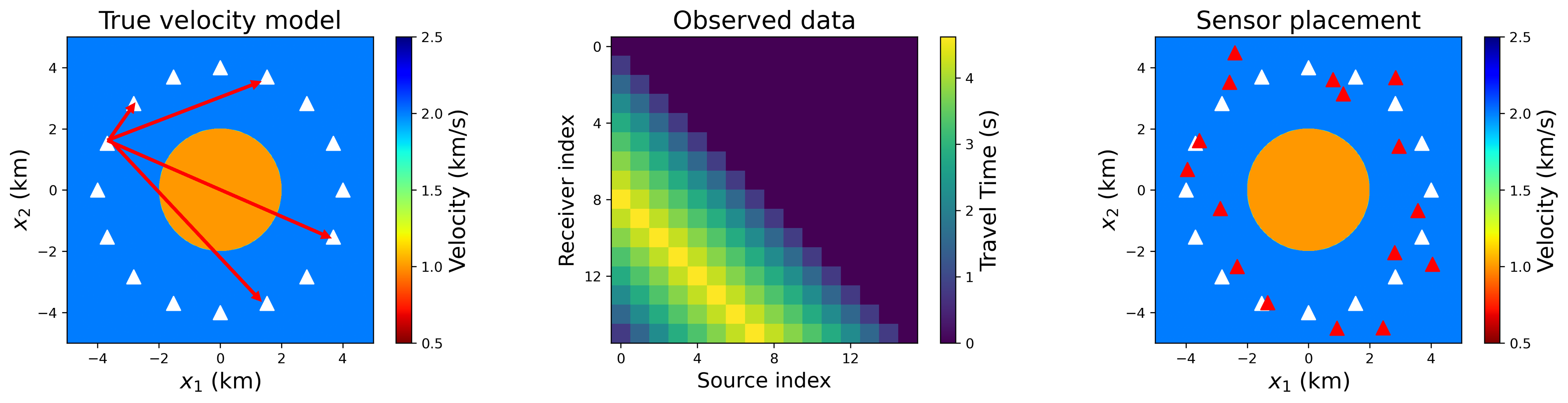}
    \caption{Seismic travel time tomography test-bed.
    Left: Data are obtained by first emitting a seismic wave from one of the $16$ sensors (white triangles) and recording the time at which the wave is detected at each of the other sensors (left panel).
    Iterating over all sensors results in $n = 256$ travel times, which are noisily observed (centre panel).
    To simulate a realistic source of model misspecification we consider a setting where the data were generated using the correct sensor placement (white triangles in the right panel) while the statistical model assumes an incorrect sensor placement (red triangles).
    \alttext{A figure consisting of 3 panels.  The first panel indicates the placement of sensors, the second panel displays the travel time dataset as a heatmap, and the third panel displays the incorrect sensor placement for the simulation regime in which the statistical model is misspecified.}
    }
    \label{fig: seismic set-up}
\end{figure}

For simplicity, our test-bed is defined for $x \in \mathbb{R}^2$ and is illustrated in the left panel of \Cref{fig: seismic set-up}.
Here the seismic velocity field $\theta$ is piecewise constant, with $\theta(x) = 1$ km/s for $\|z\| \leq 2$, and $\theta(x) = 2$ km/s for $\|z\| > 2$.
Data are obtained by first emitting a seismic wave from one of the $16$ sensors (depicted by triangles in \Cref{fig: seismic set-up}) and recording the time at which the wave is detected at each of the other sensors, as calculated using the fast marching method.
Iterating over all $16$ sensors yields $n = 256$ readings, each measured with noise governed by the covariance matrix $\Sigma$. 
Since the error model is fixed and known, the measurement error component of the statistical model is always well-specified (see the centre panel of \Cref{fig: seismic set-up}).
To simulate a realistic source of model misspecification we consider a setting where the data were generated using the correct sensor placement (white triangles in the right panel of \Cref{fig: seismic set-up}) while the statistical model assumes an incorrect sensor placement (red triangles).
This form of model misspecification is common in earthquake seismological tomography problems, in which the estimation of earthquake source locations can be inaccurate \citep{dziewonski1981determination}, affecting the subsequent seismic tomography results.

To facilitate tomographic reconstruction, the spatial domain $\Omega = [-5,5]^2$ is discretised into a $21 \times 21$ grid (the velocity field $\theta$ has dimension $d = 441$). 
The prior distribution $Q_0$ has each grid cell independent and uniformly distributed over the interval $[0.5,3]$ km/s; in practice we reparametrised $\theta$ from $[0.5,3]^{21 \times 21}$ to $\mathbb{R}^{21 \times 21}$ to avoid boundary considerations in \ac{VGD}.
Note that the data are non-informative about the region outside the convex hull of the sensors; accordingly the focus is on the  reconstruction within the interior of the convex hull.
Full details are contained in \Cref{app: seismic}.

\begin{figure}[t!]
    \centering
    \begin{subfigure}[t]{\textwidth}
    \includegraphics[width=\textwidth]{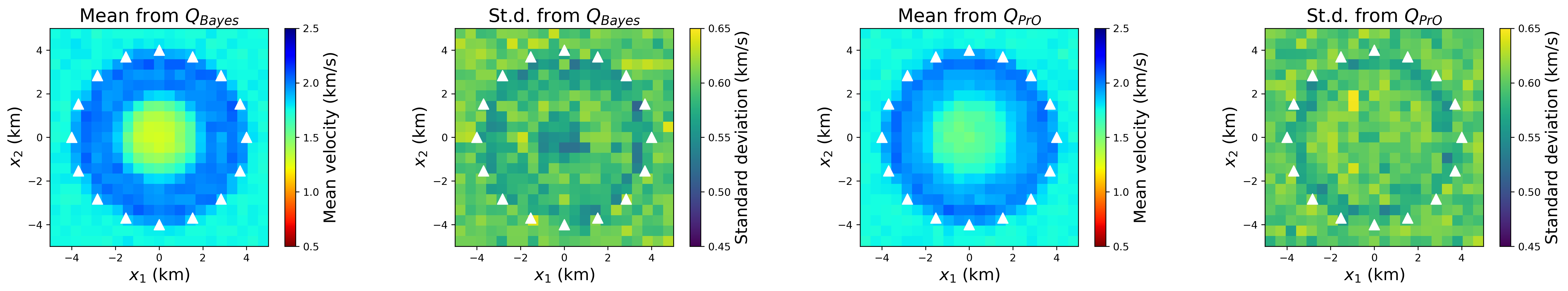}
    \caption{Well-specified sensor placement}
    \end{subfigure}
    \begin{subfigure}[t]{\textwidth}
    \includegraphics[width=\textwidth]{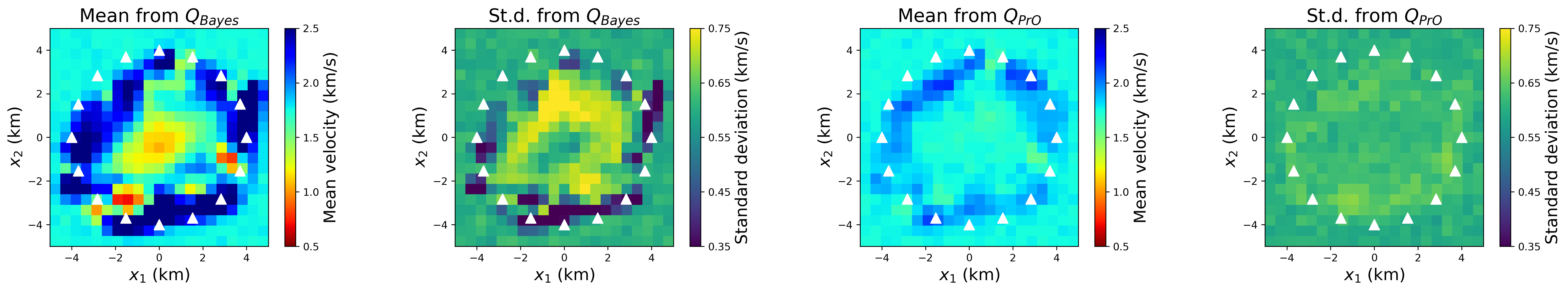}
    \caption{Misspecified sensor placement}
    \end{subfigure}
    \caption{Estimated seismic velocity $\theta$ in the setting where the sensor placement assumed in the statistical model is (a) well-specified and (b) misspecified.
    The standard Bayesian posterior $\QBayes$ (left) and the predictively oriented posterior $\QPC$ (right) are almost identical when the statistical model is well-specified, but differ substantially when the statistical model is misspecified.
    \alttext{A figure consisting of 2 rows, each row with 4 panels. On each row the parameter posterior mean and standard deviation are shown as heatmaps for both the standard Bayesian posterior and the PrO posterior.  The first row concerns a setting in which the statistical model is well-specified, while in the second row the statistical model is misspecified.  }
    }
    \label{fig: seismoc results}
\end{figure}

Results are shown in \Cref{fig: seismoc results}, where we compare pointwise means and standard deviations for $\QBayes$ and $\QPC$.
In the well-specified setting, the distributions $\QBayes$ and $\QPC$ appear almost identical; although $\QPC$ may in theory concentrate more slowly than $\QBayes$, the difference between these two `posteriors' cannot be easily distinguished, and certainly there is no evidence to suggest that these `posteriors' are not converging to the same limit.
In contrast, under model misspecification, clear differences between $\QBayes$ and $\QPC$ can be observed, both in the mean and standard deviation of the reconstructed velocity field.
Specifically, $\QBayes$ is over-confident regarding reconstruction near to the sensors, while such over-confidence is mitigated in $\QPC$.
(Curiously, $\QBayes$ also exhibits a \emph{higher} standard deviation than $\QPC$ for the central region; this indicates to us that reasoning about the impact of model misspecification on $\QBayes$ is nontrivial.)
These results are consistent with the interpretation that comparing $\QBayes$ and $\QPC$ can be an effective tool for detecting when a statistical model is misspecified.
Indeed, the computational cost of producing the diagnostic plot in \Cref{fig: seismoc results} is only a factor of $2$ larger than the cost of computing $\QBayes$ itself.

\section{Discussion}
\label{sec: discuss}

As statisticians we seek to avoid making predictions which are simultaneously highly confident and incorrect, but this scenario occurs generically in Bayesian analyses when the data are informative and the statistical model is misspecified.
To address this challenge, we combined the emerging ideas of predictively oriented inference and variational gradient descent to obtain a simple, practical and general approach to detect model misspecification in the Bayesian context.
An appealing aspect of our approach is that we do not require data splitting (e.g. as in posterior predictive checks) or the manual specification of alternative statistical models (as in comparative approaches).
The approach was successfully demonstrated both in simulation studies and in the challenging seismic travel time tomography context.

In settings where our methods detect model misspecification, an obvious next question is \emph{how to proceed with a misspecified model}?
This important question is outside the scope of the present work, but has been the subject of considerable research effort.
Potential solutions include nonparametric learning of the correct model \citep{kennedy2001bayesian,alvarez2013linear} and judicious use of an incorrect model \citep{bissiri2016general,knoblauch2022optimization}; a recent review is provided by \cite{nott2023bayesian}.
However, a compelling alternative that is perfectly aligned with our work is to use the predictively oriented `posterior' in place of the standard Bayesian posterior, as argued in \citet{lai2024predictive,shen2025prediction,mclatchie2025predictively}.

An important related issue is how to pinpoint which part(s) of the model are misspecified, in settings where misspecification is detected.
Indeed, this would help to facilitate iterative model improvement, in-line with the standard Bayesian workflow \citep[][Section 7]{gelman2020bayesian}.
The concept of a ``part'' only makes sense if additional structure is assumed on the statistical model; understanding how ``local'' misspecification gives rise to an irreducible uncertainty in the corresponding relevant parameters in $\QPC$ would be an interesting direction for future work.

\paragraph{Acknowledgments}
QL was supported by the China Scholarship Council 202408060123. 
CJO, ZS were supported by EPSRC EP/W019590/1.
CJO was supported by a Philip Leverhulme Prize PLP-2023-004.
XZ and AC thank the Edinburgh Imaging Project (\href{https://blogs.ed.ac.uk/imaging/}{EIP}) sponsors (BP and TotalEnergies) for supporting this research.
The authors are grateful to Fran\c{c}ois-Xavier Briol, Badr-Eddine Chérief-Abdellatif, David Frazier, Jeremias Knoblauch, Yann McLatchie, the Associate Editor and both Reviewers for insightful discussion during the completion of this work.

\bibliographystyle{abbrvnat}
\bibliography{bibliography}

\newpage

\appendix

\section{Proofs}

This appendix contains proofs for all theoretical results in the main text.
Preliminary results are contained in \Cref{app: prelim}, while \Cref{prop: Qmix well def} is proven in \Cref{app: well defined}, \Cref{thm: main text} is proven in \Cref{app: proof of VGD}, and Proposition \ref{prop: Gauss location MT} is proven in \Cref{app: gauss location}.

\paragraph{Additional Notation}

Let $b_Q(\theta) := (\nabla \log q_0)(\theta) - \vargrad  \mathcal{L}(Q)(\theta)$ for all $Q \in \mathcal{P}(\mathbb{R}^d)$ and all $\theta \in \mathbb{R}^d$.
This can be considered a $Q$-dependent generalisation of the \emph{Stein score} \citep{liu2016stein}, which is recovered in the case of linear $\mathcal{L}$.
For $f,g : \mathbb{R}^d \rightarrow \mathbb{R}$, write $f(x) \lesssim g(x)$ if there exists a finite constant $C$ such that $f(x) \leq C g(x)$ for all $x \in \mathbb{R}^d$.
For a suitably differentiable function $f$, let $\nabla^2 f$ denote the matrix of mixed partial derivatives $\partial_i\partial_j f$.

\subsection{Preliminary Results}
\label{app: prelim}

First we derive the variational gradient of $\LPC$:

\begin{proposition}[Explicit form of variational gradient] \label{prop: explicit variational grad}
    Let $\mathcal{L} = \LPC$.
    Let $\theta \mapsto p_\theta(y_i | x_i)$ be positive, bounded and differentiable for each $(x_i,y_i)$ in the dataset.
    Then 
    \begin{align}
        \vargrad \mathcal{L}(Q)(\theta) & = - \sum_{i=1}^n \frac{\nabla_\theta p_\theta(y_i | x_i)}{p_Q(y_i | x_i)} .\label{eq: var grad Lmix}
    \end{align}
\end{proposition}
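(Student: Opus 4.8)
The plan is to compute the first variation $\LPC'(Q)$ directly from its definition and then differentiate in $\theta$. Recall that $\LPC(Q) = -\sum_{i=1}^n \log p_Q(y_i \mid x_i)$ with $p_Q(y_i \mid x_i) = \int p_\theta(y_i \mid x_i) \, \mathrm{d}Q(\theta)$, so $p_Q$ is a \emph{linear} functional of $Q$. First I would fix a perturbation $\chi = Q' - Q$ with $Q' \in \mathcal{P}(\mathbb{R}^d)$ and form $Q_\epsilon := Q + \epsilon\chi$, noting $p_{Q_\epsilon}(y_i \mid x_i) = p_Q(y_i \mid x_i) + \epsilon \int p_\theta(y_i \mid x_i)\,\mathrm{d}\chi(\theta)$. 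Then
\begin{align*}
\frac{1}{\epsilon}\left\{ \LPC(Q_\epsilon) - \LPC(Q) \right\} = -\sum_{i=1}^n \frac{1}{\epsilon}\left\{ \log p_{Q_\epsilon}(y_i \mid x_i) - \log p_Q(y_i \mid x_i) \right\},
\end{align*}
and a first-order Taylor expansion of $\log$ (justified because $p_\theta$ is bounded, so the increment $\int p_\theta \,\mathrm{d}\chi$ is finite, and $p_Q(y_i\mid x_i) > 0$ since $p_\theta > 0$) gives the limit $-\sum_{i=1}^n p_Q(y_i\mid x_i)^{-1} \int p_\theta(y_i\mid x_i)\,\mathrm{d}\chi(\theta)$. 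Writing this as $\int \left[ -\sum_{i=1}^n p_\theta(y_i\mid x_i)/p_Q(y_i\mid x_i) \right] \mathrm{d}\chi(\theta)$ identifies the first variation as $\LPC'(Q)(\theta) = -\sum_{i=1}^n p_\theta(y_i\mid x_i)/p_Q(y_i\mid x_i)$, up to the irrelevant additive constant.

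The second step is to apply $\vargrad \LPC(Q)(\theta) = \nabla_\theta \LPC'(Q)(\theta)$, which is the definition given in the excerpt. Since $p_Q(y_i \mid x_i)$ does not depend on $\theta$, differentiating termwise yields $\vargrad \LPC(Q)(\theta) = -\sum_{i=1}^n \nabla_\theta p_\theta(y_i\mid x_i) / p_Q(y_i\mid x_i)$, which is exactly \eqref{eq: var grad Lmix}. Differentiability of $\theta \mapsto p_\theta(y_i\mid x_i)$ is assumed in the statement, so this step is immediate; one may also remark in passing that dividing and multiplying by $p_\theta(y_i\mid x_i)$ recovers the weighted-score form $-\sum_i w_\theta^Q(y_i\mid x_i)\nabla_\theta \log p_\theta(y_i\mid x_i)$ quoted in \eqref{eq: vargrad LPC}.

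The only subtlety — and the step I would treat with the most care — is the interchange of limit and integral (equivalently, the validity of the termwise Taylor expansion) when passing $\epsilon \to 0$ inside $\int \cdot \, \mathrm{d}\chi$. Here one uses that $\chi = Q' - Q$ is a finite signed measure with $|\chi|(\mathbb{R}^d) \le 2$ and that $\theta \mapsto p_\theta(y_i\mid x_i)$ is bounded, so the integrands in the difference quotient are uniformly bounded; dominated convergence (applied separately to the $Q'$ and $Q$ parts of $\chi$) then justifies passing the limit through. Boundedness of $p_\theta$ also guarantees $p_Q(y_i\mid x_i) \in (0, \infty)$, so no division-by-zero or blow-up of $\log$ occurs. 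Everything else is routine, and the result follows.
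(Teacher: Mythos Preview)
Your proposal is correct and follows the same approach as the paper: identify the first variation $\LPC'(Q)(\theta) = -\sum_{i=1}^n p_\theta(y_i\mid x_i)/p_Q(y_i\mid x_i)$ and then apply $\nabla_\theta$. In fact your argument is more detailed than the paper's, which simply states the first variation and notes it is well-defined (positive denominator, bounded numerator hence $\chi$-integrable); your explicit Taylor expansion and dominated-convergence justification fill in steps the paper leaves implicit.
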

\begin{proof}
    The first variation is
    \begin{align}
        \mathcal{L}'(Q)(\theta) & = - \sum_{i=1}^n \frac{p_\theta(y_i | x_i)}{p_Q(y_i | x_i)} \label{eq: first var lmix} 
    \end{align}
    which is well-defined since, under our assumptions, $p_Q(y_i | x_i) = \int p_\theta(y_i | x_i) \; \mathrm{d}Q(\theta)$ is strictly positive (so the denominator in \eqref{eq: first var lmix} is non-zero) and bounded (so \eqref{eq: first var lmix} is integrable with respect to all perturbations $\chi \in \mathcal{P}(\mathbb{R}^d)$; cf. the definition of first variation in \Cref{sec: var grads}) for all $Q \in \mathcal{P}(\mathbb{R}^d)$ and all $(x_i,y_i)$ in the dataset.
    The variational gradient is thus \eqref{eq: var grad Lmix}, where the derivatives were assumed to be well-defined.
\end{proof}

The following specialises Proposition 1 in \citealp{chazal2025computable}, which deals with general matrix-values kernels $K(\theta,\vartheta)$ to the case of a scalar kernel, i.e. $K(\theta,\vartheta) = k(\theta,\vartheta) I_{d \times d}$, and presents an explicit formula for the \ac{KGD}.

\begin{proposition}[Computable form of \ac{KGD}; special case of Proposition 1 in \citealp{chazal2025computable}]
\label{lem: computable}
    Let $Q_0$ have a density $q_0 > 0$ on $\mathbb{R}^d$.
    Let $b_Q$ be well-defined.
    Let $k$ be a symmetric positive semi-definite kernel for which $\nabla_1 k$ and $\nabla_1 \cdot \nabla_2 k$ are well-defined. 
    Suppose $\mathcal{T}_Q \mathcal{H}_k^d \subset \mathcal{L}^1(Q)$.
    Then 
    \begin{equation*}
	\mathrm{KGD}_k(Q) = \left( \iint k_Q(\theta,\vartheta) \; \mathrm{d} Q(\theta) \mathrm{d} Q(\vartheta) \right)^{1/2}  
    \end{equation*}
    where 
    \begin{align*}
        k_Q(\theta,\vartheta) := \nabla_1 \cdot \nabla_2 k(\theta,\vartheta) + \nabla_1 k(\theta,\vartheta) \cdot b_{Q}(\vartheta) + \nabla_2 k(\theta,\vartheta) \cdot b_{Q}(\theta) + k(\theta,\vartheta) b_{Q}(\theta) \cdot b_{Q}(\vartheta)
    \end{align*} 
    is a $Q$-dependent kernel. 
\end{proposition}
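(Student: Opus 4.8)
The plan is to derive the computable form of the \ac{KGD} by solving the variational problem in \eqref{eq: KGD} explicitly, exploiting the reproducing property of $\mathcal{H}_k$. First I would rewrite the linear functional $v \mapsto \int \mathcal{T}_Q v \; \mathrm{d}Q$ appearing in \eqref{eq: KGD} in the form $\langle v, g_Q \rangle_{\mathcal{H}_k^d}$ for a suitable Riesz representer $g_Q \in \mathcal{H}_k^d$. Concretely, using the definition $\mathcal{T}_Q v(\theta) = b_Q(\theta) \cdot v(\theta) + (\nabla \cdot v)(\theta)$ together with the reproducing identities $v_i(\theta) = \langle v_i, k(\theta,\cdot) \rangle_{\mathcal{H}_k}$ and $\partial_i v_i(\theta) = \langle v_i, \partial_i k(\theta,\cdot) \rangle_{\mathcal{H}_k}$ (valid since $k$ is $C^1$), one obtains
\begin{align*}
\int \mathcal{T}_Q v(\theta) \; \mathrm{d}Q(\theta) = \sum_{i=1}^d \left\langle v_i, \int \left[ b_Q^{(i)}(\theta) k(\theta,\cdot) + \partial_i k(\theta,\cdot) \right] \mathrm{d}Q(\theta) \right\rangle_{\mathcal{H}_k},
\end{align*}
so the $i$th component of $g_Q$ is $g_Q^{(i)}(\cdot) = \int [ b_Q^{(i)}(\theta) k(\theta,\cdot) + \partial_i k(\theta,\cdot) ] \; \mathrm{d}Q(\theta)$; this is exactly the (unnormalised) steepest-descent field in \eqref{eq: steepest}. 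The supremum of $|\langle v, g_Q \rangle|$ over the unit ball $\mathcal{B}_k^d$ is then $\|g_Q\|_{\mathcal{H}_k^d}$ by Cauchy--Schwarz, attained at $v = g_Q / \|g_Q\|_{\mathcal{H}_k^d}$.

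Next I would compute $\|g_Q\|_{\mathcal{H}_k^d}^2 = \sum_i \langle g_Q^{(i)}, g_Q^{(i)} \rangle_{\mathcal{H}_k}$ by expanding the double integral and using the reproducing property once more to evaluate inner products of the four types of terms: $\langle k(\theta,\cdot), k(\vartheta,\cdot)\rangle_{\mathcal{H}_k} = k(\theta,\vartheta)$, $\langle k(\theta,\cdot), \partial_i k(\vartheta,\cdot)\rangle_{\mathcal{H}_k} = \partial_{2,i} k(\theta,\vartheta)$, $\langle \partial_i k(\theta,\cdot), k(\vartheta,\cdot)\rangle_{\mathcal{H}_k} = \partial_{1,i} k(\theta,\vartheta)$, and $\langle \partial_i k(\theta,\cdot), \partial_i k(\vartheta,\cdot)\rangle_{\mathcal{H}_k} = \partial_{1,i}\partial_{2,i} k(\theta,\vartheta)$. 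Summing over $i$ and collecting terms yields precisely
\begin{align*}
\|g_Q\|_{\mathcal{H}_k^d}^2 = \iint \left[ \nabla_1 \cdot \nabla_2 k(\theta,\vartheta) + \nabla_1 k(\theta,\vartheta) \cdot b_Q(\vartheta) + \nabla_2 k(\theta,\vartheta) \cdot b_Q(\theta) + k(\theta,\vartheta) b_Q(\theta) \cdot b_Q(\vartheta) \right] \mathrm{d}Q(\theta)\,\mathrm{d}Q(\vartheta),
\end{align*}
which is $\iint k_Q(\theta,\vartheta) \; \mathrm{d}Q(\theta)\mathrm{d}Q(\vartheta)$, giving the claimed identity after taking square roots. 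Positive semi-definiteness of $k_Q$ follows immediately from its representation as a Gram expression, $k_Q(\theta,\vartheta) = \langle \phi_Q(\theta), \phi_Q(\vartheta)\rangle_{\mathcal{H}_k^d}$ where $\phi_Q(\theta) := (b_Q^{(i)}(\theta) k(\theta,\cdot) + \partial_i k(\theta,\cdot))_{i=1}^d$.

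The technical points requiring care — and the main obstacle — concern \emph{integrability and interchange of limits}. One must justify that (i) the Riesz representer $g_Q$ actually lies in $\mathcal{H}_k^d$, i.e.\ that the Bochner-type integral $\int [ b_Q^{(i)}(\theta) k(\theta,\cdot) + \partial_i k(\theta,\cdot) ] \; \mathrm{d}Q(\theta)$ converges in $\mathcal{H}_k$; (ii) the pushing of the inner product through the integral is valid; and (iii) the restriction in \eqref{eq: KGD} to those $v$ with $(\mathcal{T}_Q v)_{-} \in \mathcal{L}^1(Q)$ does not shrink the supremum below $\|g_Q\|_{\mathcal{H}_k^d}$. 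For (iii) the hypothesis $\mathcal{T}_Q \mathcal{H}_k^d \subset \mathcal{L}^1(Q)$ is exactly what is needed: under it, every $v \in \mathcal{B}_k^d$ satisfies $\mathcal{T}_Q v \in \mathcal{L}^1(Q)$ (in particular $(\mathcal{T}_Q v)_- \in \mathcal{L}^1(Q)$), so the constrained supremum equals the unconstrained one over $\mathcal{B}_k^d$, and the optimiser $v = g_Q/\|g_Q\|$ is admissible. For (i)--(ii), the well-definedness of $b_Q$ plus the stated differentiability of $k$ let one dominate the $\mathcal{H}_k$-norms of the integrand pointwise by $\sqrt{k(\theta,\theta)}|b_Q^{(i)}(\theta)| + \sqrt{\partial_{1,i}\partial_{2,i}k(\theta,\theta)}$ and invoke the hypothesis that these are $Q$-integrable (which is implied by $\mathcal{T}_Q\mathcal{H}_k^d \subset \mathcal{L}^1(Q)$ applied to the coordinate fields $v = e_i \, k(\theta_0,\cdot)$), after which Fubini and the reproducing property apply. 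Since this proposition is labelled a special case of Proposition 1 of \citet{chazal2025computable}, I would in fact just specialise their matrix-kernel argument to $K = k I_{d\times d}$ and verify that the four scalar kernel terms above are what their formula reduces to, rather than reproving the functional-analytic machinery from scratch.
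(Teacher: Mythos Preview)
Your proposal is correct, and in fact the paper does not supply its own proof of this proposition at all: it is simply stated as the scalar-kernel specialisation of Proposition~1 in \citet{chazal2025computable}, with no argument given. Your explicit RKHS derivation (identifying the Riesz representer $g_Q$, computing $\|g_Q\|_{\mathcal{H}_k^d}^2$ via the reproducing identities, and using $\mathcal{T}_Q \mathcal{H}_k^d \subset \mathcal{L}^1(Q)$ to remove the integrability constraint) is the standard proof and is sound; your closing remark that one could instead just set $K = k I_{d\times d}$ in the matrix-valued result is precisely what the paper does.
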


\noindent In particular, for $Q_n = \frac{1}{n} \sum_{i=1}^n \delta_{\theta_i}$ an empirical distribution,
\begin{align}
    \mathrm{KGD}_k^2(Q) = \frac{1}{n^2} \sum_{i=1}^n \sum_{i=1}^n \left\{ \begin{array}{l} \nabla_1 \cdot \nabla_2 k(\theta_i,\theta_j) + \nabla_1 k(\theta_i,\theta_j) \cdot b_{Q}(\theta_j) \\ \qquad + \nabla_2 k(\theta_i,\theta_j) \cdot b_{Q}(\theta_i) + k(\theta_i,\theta_i) b_{Q}(\theta_i) \cdot b_{Q}(\theta_j) \end{array} \right\}  \label{eq: KGD explicit}
\end{align}
which allows for \ac{KGD} to be explicitly computed.

\subsection{Proof of \Cref{prop: Qmix well def}}
\label{app: well defined}

\begin{proof}[Proof of \Cref{prop: Qmix well def}]
    Introduce the shorthand 
    $$
    \mathcal{J}(Q) = \mathcal{L}(Q) + \KLD(Q || Q_0) , \qquad \mathcal{L}(Q) = \LPC(Q) = - \sum_{i=1}^n \log p_Q(y_i | x_i) .
    $$
    Under our assumptions $\mathcal{L}$ is weakly continuous, since if $Q_n \rightarrow Q$ weakly then
    \begin{align*}
        \int p_\theta(y_i | x_i) \; \mathrm{d}Q_n(\theta) & \rightarrow \int p_\theta(y_i | x_i) \; \mathrm{d}Q(\theta)
    \end{align*}
    since the integrand is bounded.
    Thus
    \begin{align*}
        \mathcal{L}(Q_n) & = - \sum_{i=1}^n \log \int p_\theta(y_i | x_i) \; \mathrm{d}Q_n(\theta) 
        \rightarrow - \sum_{i=1}^n \log \int p_\theta(y_i | x_i) \; \mathrm{d}Q(\theta) = \mathcal{L}(Q) ,
    \end{align*}
    as claimed.
    Further note that $\mathcal{L}$ is convex, since for $Q , Q' \in \mathcal{P}(\mathbb{R}^d)$ and $t \in (0,1)$,
    \begin{align*}
        \mathcal{L}( tQ + (1-t) Q') & = - \sum_{i=1}^n \log \left[ t \int p_\theta(y_i | x_i) \; \mathrm{d}Q(\theta) + (1-t) \int p_\theta(y_i | x_i) \; \mathrm{d}Q'(\theta)  \right] \\
        & \leq - t \sum_{i=1}^n \log \int p_\theta(y_i | x_i) \; \mathrm{d}Q(\theta) - (1-t) \sum_{i=1}^n \log \int p_\theta(y_i | x_i) \; \mathrm{d}Q'(\theta) \\
        & = t \mathcal{L}(Q) + (1-t) \mathcal{L}(Q')
    \end{align*}
    by convexity of $z \mapsto - \log(z)$.
    Since $\mathrm{KLD}(\cdot || Q_0)$ is weakly lower semi-continuous and strictly convex, it follows that $\mathcal{J}$ is as well.
    The rest of the proof follows a standard argument \citep[e.g. Proposition 5 in][]{hu2021mean}.
    To this end, note that
    \begin{align*}
        \mathcal{S} := \left\{ Q \in \mathcal{P}(\mathbb{R}^d) : \KLD(Q || Q_0) \leq \mathcal{J}(Q_0) - \inf_{Q' \in \mathcal{P}(\mathbb{R}^d)} \mathcal{L}(Q') \right\} .
    \end{align*}
    is a (non-empty, since $Q_0 \in \mathcal{S}$) sub-level set of the \ac{KLD} and is therefore weakly compact \citep[][Lemma 1.4.3]{dupuis2011weak}.
    Since $\mathcal{J}$ is weakly lower semi-continuous, the minimum of $\mathcal{J}$ on $\mathcal{S}$ is attained.
    Since $\mathcal{J}(Q) \geq \mathcal{J}(Q_0)$ for all $Q \notin \mathcal{S}$, the minimum of $\mathcal{J}$ on $\mathcal{S}$ coincides with the global minimum of $\mathcal{J}$.
    Since $\mathcal{J}$ is strictly convex, the minimum is unique.
    The final claim is the content of \Cref{prop: alpha moment}.
\end{proof}

\begin{proposition}[Moments for $\QPC$] \label{prop: alpha moment}
    Assume that $Q_0 \in \mathcal{P}_\alpha(\mathbb{R}^d)$ admits a positive and bounded density $q_0$ on $\mathbb{R}^d$.
    Let $\theta \mapsto p_\theta(y_i | x_i)$ be bounded for each $(x_i,y_i)$ in the dataset.
    Then $\QPC \in \mathcal{P}_\alpha(\mathbb{R}^d)$.
\end{proposition}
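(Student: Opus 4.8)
The plan is to exploit the first-order optimality condition for $\QPC$ together with the boundedness of the likelihood. Since $\QPC$ minimises the strictly convex functional $\mathcal{J}(Q) = \LPC(Q) + \KLD(Q \Vert Q_0)$ over $\mathcal{P}(\mathbb{R}^d)$, the Euler--Lagrange condition tells us that $\QPC$ has a density $q_\star$ proportional to $q_0(\theta) \exp(-\LPC'(\QPC)(\theta))$, where $\LPC'(\QPC)(\theta) = -\sum_{i=1}^n p_\theta(y_i|x_i)/p_{\QPC}(y_i|x_i)$ is the first variation computed in \Cref{prop: explicit variational grad} (equation \eqref{eq: first var lmix}). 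The key observation is that this first variation is \emph{bounded}: under the hypothesis that each $\theta \mapsto p_\theta(y_i|x_i)$ is bounded (say by $M_i$), and since $p_{\QPC}(y_i|x_i) = \int p_\theta(y_i|x_i)\,\mathrm{d}\QPC(\theta) > 0$ is a fixed positive constant (not depending on $\theta$), we have $0 \leq -\LPC'(\QPC)(\theta) \leq \sum_{i=1}^n M_i / p_{\QPC}(y_i|x_i) =: B < \infty$.

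The steps I would carry out are as follows. First, justify the Euler--Lagrange characterisation of $\QPC$ as having density $q_\star \propto q_0 \exp(-\LPC'(\QPC))$; this is standard for entropy-regularised convex functionals and can be cited from the same references used in the proof of \Cref{prop: Qmix well def} (e.g.\ the argument in \citealp{hu2021mean}), noting that $\QPC \ll Q_0$ necessarily since otherwise $\KLD(\QPC\Vert Q_0) = \infty$. Second, record the two-sided bound $1 \leq \exp(-\LPC'(\QPC)(\theta)) \leq e^B$ uniformly in $\theta$, which follows immediately from $0 \leq -\LPC'(\QPC)(\theta) \leq B$. Third, write $\alpha$th moment of $\QPC$ as
\begin{align*}
\int \|\theta\|^\alpha \; \mathrm{d}\QPC(\theta) = \frac{\int \|\theta\|^\alpha q_0(\theta) \exp(-\LPC'(\QPC)(\theta)) \; \mathrm{d}\theta}{\int q_0(\theta) \exp(-\LPC'(\QPC)(\theta)) \; \mathrm{d}\theta} \leq \frac{e^B \int \|\theta\|^\alpha q_0(\theta) \; \mathrm{d}\theta}{\int q_0(\theta) \; \mathrm{d}\theta} = e^B \int \|\theta\|^\alpha \; \mathrm{d}Q_0(\theta),
\end{align*}
where the denominator equals $1$ and the numerator is bounded using the upper bound on the exponential factor together with the hypothesis $Q_0 \in \mathcal{P}_\alpha(\mathbb{R}^d)$. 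This shows $\int \|\theta\|^\alpha \, \mathrm{d}\QPC(\theta) < \infty$, i.e.\ $\QPC \in \mathcal{P}_\alpha(\mathbb{R}^d)$.

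\textbf{Main obstacle.} The only genuinely delicate point is rigorously establishing the Euler--Lagrange / Gibbs-form representation of $\QPC$ — in particular, verifying that the first variation exists and that the optimality condition holds in the strong pointwise form $q_\star \propto q_0 \exp(-\LPC'(\QPC))$ rather than merely in a variational-inequality sense, and ensuring the normalising constant $\int q_0 \exp(-\LPC'(\QPC)) \, \mathrm{d}\theta$ is finite (which here is automatic from the upper bound $e^B$ and $\int q_0 = 1$). Everything downstream is a one-line comparison estimate. If one prefers to avoid invoking the explicit Gibbs form, an alternative is a perturbation argument: compare $\QPC$ against the tilted measure obtained by truncating its tails and show via the variational inequality $\mathcal{J}(\QPC) \leq \mathcal{J}(Q)$ that heavy tails are penalised by the $\KLD$ term while only boundedly rewarded by $\LPC$ — but the Gibbs-form route is cleaner and matches the structure already used in \Cref{app: well defined}.
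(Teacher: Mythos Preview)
Your proposal is correct and follows essentially the same route as the paper: obtain the Gibbs form $q_\star \propto q_0 \exp(-\LPC'(\QPC))$ from the first-order optimality condition (the paper cites \citet[][Corollary 2]{chazal2025computable} rather than \citet{hu2021mean}), observe that $\LPC'(\QPC)$ is bounded because each $p_\theta(y_i|x_i)$ is bounded and each $p_{\QPC}(y_i|x_i)$ is a positive constant, and conclude that $\QPC$ inherits the $\alpha$th moment from $Q_0$. Your write-up is in fact more explicit than the paper's, which simply states ``the conclusion therefore follows'' after noting boundedness of the first variation.
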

\begin{proof}
    Following the same argument used to prove \Cref{prop: explicit variational grad}, the first variation $\LPC'$ is well-defined.
    Since $\QPC$ is the minimiser of $\mathcal{J}$, following \citet[][Corollary 2]{chazal2025computable} it is also a solution of the stationary point equation 
    \begin{align*}
        \text{cst} = \LPC'(Q) + 1 + \log \frac{\mathrm{d}Q}{\mathrm{d}Q_0}
    \end{align*}
    which implies $\QPC$ has a density $q_{\mathrm{PrO}}(\theta) \propto \exp(-\LPC'(\QPC)(\theta)) q_0(\theta)$ on $\mathbb{R}^d$.
    From \eqref{eq: first var lmix}  and our assumption, $\LPC'(\QPC)$ is bounded.
    The conclusion therefore follows from the assumption $Q_0 \in \mathcal{P}_\alpha(\mathbb{R}^d)$.
\end{proof}

\subsection{Proof of \Cref{thm: main text}}
\label{app: proof of VGD}

The main idea behind the proof of \Cref{thm: main text} is to undertake a refinement of the analysis of \ac{VGD} in \citet{chazal2025computable}; we do this in \Cref{app: refined}.
Then we verify that our refined regularity conditions hold in \Cref{app: ver reg cond}, enabling the proof of \Cref{thm: main text} to be presented in \Cref{app: proof main}.

\subsubsection{Refined Analysis of \ac{VGD}}
\label{app: refined}

The following result relaxes the conditions of Proposition 3 in \citet{chazal2025computable}, to obtain a result that is applicable in our context; further details can be found in \Cref{rem: relaxing}.
Note that Proposition 3 in \citet{chazal2025computable} is in turn a generalisation (to \ac{VGD}) of the analysis of \ac{SVGD} presented in Theorem 1 of \citet{banerjee2025improved}.

For a sufficiently regular $\mathcal{F} : \mathcal{P}(\mathbb{R}^d) \rightarrow \mathbb{R}$, let $\vargrad  \cdot \vargrad  \mathcal{F}(Q)$ denote the function $(x,y) \mapsto \sum_i \partial_i \mathcal{G}_{x,i}'(Q)(y)$ where $\mathcal{G}_x(Q) := \vargrad \mathcal{F}(Q)(x)$.

\begin{proposition}[Refined analysis of \ac{VGD}]
\label{thm: vgd converge}
Assume that:
\begin{enumerate}
    \item[(i)] \emph{Integrability:} $\exp(-\mathcal{L}'(Q)) \in \mathcal{L}^1(Q_0)$ for all $Q \in \mathcal{P}(\mathbb{R}^d)$ \label{assum: rho well def}
    \item[(ii)] \emph{Loss:} \label{asm: asm on loss}
    the map 
    $(\theta_1, \dots, \theta_N) \mapsto \vargrad  \mathcal{L}(Q_N)(\theta_j)$ is $C^2(\mathbb{R}^{d \times N})$ for each $j \in \{1,\dots,N\}$, with
    \begin{enumerate}
        \item[a.] $\sup_{Q \in \mathcal{P}(\mathbb{R}^d)} | \int k(\theta,\theta) \nabla \cdot \vargrad  \mathcal{L}(Q)(\theta) \; \mathrm{d}Q(\theta) | < \infty$, 
        \item[b.] $\sup_{Q \in \mathcal{P}(\mathbb{R}^d)} | \iint k(\theta,\vartheta) \vargrad  \cdot \vargrad  \mathcal{L}(Q)(\theta)(\vartheta) \; \mathrm{d}Q(\theta) \mathrm{d}Q(\vartheta) | < \infty$ \label{asm: wass hess}
    \end{enumerate}
    \item[(iii)] \emph{Regularisation:} %
    $\log q_0 \in C^3(\mathbb{R}^d)$ with $\sup_{\theta} k(\theta,\theta) | \Delta \log q_0(\theta) | < \infty$.
    \item[(iv)] \emph{Initialisation:} %
    $\mu_0$ has bounded support, and has a density that is $C^2(\mathbb{R}^d)$.
    \item[(v)] \emph{Kernel:} $k$ is $C^3(\mathbb{R}^d)$ in each argument with $\sup_{\theta } | \Delta_1 k (\theta,\theta) | < \infty$.   \label{asm: kernel}
    \item[(vi)] \emph{Growth:} the maps 
    $(\theta_1, \dots, \theta_N) \mapsto k(\theta_j,\theta_j) \| \nabla \log q_0(\theta_j) \|$, $k(\theta_j,\theta_j) \| \vargrad \mathcal{L}(Q_N)(\theta_j) \|$ and $\| \nabla_1 k(\theta_j,\theta_i) \|$ have at most linear growth for each $\{i,j\} \in \{1 , \dots , N\}$. \label{asm: growth}
\end{enumerate}
Then the dynamics defined in \eqref{eq: gen svgd odes} satisfies
    \begin{align*}
        \frac{1}{T} \int_0^T \mathbb{E}[ \mathrm{KGD}_k^2(Q_N^t) ] \; \mathrm{d}t \leq \frac{ \mathrm{KLD}(\mu_0|| \rho_{\mu_0}) }{T} + \frac{C_k}{N}
    \end{align*}
    for some finite constant $C_k$, where $\rho_{\mu_0}$ denotes the distribution with density proportional to $q_0(\theta) \exp( - \mathcal{L}'(\mu_0)(\theta))$. 
\end{proposition}
\begin{proof}
The proof is organised into four steps:

\smallskip

\noindent \textit{Step 1: Existence of a joint density with bounded support.}
Introduce the shorthand $\bm{\theta} := (\theta_1 , \dots , \theta_N) \in \mathbb{R}^{d \times N}$ and 
\begin{align*}
\Phi_{\bm{\theta}}(\theta_i,\theta_j) := k(\theta_i , \theta_j) \underbrace{ (\nabla \log q_0 - \vargrad  \mathcal{L}(Q_N))(\theta_j) }_{ =: b_{Q_n}(\theta_j) } + \nabla_1 k(\theta_j , \theta_i) , \qquad Q_N := \frac{1}{N} \sum_{j=1}^N \delta_{\theta_j}  ,
\end{align*}
where for convenience we have suppressed the $t$-dependence, i.e. $\theta_i \equiv \theta_i(t)$ and $Q_N \equiv Q_N(\bm{\theta})$.
Under our assumptions, $\bm{\theta} \mapsto \Phi_{\bm{\theta}}(\theta_i,\theta_j)$ is $C^2(\mathbb{R}^{d \times N})$.
Further, from (vi), $\Phi_{\bm{\theta}}(\theta_i,\theta_j)$ has at most linear growth as a function of $\bm{\theta}$; i.e. $|\Phi_{\bm{\theta}}(\theta_i , \theta_j)| \lesssim 1 + \|\bm{\theta}\|$.

Since $\bm{\theta} \mapsto \Phi_{\bm{\theta}}(\theta_i,\theta_j)$ is $C^2(\mathbb{R}^{d \times N})$, from \citet[][Chapter 5, Cor. 4.1]{hartman2002ordinary} there exists a joint density $p_N(t,\cdot)$ for $\bm{\theta}(t)$ for all $t \in [0,\infty)$ and, following an analogous argument to to Lemma 1 in \citet{banerjee2025improved}, $(t,\bm{\theta}) \mapsto p_N(t,\bm{\theta})$ is $C^2([0,\infty) \times \mathbb{R}^{d \times N})$.
This mapping $p_N(t,\cdot)$ is a solution of the $N$-body Liouville equation 
\begin{align}
    \partial_t p_N(t,\bm{\theta}) + \frac{1}{N} \sum_{i=1}^N \sum_{j=1}^N \nabla_{\theta_i} \cdot (p_N(t,\bm{\theta}) \Phi_{\bm{\theta}}(\theta_i,\theta_j)) = 0 , \label{eq: Liouville}
\end{align}
see \citet[][Chapter 8]{ambrosio2008gradient}.
Further, since $p_N(0,\cdot) = \mu_0(\cdot)$ has bounded support and the drift $\bm{\theta} \mapsto \Phi_{\bm{\theta}}(\theta_i,\theta_j)$ has at most linear growth, each $p_N(t,\cdot)$ also has bounded support.

\smallskip

\noindent \textit{Step 2:  Descent on the \ac{KLD}.}
From (i), the distribution $\rho_Q$ with density proportional to $q_0(\theta) \exp( - \mathcal{L}'(Q)(\theta))$ is well-defined. 

Let $H(t) := \KLD(p_N(t,\cdot) || \rho_{Q_N}^{\otimes N} )$ so that, using \eqref{eq: Liouville}, 
\begin{align*}
    H'(t) & = \partial_t \int \log \left( \frac{p_N(t,\bm{\theta})}{ \rho_{Q_N}(\theta_1) \cdots \rho_{Q_N}(\theta_N) } \right) p_N(t,\bm{\theta}) \; \mathrm{d}\bm{\theta} \\
    & = \underbrace{ \int \partial_t p_N(t,\bm{\theta}) \; \mathrm{d}\bm{\theta} }_{=0} + \int \log \left( \frac{p_N(t,\bm{\theta})}{ \rho_{Q_N}(\theta_1) \cdots \rho_{Q_N}(\theta_N) } \right) \partial_t p_N(t,\bm{\theta}) \; \mathrm{d}\bm{\theta} \\
    & = - \int \frac{1}{N} \sum_{i=1}^N \sum_{j=1}^N \log \left( \frac{p_N(t,\bm{\theta})}{ \rho_{Q_N}(\theta_1) \cdots \rho_{Q_N}(\theta_N) } \right) \nabla_{\theta_i} \cdot (p_N(t,\bm{\theta}) \Phi_{\bm{\theta}}(\theta_i,\theta_j)) \; \mathrm{d}\bm{\theta} .
\end{align*}
The interchanges of $\partial_t$ and integrals are justified by the dominated convergence theorem and noting that all integrands are $C^2([0,\infty) \times \mathbb{R}^{d \times N})$ and vanish when $\bm{\theta}$ lies outside of a bounded subset of $\mathbb{R}^{d \times N}$ (i.e. uniformly over $t \in [0,T]$). 
Then, noting that $v : \mathbb{R}^{d \times N} \rightarrow \mathbb{R}^{d \times N}$ with $v = (v_1, \dots , v_N)$ and
\begin{align*}
    v_i(\bm{\theta}) := \log \left( \frac{p_N(t,\bm{\theta})}{ \rho_{Q_N}(\theta_1) \cdots \rho_{Q_N}(\theta_N) } \right) p_N(t,\bm{\theta}) \Phi_{\bm{\theta}}(\theta_i,\theta_j) ,
\end{align*}
is $C^1(\mathbb{R}^{d \times N})$ and vanishes outside of a bounded set, and is therefore $\mathcal{L}^1(\mathbb{R}^{d \times N})$, we may use integration-by-parts \citep[e.g.][Lemma 1]{chazal2025computable}:
\begin{align*}
    H'(t) & = \frac{1}{N} \sum_{i=1}^N \sum_{j=1}^N \int \nabla_{\theta_i} \log \left( \frac{p_N(t,\bm{\theta})}{ \rho_{Q_N}(\theta_1) \cdots \rho_{Q_N}(\theta_N) } \right)  \cdot (p_N(t,\bm{\theta}) \Phi_{\bm{\theta}}(\theta_i,\theta_j)) \; \mathrm{d}\bm{\theta} \\
    & = \frac{1}{N} \sum_{i=1}^N \sum_{j=1}^N \int \nabla_{\theta_i} p_N(t,\bm{\theta}) \cdot \Phi_t(\theta_i,\theta_j)  -  b_{Q_N}(\theta_i) \cdot \Phi_{\bm{\theta}}(\theta_i,\theta_j) p_N(t,\bm{\theta}) \; \mathrm{d}\bm{\theta} 
\end{align*}
Similarly noting that $\bm{\theta} \mapsto p_N(t,\bm{\theta}) \Phi_{\bm{\theta}}(\theta_i,\theta_j)$ is $\mathcal{L}^1(\mathbb{R}^{d \times N})$, another application of integration-by-parts yields
\begin{align}
    H'(t) & = - \frac{1}{N} \sum_{i=1}^N \sum_{j=1}^N \int (  \nabla_{\theta_i} \cdot \Phi_t(\theta_i,\theta_j) + b_{Q_N}(\theta_i) \cdot \Phi_{\bm{\theta}}(\theta_i,\theta_j) ) p_N(t,\bm{\theta}) \; \mathrm{d}\bm{\theta} \nonumber \\
    & = - \mathbb{E} \left[ \frac{1}{N} \sum_{i=1}^N \sum_{j=1}^N \nabla_{\theta_i} \cdot \Phi_t(\theta_i,\theta_j) + b_{Q_N}(\theta_i) \cdot \Phi_{\bm{\theta}}(\theta_i,\theta_j)  \right] \label{eq: derivi of KL}
\end{align}
where we have used the expectation shorthand to refer to the random initialisation of the particles.

\smallskip
\noindent \textit{Step 3:  Calculating derivatives.}
Now we aim to calculate the terms in \eqref{eq: derivi of KL}.
Since $k$ is a differentiable kernel we have $\nabla_1 k(\theta,\theta) = 0$ for all $\theta \in \mathbb{R}^d$, and
\begin{align*}
    \nabla_{\theta_i} \cdot \Phi_{\bm{\theta}}(\theta_i,\theta_j) & = \nabla_{\theta_i} \cdot [ k(\theta_i , \theta_j) b_{Q_N}(\theta_j)] + \nabla_{\theta_i} \cdot [ \nabla_1 k(\theta_j , \theta_i) ] \\
    & = \nabla_1 k(\theta_i,\theta_j) \cdot b_{Q_N}(\theta_j) + k(\theta_i,\theta_j) \nabla_{\theta_i} \cdot b_{Q_N}(\theta_j) + \nabla_1 \cdot \nabla_2 k(\theta_i,\theta_j) \\
    & \qquad +  \{ \underbrace{ \nabla_1 k(\theta_i,\theta_i) }_{=0} \cdot b_{Q_N}(\theta_i) + \Delta_1 k(\theta_i,\theta_i)  \} \mathbbm{1}_{i=j}  \\
    b_{Q_N}(\theta_i) \cdot \Phi_{\bm{\theta}}(\theta_i,\theta_j) & = k(\theta_i,\theta_j) b_{Q_N}(\theta_i) \cdot b_{Q_N}(\theta_j) + b_{Q_N}(\theta_i) \cdot \nabla_1 k(\theta_j,\theta_i)
\end{align*}
and
\begin{align*}
    \nabla_{\theta_i} \cdot b_{Q_N}(\theta_j) & =  \{ \nabla \cdot ( \nabla \log q_0 - \vargrad  \mathcal{L}(Q_N) )(\theta_i)  \} \mathbbm{1}_{i=j}  -  \frac{1}{n} \vargrad  \cdot \vargrad  \mathcal{L}(Q_N)(\theta_j)(\theta_i).
\end{align*}
Thus, collecting together terms that correspond to \ac{KGD} using \eqref{eq: KGD explicit},
\begin{align}
    H'(t) & = - N \mathbb{E} \Bigg[ \mathrm{KGD}_k^2(Q_N)   - \frac{1}{N^2} \sum_{i=1}^N \sum_{j=1}^N  k(\theta_i,\theta_j) \vargrad  \cdot \vargrad  \mathcal{L}(Q_N)(\theta_j)(\theta_i) \label{eq: remainder}   \\
    & \qquad \qquad - \frac{1}{N^2} \sum_{i=1}^N   k(\theta_i,\theta_i) [ \Delta \log q_0(\theta_i) - \nabla \cdot \vargrad  \mathcal{L}(Q_N)(\theta_i) ] 
    + \Delta_1 k(\theta_i,\theta_i)  \Bigg] . \nonumber
\end{align}

\smallskip
\noindent \textit{Step 4:  Obtaining a bound.}
The final task is to bound the non-\ac{KGD} terms appearing in \eqref{eq: remainder} by a $Q_N$-independent constant.
Under our assumptions,
\begin{align*}
    & \hspace{-30pt} \left| \frac{1}{N^2} \sum_{i=1}^N \sum_{j=1}^N  k(\theta_i,\theta_j) \vargrad  \cdot \vargrad  \mathcal{L}(Q_N)(\theta_j)(\theta_i) \right| \\
    & \leq \sup_{Q \in \mathcal{P}(\mathbb{R}^d)} \left| \iint k(\theta,\vartheta) \vargrad  \cdot \vargrad  \mathcal{L}(Q)(\theta)(\vartheta) \; \mathrm{d}Q(\theta) \mathrm{d}Q(\vartheta) \right| < \infty \\
    & \hspace{-30pt} \left| \frac{1}{N^2} \sum_{i=1}^N k(\theta_i,\theta_i) [ \Delta \log q_0(\theta_i) - \nabla \cdot \vargrad  \mathcal{L}(Q_N)(\theta_i) ] 
    + \Delta_1 k(\theta_i,\theta_i) \right| \\
    & \leq \sup_{\theta} k(\theta,\theta) | \Delta \log q_0(\theta) | \\
    & \qquad + \sup_{Q \in \mathcal{P}(\mathbb{R}^d)} \left| \int k(\theta,\theta) \nabla \cdot \vargrad \mathcal{L}(Q)(\theta) \; \mathrm{d}Q(\theta) \right| + \sup_{\theta } \Delta_1 k(\theta,\theta) < \infty 
\end{align*}
which establish the required bounds, i.e.
\begin{align*}
    H'(t) & \leq - N \mathbb{E}[ \mathrm{KGD}_k^2(Q_N)  ]  + C_k 
\end{align*}
for some finite, $k$-dependent constant $C_k$.
Integrating both sides from $0$ to $T$ and rearranging yields
\begin{align*}
    \frac{1}{T} \int_0^T \mathbb{E}[ \mathrm{KGD}_k^2(Q_N) ] \; \mathrm{d}t \leq \frac{H(0) - H(t)}{N T} + \frac{C_k}{N} 
    \leq \frac{H(0)}{NT} + \frac{C_k}{N} .
\end{align*}
The result follows from additivity of the \ac{KLD}, since $H(0) = N \KLD(\mu_0 || \rho_{\mu_0})$.
\end{proof}

\begin{remark}[Relaxing the conditions of Proposition 3 in \citet{chazal2025computable}] \label{rem: relaxing}
    Compared to Proposition 3 in \citet{chazal2025computable} we have relaxed both (ii) on the loss and condition (v) on the kernel.
    In (ii) we have relaxed boundedness conditions on $\nabla \vargrad \mathcal{L}$ and $\vargrad \cdot \vargrad \mathcal{L}$ (which do not hold for $\mathcal{L}_{\mathrm{mix}}$ in our context) into integrability conditions on $\nabla \cdot \vargrad \mathcal{L}$ and $\vargrad \cdot \vargrad \mathcal{L}$ (which, as we will see, do hold under appropriate assumptions on $P_\theta$).
    In condition (v) we have also relaxed the assumption that the kernel is translation-invariant.    
\end{remark}

\subsubsection{Verifying Regularity Conditions}
\label{app: ver reg cond}

\Cref{thm: vgd converge} applied to general loss functions $\mathcal{L}$; here we establish explicit sufficient conditions in the specific case of $\LPC$.

\begin{proposition}[Regularity for the variational gradient of $\LPC$] \label{lem: check reg cds}
    Let $p_\theta(y_i | x_i)$ be a positive density for all $\theta$ and each $(x_i,y_i)$ in the dataset.
    Let 
    \begin{enumerate}
        \item[(i)] $\sup_{\theta} p_\theta(y_i|x_i) < \infty$ \label{asm: part 1}
        \item[(ii)] $\displaystyle \sup_{\theta} \sqrt{k(\theta,\theta)} \frac{ \| \nabla_\theta p_\theta(y_i|x_i) \| }{p_\theta(y_i | x_i)} < \infty$ \label{asm: part 2}
        \item[(iii)] $\displaystyle \sup_{\theta} k(\theta,\theta) \frac{ \Delta_\theta p_\theta(y_i|x_i) }{ p_\theta(y_i | x_i) } < \infty$
    \end{enumerate}
    for each $(x_i,y_i)$ in the dataset.
    Then the map $\theta \mapsto k(\theta,\theta) \|\nabla_\theta p_\theta(y_i | x_i)\|$ has at most linear growth and, for the loss function $\mathcal{L} = \LPC$ in \eqref{eq: define losses}, 
    \begin{align*}
        \sup_{Q \in \mathcal{P}(\mathbb{R}^d)} \left| \int k(\theta,\theta) \nabla \cdot \vargrad \mathcal{L}(Q)(\theta) \; \mathrm{d}Q(\theta) \right| & < \infty \\
        \sup_{Q \in \mathcal{P}(\mathbb{R}^d)} \left| \iint k(\theta,\vartheta) \vargrad \cdot \vargrad \mathcal{L}(\theta)(\vartheta) \; \mathrm{d}Q(\theta) \mathrm{d}Q(\vartheta) \right| & < \infty .
    \end{align*}
\end{proposition}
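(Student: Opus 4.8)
The plan is to reduce both displayed bounds, and the growth claim, to two structural facts about $\LPC$ together with a pair of Cauchy--Schwarz estimates. Abbreviate $p_i(\theta) := p_\theta(y_i \mid x_i)$ and $p_Q^i := p_Q(y_i \mid x_i) = \int p_i \, \mathrm{d}Q$. By \Cref{prop: explicit variational grad}, $\vargrad \LPC(Q)(\theta) = -\sum_{i=1}^n \nabla p_i(\theta)/p_Q^i$. The first structural fact is that each denominator $p_Q^i$ depends on $Q$ but \emph{not} on $\theta$, is strictly positive, and satisfies $p_Q^i \le M_i := \sup_\theta p_i(\theta) < \infty$ by the first assumption. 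The second is the resulting pair of normalisation identities,
\begin{align*}
\int \frac{p_i(\theta)}{p_Q^i} \, \mathrm{d}Q(\theta) = 1, \qquad \iint \frac{p_i(\theta) p_i(\vartheta)}{(p_Q^i)^2} \, \mathrm{d}Q(\theta) \, \mathrm{d}Q(\vartheta) = 1 ,
\end{align*}
which are exactly what convert the $Q$-dependent integrals below into $Q$-independent constants. Throughout write $B_i := \sup_\theta \sqrt{k(\theta,\theta)}\,\|\nabla p_i(\theta)\|/p_i(\theta)$ (finite by the second assumption) and use the third assumption in the form $C_i := \sup_\theta k(\theta,\theta)\,|\Delta p_i(\theta)|/p_i(\theta) < \infty$.

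\emph{Growth claim and the divergence bound.} For the growth claim, factor $k(\theta,\theta)\|\nabla p_i(\theta)\| = \big(\sqrt{k(\theta,\theta)}\|\nabla p_i(\theta)\|/p_i(\theta)\big)\, p_i(\theta)\, \sqrt{k(\theta,\theta)} \le B_i M_i \sqrt{k(\theta,\theta)}$, and observe that $\sqrt{k(\theta,\theta)}$ has at most linear growth: by symmetry of $k$ one has $\nabla_\theta[k(\theta,\theta)] = 2(\nabla_1 k)(\theta,\theta)$, which has at most linear growth by the standing kernel assumption (cf. the Kernel hypothesis of \Cref{thm: main text}), so $k(\theta,\theta)$ has at most quadratic growth. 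For the first displayed bound, since $p_Q^i$ does not depend on $\theta$ we get $\nabla \cdot \vargrad \LPC(Q)(\theta) = -\sum_i \Delta p_i(\theta)/p_Q^i$, hence
\begin{align*}
k(\theta,\theta)\, \nabla \cdot \vargrad \LPC(Q)(\theta) = -\sum_{i=1}^n \frac{k(\theta,\theta)\,\Delta p_i(\theta)}{p_i(\theta)} \cdot \frac{p_i(\theta)}{p_Q^i} ;
\end{align*}
integrating against $Q$, bounding each factor $|k(\theta,\theta)\Delta p_i(\theta)/p_i(\theta)| \le C_i$ and using the first normalisation identity gives $\big|\int k(\theta,\theta)\nabla \cdot \vargrad \LPC(Q)(\theta)\,\mathrm{d}Q(\theta)\big| \le \sum_i C_i$, uniformly in $Q$.

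\emph{The Hessian bound.} This is the substantive step and begins by unwinding $\vargrad \cdot \vargrad$. Since $\LPC'(Q)(x) = -\sum_j p_j(x)/p_Q^j$, the $i$-th component of the variational gradient is the functional $\mathcal{G}_{x,i}(Q) = -\sum_j \partial_i p_j(x)/p_Q^j$, whose entire $Q$-dependence sits in the scalars $1/p_Q^j$; the first variation of $Q \mapsto 1/p_Q^j$ at $y$ is $-p_j(y)/(p_Q^j)^2$, so $\mathcal{G}_{x,i}'(Q)(y) = \sum_j \partial_i p_j(x)\, p_j(y)/(p_Q^j)^2$, and differentiating in the $i$-th coordinate of $y$ and summing over $i$,
\begin{align*}
\vargrad \cdot \vargrad \LPC(Q)(\theta)(\vartheta) = \sum_{j=1}^n \frac{\nabla p_j(\theta) \cdot \nabla p_j(\vartheta)}{(p_Q^j)^2} .
\end{align*}
Applying $|k(\theta,\vartheta)| \le \sqrt{k(\theta,\theta)}\sqrt{k(\vartheta,\vartheta)}$ (positive semi-definiteness of $k$) and the Euclidean Cauchy--Schwarz inequality on $\nabla p_j(\theta)\cdot\nabla p_j(\vartheta)$,
\begin{align*}
\big| k(\theta,\vartheta)\, \vargrad \cdot \vargrad \LPC(Q)(\theta)(\vartheta) \big| \le \sum_{j=1}^n \frac{\sqrt{k(\theta,\theta)}\|\nabla p_j(\theta)\|}{p_j(\theta)} \cdot \frac{\sqrt{k(\vartheta,\vartheta)}\|\nabla p_j(\vartheta)\|}{p_j(\vartheta)} \cdot \frac{p_j(\theta) p_j(\vartheta)}{(p_Q^j)^2} \le \sum_{j=1}^n B_j^2\, \frac{p_j(\theta) p_j(\vartheta)}{(p_Q^j)^2} .
\end{align*}
Integrating in $\theta$ and $\vartheta$ against $Q$ and using the second normalisation identity yields $\big|\iint k(\theta,\vartheta) \vargrad \cdot \vargrad \LPC(Q)(\theta)(\vartheta)\,\mathrm{d}Q(\theta)\mathrm{d}Q(\vartheta)\big| \le \sum_j B_j^2 < \infty$, uniformly in $Q$.

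\emph{Main obstacle.} The one genuinely delicate point is deriving the explicit formula for $\vargrad \cdot \vargrad \LPC(Q)$: it requires correctly composing a spatial gradient in the first argument, a first variation in $Q$ (where the nonlinearity of $\LPC$ enters, via the terms $1/p_Q^j$), and a spatial derivative in the second argument, and then recognising that the double sum over data indices collapses through the scalars $p_Q^j$. Everything else is bookkeeping: the two Cauchy--Schwarz inequalities plus the normalisation identities deliver the uniformity in $Q$ essentially for free, precisely because all of the $Q$-dependence of $\LPC$ is channelled through $p_Q^j = \int p_j\,\mathrm{d}Q$. (One minor wrinkle: the third assumption as written bounds $k\,\Delta p_i/p_i$ from above only; the matching lower bound, used above to control $|k\,\Delta p_i/p_i|$, holds in the settings of interest, e.g.\ the Gaussian regression model of \Cref{prop: Gauss location MT}.)
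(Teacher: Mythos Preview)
Your argument is correct and follows essentially the same route as the paper's proof: you compute the same explicit formulas for $\nabla\cdot\vargrad\LPC$ and $\vargrad\cdot\vargrad\LPC$, then use the factorisation $p_i(\theta)/p_Q^i$ to absorb the $Q$-dependence. Your ``normalisation identities'' are exactly the paper's change of measure to $\Pi_Q$ with $(\mathrm{d}\Pi_Q/\mathrm{d}Q)(\theta)\propto p_i(\theta)$, just written without naming the tilted measure; the paper phrases the same step as ``$\int k(\theta,\theta)|\Delta p_i|/p_i\,\mathrm{d}\Pi_Q$ with bounded integrand''. Your observation that assumption~3 is stated as a one-sided bound while the argument needs $|k\,\Delta p_i/p_i|$ bounded is also present (implicitly) in the paper, which silently inserts the absolute value in its proof; so the wrinkle you flag is real and shared.
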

\begin{proof}
    The first claim follows from combining (i) and (ii).
    Continuing from \Cref{prop: explicit variational grad}, the second variation is
    \begin{align*}
        \mathcal{L}''(Q)(\theta)(\vartheta) & = \sum_{i=1}^n \frac{p_\theta(y_i | x_i) p_\vartheta(y_i | x_i)}{p_Q(y_i | x_i)^2}, \nonumber
    \end{align*}
    which is well-defined since, under our assumptions, $p_Q(y_i | x_i) = \int p_\theta(y_i | x_i) \; \mathrm{d}Q(\theta)$ is strictly positive for all $Q \in \mathcal{P}(\mathbb{R}^d)$ and all $(x_i,y_i)$ in the dataset.
    The required variational gradients are
    \begin{align*}
        \nabla \cdot \vargrad \mathcal{L}(Q)(\theta) & = - \sum_{i=1}^n \frac{\Delta_\theta p_\theta(y_i | x_i)}{p_Q(y_i | x_i)} \\
        \vargrad \cdot \vargrad \mathcal{L}(Q)(\theta)(\vartheta) & = \sum_{i=1}^n \frac{ \nabla_\theta p_\theta(y_i | x_i) \cdot \nabla_\vartheta p_\vartheta(y_i | x_i) }{ p_Q(y_i | x_i)^2 } , 
    \end{align*}
    whose terms we assumed to be well-defined.
    Finally, since each $k(\theta,\theta) \Delta_\theta p_\theta(y_i | x_i)$ is bounded in $\theta$, $k(\theta,\theta) \nabla \cdot \vargrad \mathcal{L}(Q)$ is $Q$-integrable for each $Q \in \mathcal{P}(\mathbb{R}^d)$.
    Likewise, since $|k(\theta,\vartheta)| \leq \sqrt{k(\theta,\theta)} \sqrt{k(\vartheta,\vartheta)}$ and each $\sqrt{k(\theta,\theta)} \nabla_\theta p_\theta(y_i | x_i)$ is bounded in $\theta$, we deduce that $k(\theta,\vartheta) \vargrad \cdot \vargrad \mathcal{L}(Q)(\theta)(\vartheta)$ is $Q \otimes Q$-integrable for each $Q \in \mathcal{P}(\mathbb{R}^d)$.
    Integrating these equations and applying Jensen's inequality,
    \begin{align*}
        \left| \int k(\theta,\theta) \nabla \cdot \vargrad \mathcal{L}(Q)(\theta) \; \mathrm{d}Q(\theta) \right| & \leq \sum_{i=1}^n \frac{  \int k(\theta,\theta) | \Delta_\theta p_\theta(y_i | x_i) | \; \mathrm{d}Q(\theta) }{ \int p_\theta(y_i | x_i) \; \mathrm{d}Q(\theta) } \\
        \left| \iint k(\theta,\vartheta) \vargrad \cdot \vargrad \mathcal{L}(\theta)(\vartheta) \; \mathrm{d}Q(\theta) \mathrm{d}Q(\vartheta) \right| & \leq \sum_{i=1}^n \left( \frac{ \int \sqrt{k(\theta,\theta)} \| \nabla_\theta p_\theta(y_i | x_i) \| \; \mathrm{d}Q(\theta) }{ \int p_\theta(y_i | x_i) \; \mathrm{d}Q(\theta)  } \right)^2 .
    \end{align*}
    Let $\Pi_Q$ denote the distribution for which $(\mathrm{d}\Pi_Q / \mathrm{d}Q)(\theta) \propto p_\theta(y_i | x_i)$, so that
    \begin{align*}
        \frac{  \int k(\theta,\theta) | \Delta_\theta p_\theta(y_i | x_i) | \; \mathrm{d}Q(\theta) }{ \int p_\theta(y_i | x_i) \; \mathrm{d}Q(\theta) } & = \int \underbrace{ k(\theta,\theta) \frac{|\Delta_\theta p_\theta(y_i | x_i)|}{p_\theta(y_i | x_i)} }_{(*)} \; \mathrm{d}\Pi_Q(\theta) \\
        \frac{ \int \sqrt{k(\theta,\theta)} \| \nabla_\theta p_\theta(y_i | x_i) \| \; \mathrm{d}Q(\theta) }{ \int p_\theta(y_i | x_i) \; \mathrm{d}Q(\theta)  }  & = \int \underbrace{ \sqrt{k(\theta,\theta)} \frac{\|\nabla_\theta p_\theta(y_i | x_i)\|}{p_\theta(y_i | x_i)} }_{(**)} \; \mathrm{d}\Pi_Q(\theta) 
    \end{align*}
    where, under our assumptions, both integrands $(*)$ and $(**)$ are bounded over $\theta \in \mathbb{R}^d$.
    It follows that both integrals are bounded over $\Pi_Q \in \mathcal{P}(\mathbb{R}^d)$, and hence over $Q \in \mathcal{P}(\mathbb{R}^d)$, completing the argument.
\end{proof}

\subsubsection{Proof of \Cref{thm: main text}}
\label{app: proof main}

At last we can present a proof of \Cref{thm: main text}:

\begin{proof}[Proof of \Cref{thm: main text}]
Our task is to verify the conditions of \Cref{thm: vgd converge} for $\mathcal{L} = \LPC$:
\begin{enumerate}
    \item[(i)] (Integrability) From \eqref{eq: first var lmix} and the boundedness of $\theta \mapsto p_\theta(y_i | x_i)$ for each $(x_i,y_i)$, we deduce that $\mathcal{L}'(Q)$ is bounded and thus $\exp(-\mathcal{L}'(Q))$ is integrable with respect to $Q_0$.
    \item[(ii)] (Loss)  Since each $\theta \mapsto p_\theta(y_i | x_i)$ is $C^3(\mathbb{R}^d)$,  
    $(\theta_1 , \dots , \theta_N) \mapsto p_{Q_N}(y_i | x_i)$ is $C^3(\mathbb{R}^{d \times N})$.
    From \eqref{eq: var grad Lmix}, 
    $(\theta_1 , \dots , \theta_N) \mapsto \vargrad \mathcal{L}(Q_N)(\theta_i)$ is thus also $C^3(\mathbb{R}^{d \times N})$ for each $i \in \{1,\dots,N\}$.
    Since $\theta \mapsto \nabla_\theta \log p_\theta(y_i | x_i)$ has at most linear growth, from \eqref{eq: var grad Lmix}
    \begin{align*}
        | \vargrad \mathcal{L}(Q_N)(\theta_j) | & = \left| - \sum_{i=1}^n \frac{\nabla_{\theta_j} p_{\theta_j}(y_i | x_i)  }{\frac{1}{N} \sum_{r=1}^N p_{\theta_r}(y_i | x_i)} 
        \right| \\
        & \leq N \sum_{i=1}^n \left| \frac{\nabla_{\theta_j} p_{\theta_j}(y_i|x_i)  }{ p_{\theta_j}(y_i | x_i) } \right|
        = N \sum_{i=1}^n | \nabla_{\theta_j} \log p_{\theta_j}(y_i | x_i)) |
    \end{align*}
    has at most linear growth as well.
    From \Cref{lem: check reg cds} both of the integrability conditions on the loss in (ii) of \Cref{thm: vgd converge} are satisfied.
    \item[(iii)] (Regularisation)  Satisfied by assumption.
    \item[(iv)] (Initialisation) Satisfied by assumption.
    \item[(v)] (Kernel) Satisfied by assumption.
    \item[(vi)] (Growth) The at most linear growth of $\theta \mapsto k(\theta,\theta) \|\nabla_\theta p_\theta(y_i | x_i)\|$ was established in \Cref{lem: check reg cds}.
    The remaining growth requirements were directly assumed.
\end{enumerate}
    This completes the argument.
\end{proof}

\subsection{Proof of Proposition \ref{prop: Gauss location MT}}
\label{app: gauss location}

This appendix is dedicated to a proof of our final theoretical result:

\begin{proof}[Proof of Proposition \ref{prop: Gauss location MT}]
    For these calculation we recall that $A : B = \mathrm{tr}(A B^\top)$ for matrices $A$, $B$ is the double dot product and that $[\nabla_\theta v(\theta)]_{i,j} = \nabla_{\theta_i} v_j(\theta)$ and $[\Delta_\theta v(\theta)]_{j} = \Delta_\theta v_j(\theta)$ for vector-valued $v : \mathbb{R}^d \rightarrow \mathbb{R}^d$.
    By convention $v(\theta)$ is a column vector and $\Delta_\theta v(\theta)$ is a row vector for $v : \mathbb{R}^d \rightarrow \mathbb{R}^d$.
    For the Gaussian location model
    \begin{align*}
    \nabla_\theta p_\theta(y_i | x_i) & =  (\nabla_\theta f_\theta(x_i)) \Sigma^{-1} (y_i - f_\theta(x_i)) p_\theta(y_i | x_i) \\
    \Delta_\theta p_\theta(y_i | x_i) & = (\Delta_\theta f_\theta(x_i)) \Sigma^{-1} (y - f_\theta(x_i)) p_\theta(y_i | x_i) \\
    & \qquad + (\nabla_\theta f_\theta(x_i)) : [ \nabla_\theta p_\theta(y_i | x_i) (y - f_\theta(x_i))^\top - (\nabla_\theta f_\theta(x_i)) p_\theta(y_i | x_i) ] \Sigma^{-1} \\
    & \hspace{-30pt} = \left\{ \begin{array}{l} (\Delta_\theta f_\theta(x_i)) \Sigma^{-1} (y - f_\theta(x_i)) \\
    \qquad + (\nabla_\theta f_\theta(x_i)) : [ (\nabla_\theta f_\theta(x_i)) \Sigma^{-1} (y-f_\theta(x_i))(y-f_\theta(x_i))^\top \Sigma^{-1} ] \\
    \qquad - (\nabla_\theta f_\theta(x_i)) : [ (\nabla_\theta f_\theta(x_i)) \Sigma^{-1} ]
    \end{array} \right\} p_\theta(y_i | x_i)
    \end{align*}
    so that
    \begin{align*}
        \sup_{\theta} \sqrt{k(\theta,\theta)} \frac{ \| \nabla_\theta p_\theta(y_i|x_i) \| }{p_\theta(y_i | x_i)} & = \sup_{\theta} \sqrt{k(\theta,\theta)} \| (\nabla_\theta f_\theta(x_i)) \Sigma^{-1} (y_i - f_\theta(x_i)) \| \\
        & \leq \|\Sigma^{-1} y_i\| \sup_{\theta} \sqrt{k(\theta,\theta)} \| (\nabla_\theta f_\theta(x_i)) \|_{\mathrm{op}} \\
        & \qquad + \| \Sigma^{-1} \|_{\mathrm{op}} \left[ \sup_{\theta } \|f_\theta(x_i)\| \right] \left[ \sup_{\theta} \sqrt{k(\theta,\theta)} \| (\nabla_\theta f_\theta(x_i)) \|_{\mathrm{op}} \right]
    \end{align*}
    where the finiteness of the terms appearing on the right hand side was assumed.
    A similar but more lengthy calculation (which we omit for brevity) for the second supremum completes the argument.
\end{proof}

\subsection{Proof of \Cref{thm: bootstrap}}
\label{app: bootstrap proof}

This appendix is devoted to the proof of \Cref{thm: bootstrap}.
First we present the main argument, and then establish the correctness of each step through a series of propositions in the sequel.
Define the expected \ac{MMD}
\[
\mathcal{D}(P,P') := \mathbb{E}_{x \sim \rho}\big[ \mathrm{MMD}_\kappa^2\big(P(\cdot \mid x), P'(\cdot \mid x)\big) \big].
\]

\begin{proof}[Proof of \Cref{thm: bootstrap}]
    Since $\theta_n$ is a root-$n$ strongly consistent estimator of $\theta_\star$, from the continuity of the expected \ac{MMD} statistic established in \Cref{lem: continuity},
    \begin{align*}
        \mathcal{D}(\PPC^{\theta_n,u} , \PBayes^{\theta_n,u} ) \stackrel{d}{\rightarrow} \mathcal{D}(\PPC^{\theta_\star,u} , \PBayes^{\theta_\star,u} ) 
    \end{align*}
as $n \rightarrow \infty$, where randomness is with respect to both the random seed $u \sim \nu$ and the covariates $x_i \stackrel{\mathrm{iid}}{\sim} \rho$.
From the uniform law of large numbers in \Cref{lem: uniform}, the expected \ac{MMD} $\mathcal{D}$ is uniformly-well approximated by the empirical \ac{MMD} $\mathcal{D}_n$. 
Thus, since almost sure convergence implies convergence in probability, it also follows that
\begin{align*}
        \mathcal{D}_n(\PPC^{\theta_n,u} , \PBayes^{\theta_n,u} ) - \mathcal{D}(\PPC^{\theta_n,u} , \PBayes^{\theta_n,u} )  \stackrel{p}{\rightarrow} 0.
    \end{align*}
Combining these two convergence statements using Slutsky's theorem,
\begin{align*}
        \mathcal{D}_n(\PPC^{\theta_n,u} , \PBayes^{\theta_n,u} )   
        & = \mathcal{D}(\PPC^{\theta_n,u} , \PBayes^{\theta_n,u} ) + \left[ \mathcal{D}_n(\PPC^{\theta_n,u} , \PBayes^{\theta_n,u} ) - \mathcal{D}(\PPC^{\theta_n,u} , \PBayes^{\theta_n,u} ) \right] \\
        & \stackrel{d}{\rightarrow} \mathcal{D}(\PPC^{\theta_\star,u} , \PBayes^{\theta_\star,u} ) ,
    \end{align*}
as claimed.
\end{proof}

\subsubsection{Continuity in Expected MMD}

Here we establish the correctness of the first step in the proof of \Cref{thm: bootstrap}; continuity of the expected \ac{MMD} with respect to the data-generating parameters:

\begin{proposition}[Continuity in Expected MMD]
\label{lem: continuity}
Let $\PBayes^{\theta,u}$ and $\PPC^{\theta,u}$ respectively denote the Bayesian and \ac{PC} posteriors based on a dataset $\{(x_i,y_i)\}_{i=1}^n$ with as $x_i \stackrel{\mathrm{iid}}{\sim} \rho$ and using the generator $G$ in \eqref{eq: data generation}.
Assume that:
\begin{enumerate}
    \item[(i)] \emph{Strongly log-concave prior:} $ - \nabla_\theta^2 \log q_0(\theta) \succeq \lambda_0 I$ for some $\lambda_0 > 0$ and all $\theta$,
    \item[(ii)] \emph{Strongly log-concave likelihood:} $- \nabla_\theta^2 \log p_\theta(y|x) \succeq \lambda I $ for some $\lambda > 0$ and all $\theta$, $x$, $y$,
    \item[(iii)] \emph{Lipschitz log-likelihood}:  The log-likelihood is uniformly Lipschitz in the $y$-argument, i.e.
    \[
    |\log p_\theta(y|x) - \log p_\theta(y'|x)| \le L_\ell \|y - y'\| ,
    \]
    for some $L_\ell \geq 0$ and all $\theta$, $x$, $y$ and $y'$.
    \item[(iv)] \emph{Bounded mean embedding of the model}:
    $\sup_{x, \, \theta} \int \kappa(y,y') \, \mathrm{d}P_\theta(y|x) \mathrm{d}P_\theta(y'|x) < \infty$
    \item[(v)] \emph{Lipschitz generator}: The generator $G$ is uniformly Lipschitz in the $\theta$-argument, i.e.
    $$
    \|G(\vartheta,x,u) - G(\theta,x,u)\| \leq L_G \|\vartheta - \theta\|
    $$
    for some $L_G \geq 0$ and all $x$, $u$, $\vartheta$, and $\theta$. 
\end{enumerate}
    Then
    \begin{align}
        \mathcal{D}(\PPC^{\vartheta,u} , \PBayes^{\vartheta,u} ) \stackrel{d}{\rightarrow} \mathcal{D}(\PPC^{\theta,u} , \PBayes^{\theta,u} ) \label{eq: continuity of MMD}
    \end{align}
    whenever $\vartheta \rightarrow \theta$ and $n \rightarrow \infty$, where randomness is with respect to both the random seed $u \sim \nu$ and the covariates $x_i \stackrel{\mathrm{iid}}{\sim} \rho$.
\end{proposition}
\begin{proof}
    From the triangle inequality for (expected) \ac{MMD},
    \begin{align*}
        \mathcal{D}(\PPC^{\vartheta,u} , \PBayes^{\vartheta,u} )  & \leq \mathcal{D}(\PPC^{\vartheta,u} , \PPC^{\theta,u} )  + \mathcal{D}(\PPC^{\theta,u} , \PBayes^{\theta,u} )  +  \mathcal{D}(\PBayes^{\theta,u} , \PBayes^{\vartheta,u} ) ,
    \end{align*}
    from which we obtain 
    \begin{align*}
        \mathbb{E}\left[ | \mathcal{D}(\PPC^{\vartheta,U} , \PBayes^{\vartheta,U} ) - \mathcal{D}(\PPC^{\theta,U} , \PBayes^{\theta,U} ) | \right] & \leq \mathbb{E}\left[ \mathcal{D}(\PPC^{\vartheta,U} , \PPC^{\theta,U} ) \right] + \mathbb{E}\left[ \mathcal{D}(\PBayes^{\theta,U} , \PBayes^{\vartheta,U} ) \right] ,
    \end{align*}
    where the expectation is with respect to both the random seed $u \sim \nu$ and the covariates $x_i \stackrel{\mathrm{iid}}{\sim} \rho$.
    Our aim is to show that the two terms on the right hand side vanish as $\vartheta \rightarrow \theta$ and $n \rightarrow \infty$.

    First we consider the term involving the \ac{PC} posterior.
    From the stability of the \ac{PC} posterior established in \Cref{thm: stability of PrO} (see also \Cref{rem: bounded mean embed}), followed by the Lipschitz assumption on $G$,
    \begin{align*}
        \mathbb{E}\left[ \mathcal{D}(\PPC^{\vartheta,u} , \PPC^{\theta,u} ) \right] 
        & = \iint \mathcal{D}(\PPC^{\vartheta,u} , \PPC^{\theta,u} ) \; \mathrm{d}\nu(u) \; \mathrm{d}\rho^n(\{x_i\}_{i=1}^n)  \\
        & \leq \frac{L_\ell M}{\lambda_n} \int  \sum_{i=1}^n \|G(\vartheta,x_i,u) - G(\theta,x_i,u)\| \; \mathrm{d}\nu(u)  \; \mathrm{d}\rho^n(\{x_i\}_{i=1}^n) \\
        & \le \frac{L_\ell M L_G n}{\lambda_n}  \int \| \vartheta - \theta \| \; \mathrm{d}\nu(u)  \; \mathrm{d}\rho^n(\{x_i\}_{i=1}^n)
        \le \frac{L_\ell M L_G}{\lambda}  \| \vartheta - \theta \| ,
    \end{align*}
    where the final line used the definition of $\lambda_n$.
    Taking a supremum over $n$, and noting that the bound we obtained above is $n$-independent,
    \begin{align}
        \sup_{n \in \mathbb{N}} \; \mathbb{E}\left[ \mathcal{D}(\PPC^{\vartheta,u} , \PPC^{\theta,u} ) \right] 
        & \le \frac{L_\ell M L_G}{\lambda}  \| \vartheta - \theta \| \rightarrow 0  \label{eq: sup bound}
    \end{align}
    as $\vartheta \rightarrow \theta$.
    
    An identical argument and an identical bound to \eqref{eq: sup bound} holds for the Bayesian posterior, using the stability also established in \Cref{thm: stability of PrO}.
    Thus we have shown that 
    \begin{align*}
        \sup_{n \in \mathbb{N}} \; \mathbb{E}\left[ | \mathcal{D}(\PPC^{\vartheta,u} , \PBayes^{\vartheta,u} ) - \mathcal{D}(\PPC^{\theta,u} , \PBayes^{\theta,u} ) | \right] & \rightarrow 0 
    \end{align*}
    as $\vartheta \rightarrow \theta$ and $n \rightarrow \infty$.
    Since convergence in $L^1$ implies convergence in distribution, we have established \eqref{eq: continuity of MMD}.
\end{proof}

\subsubsection{Stability of $\PBayes$ and $\PPC$}
\label{app: stability}

This appendix establishes the stability of both $\PBayes$ and $\PPC$, which underpinned the proof of \Cref{lem: continuity}.
Let $\mathcal{J}_{\mathrm{Bayes}}$ and $\mathcal{J}_{\mathrm{PrO}}$ indicate that we are considering the objective in \eqref{eq: objective} with either the loss function $\mathcal{L}$ equal, respectively, to $\mathcal{L}_{\mathrm{Bayes}}$ or $\mathcal{L}_{\mathrm{PrO}}$.
Denote by $\mu_P(\cdot) = \int \kappa(y,\cdot)\, \mathrm{d}P(y) \in \mathcal{H}_{\kappa}$ the kernel mean embedding of $P$ in $\mathcal{H}_{\kappa}$. 
For $Q_1, Q_2 \in \mathcal{P}(\mathbb{R}^d)$, denote the total variation distance as $\mathrm{TV}(Q_1,Q_2) = \sup_{A \subseteq \mathbb{R}^d} \int \mathrm{1}_A(\theta) \, \mathrm{d}(Q_1 - Q_2)(\theta)$.

\begin{proposition}[Stability of $\PBayes$ and $\PPC$]
\label{thm: stability of PrO}
Assume that:
\begin{enumerate}
    \item[(i)] \emph{Strongly log-concave prior:} $ - \nabla_\theta^2 \log q_0(\theta) \succeq \lambda_0 I$ for some $\lambda_0 > 0$ and all $\theta$,
    \item[(ii)] \emph{Strongly log-concave likelihood:} $- \nabla_\theta^2 \log p_\theta(y|x) \succeq \lambda I $ for some $\lambda > 0$ and all $\theta$, $x$, $y$,
    \item[(iii)] \emph{Lipschitz log-likelihood}:  The log-likelihood is uniformly $L_\ell$-Lipschitz in the $y$-argument, i.e.
    \[
    |\log p_\theta(y|x) - \log p_\theta(y'|x)| \le L_\ell \|y - y'\| ,
    \]
    for some $L_\ell \geq 0$ and all $\theta$, $x$, $y$ and $y'$.
    \item[(iv)] \emph{Bounded mean embedding of the model}:
    $\sup_{x, \, \theta} \|\mu_{P_\theta(\cdot | x)}\|_{\mathcal{H}_{\kappa}} \le M < \infty$
\end{enumerate}
Then, for all $\theta, \vartheta \in \mathbb{R}^d$, any random seed $u$, and any $\{x_i\}_{i=1}^n$,
\[
\max\{ \mathcal{D}\big(\PBayes^{\vartheta,u}, \PBayes^{\theta,u}\big) , \mathcal{D}\big(\PPC^{\vartheta,u}, \PPC^{\theta,u}\big) \} \le  \frac{L_\ell M}{\lambda_n} \,  \sum_{i=1}^n \|G(\vartheta,x_i,u) - G(\theta,x_i,u)\| ,
\]
where $\lambda_n = \lambda_0 + n \lambda$.
\end{proposition}
\begin{proof}
First consider
\begin{align*}
\mathcal{J}_{\mathrm{PrO}}^{\mathfrak{D}_n}(Q) & := - \sum_{i=1}^n \log p_Q(y_i | x_i) + \mathrm{KL}(Q \| Q_0),
\end{align*}
where $\mathfrak{D}_n = \{(x_i, y_i)\}_{i=1}^n$ is the dataset. 
From \Cref{prop: strong convex}, $Q \mapsto \mathcal{J}_{\mathrm{PrO}}^{\mathfrak{D}_n}(Q)$ is $\lambda_n$-strongly convex with respect to the \ac{KLD} for all datasets $\mathfrak{D}_n$.
Further, from the Lipschitz property of the log-likelihood and \Cref{prop: lips pQ}, for any other $\mathfrak{D}_n = \{(x_i,y_i')\}_{i=1}^n$:
\begin{align*}
|\mathcal{J}_{\mathrm{PrO}}^{\mathfrak{D}_n}(Q) - \mathcal{J}_{\mathrm{PrO}}^{\mathfrak{D}_n'}(Q)| & \le L_\ell \sum_{i=1}^n \|y_i - y_i'\| .
\end{align*}
Let $\QPC^{\mathfrak{D}}$ denote the minimiser of $\mathcal{J}_{\mathrm{PrO}}^{\mathfrak{D}}$.
Since minimisers are stable under uniform perturbations (\Cref{prop: stability of minimisers}), 
\[
\KLD(\QPC^{\mathfrak{D}_n} \| \QPC^{\mathfrak{D}_n'}) \leq  \frac{2 L_\ell}{\lambda_n} \sum_{i=1}^n \|y_i - y_i'\| .
\]
Let $\PPC^{\mathfrak{D}}(\cdot | x)$ denote \ac{PC} predictive distribution based on $\QPC^{\mathfrak{D}}$. 
From boundedness of the mean embeddings, from \Cref{prop: TV}, and then using Pinsker's inequality:
\begin{align*}
\mathrm{MMD}_\kappa\big(\PPC^{\mathfrak{D}_n}(\cdot|x), \PPC^{\mathfrak{D}_n'}(\cdot|x)\big) & \le M \; \mathrm{TV}(\QPC^{\mathfrak{D}_n} , \QPC^{\mathfrak{D}_n'}) \\
& \leq \sqrt { \frac{M}{2} \;  \KLD(\QPC^{\mathfrak{D}_n} \, || \, \QPC^{\mathfrak{D}_n'}) }
\end{align*}
Hence,
\[
\mathrm{MMD}_\kappa^2\big(\PPC^{\mathfrak{D}_n}(\cdot|x), \PPC^{\mathfrak{D}_n'}(\cdot|x)\big) \le \frac{L_\ell M}{\lambda_n}  \sum_{i=1}^n \|y_i - y_i'\|
\]
where the final bound is $x$-independent.
Averaging over $x \sim \rho$,
\[
\mathcal{D}(\PPC^{\mathfrak{D}_n}, \PPC^{\mathfrak{D}_n'}) = \mathbb{E}_{x \sim \rho}[\mathrm{MMD}_\kappa^2(\PPC^{\mathfrak{D}_n}(\cdot|x), \PPC^{\mathfrak{D}_n'}(\cdot|x))] \le \frac{L_\ell M}{\lambda_n} \sum_{i=1}^n \|y_i - y_i'\|.
\]
Setting $y_i = G(\theta,x_i,u)$ and $y_i' = G(\vartheta,x_i,u)$, the final bound becomes
\[
\mathcal{D}\big(\PPC^{\vartheta,u}, \PPC^{\theta,u}\big) \le \frac{L_\ell M}{\lambda_n} \sum_{i=1}^n \|G(\vartheta,x_i,u) - G(\theta,x_i,u)\| ,
\]
which establishes the stability of $\PPC$.
An analogous argument, with the same assumptions and same final bound, holds for $\PBayes$; for brevity this is not presented.
\end{proof}

\begin{remark}[Bounded mean embedding of the model]
\label{rem: bounded mean embed}
From the reproducing property,
\begin{align*}
\|\mu_{P_\theta(\cdot | x)}\|_{\mathcal{H}_{\kappa}}^2 & = \langle \mu_{P_\theta(\cdot | x)} , \mu_{P_\theta(\cdot | x)} \rangle_{\mathcal{H}_{\kappa}} \\
& = \left\langle \int \kappa(\cdot , y) \, \mathrm{d}P_\theta(y|x) , \int \kappa(\cdot,y') \, \mathrm{d}P_\theta(y'|x) \right\rangle_{\mathcal{H}_{\kappa}} \\
& = \iint \langle \kappa(\cdot,y) , \kappa(\cdot,y') \rangle_{\mathcal{H}_{\kappa}} \, \mathrm{d}P_\theta(y|x) \mathrm{d}P_\theta(y'|x) \\
& = \iint \kappa(y,y') \, \mathrm{d}P_\theta(y|x) \mathrm{d}P_\theta(y'|x) ,
\end{align*}
so boundedness of the mean embedding of the model is trivially satisfied when this double integral is bounded.
Furthermore, when the kernel $\kappa$ is bounded, the boundedness of the mean embedding of the model is trivially satisfied.
\end{remark}

The remainder of this appendix establishes \Cref{prop: lips pQ,prop: stability of minimisers,prop: TV}, which were used in the proof of \Cref{thm: stability of PrO}.
For the subsequent analysis we introduce the convenient shorthand $\langle f , Q \rangle = \int f(\theta) \, \mathrm{d}Q(\theta)$ for $f : \mathbb{R}^d \rightarrow \mathbb{R}$ and $Q \in \mathcal{P}(\mathbb{R}^d)$, whenever this integral is well-defined.

\begin{proposition}[Lipschitz Property for $p_Q$]
\label{prop: lips pQ}
Assume that for all $\theta$ and $x$, the log-likelihood is $L_\ell$-Lipschitz in the $y$-argument:
    \[
    |\log p_\theta(y|x) - \log p_\theta(y'|x)| \le L_\ell \|y - y'\|.
    \]
Then $|\log p_Q(y | x) - \log p_Q(y' | x)| \leq L_\ell \|y - y'\|$ for all $Q \in \mathcal{P}(\mathbb{R}^d)$.
\end{proposition}
\begin{proof}
    From the Lipschitz assumption,
    $$
    p_\theta(y | x) \leq p_\theta(y' | x) e^{L_\ell \|y - y'\|}
    $$
    and thus, for any $Q \in \mathcal{P}(\mathbb{R}^d)$,
    $$
    \int p_\theta(y | x) \, \mathrm{d}Q(\theta) \leq e^{L_\ell \|y - y'\|} \int p_\theta(y' | x) \, \mathrm{d}Q(\theta) .
    $$
    Taking logarithms and using the symmetry of $y$ and $y'$ completes the argument.
\end{proof}

\begin{proposition}[Stability of Minimisers]
\label{prop: stability of minimisers}
Consider a convex set $\mathcal{Q} \subset \mathcal{P}(\mathbb{R}^d)$ for which $\KLD(Q \, || \, Q') < \infty$ for all $Q, Q' \in \mathcal{Q}$.
Let $\mathcal{J}_i$, $i \in \{1,2\}$, have $\nabla_{\mathrm{V}} \mathcal{J}_i$ well-defined on $\mathcal{Q}$ such that $\mathcal{J}_1$ is $\lambda$-strongly convex on $\mathcal{Q}$ with respect to the \ac{KLD} and
\[
|\mathcal{J}_1(Q) - \mathcal{J}_2(Q)| \le L \quad \text{for all } Q \in \mathcal{Q}.
\]
Suppose $\mathcal{J}_i$ has a minimiser $Q_i \in \mathcal{Q}$ for $i \in \{1,2\}$. 
Then $\KLD(Q_2 \,\|\, Q_1) \le 2 L / \lambda$.
\end{proposition}
\begin{proof}
From the definition of $\lambda$-strong convexity of $\mathcal{J}_1$, and the fact that $Q_1$ is a critical point (minimiser) of $\mathcal{J}_1$,
\[
\mathcal{J}_1(Q_2) \geq \mathcal{J}_1(Q_1) + \langle \underbrace{\nabla_{\mathrm{V}} \mathcal{J}_1(Q_1)}_{=0} , Q_2 - Q_1 \rangle + \lambda \,\KLD(Q_2 \,\|\, Q_1)  ,
\]
and thus
\[
\lambda \,\KLD(Q_2 \,\|\, Q_1)
\le \mathcal{J}_1(Q_2) - \mathcal{J}_1(Q_1) .
\]
Using the uniform approximation property of $\mathcal{J}_2$, i.e. $\mathcal{J}_1(Q_2) \le \mathcal{J}_2(Q_2) + L$ and $\mathcal{J}_1(Q_1) \ge \mathcal{J}_2(Q_1) - L$, we get
\[
\lambda \,\KLD(Q_2 \,\|\, Q_1)
\le \mathcal{J}_2(Q_2) - \mathcal{J}_2(Q_1) + 2 L.
\]
Since $Q_2$ minimises $\mathcal{J}_2$, we have $\mathcal{J}_2(Q_2) \le \mathcal{J}_2(Q_1)$, and it follows that $\lambda \,\KLD(Q_2 \,\|\, Q_1) \le 2 L$, from which the claim is established.
\end{proof}

\begin{proposition}[Controlling MMD by TV]
\label{prop: TV}
Consider a parametric class of distributions $P_\theta(\cdot|x)$, indexed by $x \in \mathcal{X}$ and $\theta \in \mathbb{R}^d.$
Assume that
\begin{align}
M = \sup_{x, \, \theta } \|\mu_{P_\theta(\cdot | x)}\|_{\mathcal{H}_{\kappa}} < \infty. \label{eq: bounded mean embedding}
\end{align}
Then for all $Q_1,Q_2 \in \mathcal{P}(\mathbb{R}^d)$ and all $x \in \mathcal{X}$, 
\[
\mathrm{MMD}_\kappa\!\left(\int P_\theta(\cdot | x) \, \mathrm{d}Q_1(\theta), \int P_\vartheta(\cdot | x) \, \mathrm{d}Q_2(\vartheta)\right)
\le M \, \mathrm{TV}(Q_1,Q_2).
\]
\end{proposition}
\begin{proof}
Recall that the \ac{MMD} admits the representation $\mathrm{MMD}_\kappa(P,Q) = \|\mu_P - \mu_Q\|_{\mathcal{H}_{\kappa}}$ \citep{smola2007hilbert}.
The kernel mean embeddings that concern us are
\begin{align*}
\mu_{\int P_\theta(\cdot | x) \, \mathrm{d}Q_i(\theta)}(\cdot) = \iint \kappa(y,\cdot) \, \mathrm{d}P_\theta(y | x) \, \mathrm{d}Q_i(\theta) = \int \mu_{P_\theta(\cdot | x)} \, \mathrm{d}Q_i(\theta) ,
\end{align*}
and thus
\begin{align*}
\mathrm{MMD}_\kappa\!\left(\int P_\theta(\cdot | x) \, \mathrm{d}Q_1(\theta), \int P_\vartheta(\cdot | x) \, \mathrm{d}Q_2(\vartheta)\right)
& = \left\|\int \mu_{P_\theta(\cdot | x)}\, \mathrm{d}(Q_1 - Q_2)(\theta)\right\|_{\mathcal{H}_{\kappa}} \\
& \le \left(\sup_{\theta} \|\mu_{P_\theta(\cdot | x)}\|_{\mathcal{H}_{\kappa}}\right)
\, \mathrm{TV}(Q_1,Q_2) .
\end{align*}
Taking a supremum over $x$ and using \eqref{eq: bounded mean embedding} completes the proof.
\end{proof}

\subsubsection{Strong Convexity of $\mathcal{J}_{\mathrm{Bayes}}$ and $\mathcal{J}_{\mathrm{PrO}}$}

This appendix establishes the strong convexity of $\mathcal{J}_{\mathrm{Bayes}}$ and $\mathcal{J}_{\mathrm{PrO}}$, which underpinned the proof of \Cref{thm: stability of PrO}.

\begin{proposition}[Strong Convexity of $\mathcal{J}_{\mathrm{Bayes}}$ and $\mathcal{J}_{\mathrm{PrO}}$]
\label{prop: strong convex}
Suppose there exist constants $\lambda_0,\lambda > 0$ such that, for all $\theta$,
\begin{enumerate}
    \item[(i)] \emph{Strongly log-concave prior:} $ - \nabla_\theta^2 \log q_0(\theta) \succeq \lambda_0 I$ for all $\theta$,
    \item[(ii)] \emph{Strongly log-concave likelihood:} $- \nabla_\theta^2 \log p_\theta(y|x) \succeq \lambda I $ for all $\theta$, $x$, $y$,
\end{enumerate}
and let $\lambda_n = \lambda_0 + n \lambda$.
Then, for all datasets $\{(x_i,y_i)\}_{i=1}^n$, the functionals $Q \mapsto \mathcal{J}_{\mathrm{Bayes}}(Q)$ and $Q \mapsto \mathcal{J}_{\mathrm{PrO}}(Q)$ are $\lambda_n$-strongly convex with respect to \ac{KLD}.
\end{proposition}
\begin{proof}
First consider $\mathcal{J}_{\mathrm{Bayes}}$.
From assumption (i) the \ac{KLD} term is $\lambda_0$-strongly convex in $Q$ with respect to the \ac{KLD}.
From assumption (ii), $\theta \mapsto - \log p_\theta(y_i | x_i)$ is $\lambda$-strongly convex, and it follows that $R \mapsto - \int \log p_\theta(y_i | x_i) \, \mathrm{d}R(\theta)$ is $\lambda$-strongly convex with respect to the \ac{KLD}.
Since strong convexity is additive (\Cref{prop: convexity adds}), summing over the contribution from the prior and the $n$ terms of the likelihood gives a total strong convexity contribution of $\lambda_n = \lambda_0 + n \lambda$.

For $\mathcal{J}_{\mathrm{PrO}}$, we recall the Donsker--Varadhan variational formula
\begin{align*}
    - \log \int p_\theta(y_i|x_i) \, \mathrm{d}Q(\theta) = \inf_{R \in \mathcal{P}(\mathbb{R}^d)} \left\{ - \int \log p_\theta(y_i|x_i) \, \mathrm{d}R(\theta) + \KLD(R || Q) \right\} .
\end{align*}
Since infimal convolution preserves strong convexity (\Cref{prop: convolution}),
\[
Q \mapsto - \log \int p_\theta(y_i|x_i) \, \mathrm{d}Q(\theta)
\]
is $\lambda$-strongly convex in $Q$. 
To conclude we follow the same argument, summing over the contribution from the prior and the $n$ terms of the likelihood.
\end{proof}

The remainder of this appendix is dedicated to establishing \Cref{prop: convexity adds,prop: convolution}, which were used in the proof of \Cref{prop: strong convex}.

\begin{proposition}[Strong Convexity is Additive]
\label{prop: convexity adds}
Consider a convex set $\mathcal{Q} \subset \mathcal{P}(\mathbb{R}^d)$ for which $\KLD(Q \, || \, Q') < \infty$ for all $Q, Q' \in \mathcal{Q}$.
Let $\mathcal{J}_i$, $i \in \{1,2\}$, have $\nabla_{\mathrm{V}} \mathcal{J}_i$ well-defined on $\mathcal{Q}$ such that $\mathcal{J}_i$ is $\lambda_i$-strongly convex on $\mathcal{Q}$ with respect to the \ac{KLD} for $i \in \{1,2\}$.
Then $\mathcal{J}_1 + \mathcal{J}_2$ is $(\lambda_1 + \lambda_2)$-strongly convex on $\mathcal{Q}$ with respect to \ac{KLD}.
\end{proposition}
\begin{proof}
Let $Q_1, Q_2 \in \mathcal{Q}$.
By the assumed strong convexity,
\begin{align*}
\mathcal{J}_1(Q_2)
& \ge
\mathcal{J}_1(Q_1)
+ \langle \nabla_{\mathrm{V}} \mathcal{J}_1(Q_1), Q_2 - Q_1 \rangle
+ \lambda_1 \KLD(Q_2 \| Q_1) \\
\mathcal{J}_2(Q_2)
& \ge
\mathcal{J}_2(Q_1)
+ \langle \nabla_{\mathrm{V}} \mathcal{J}_2(Q_1), Q_2 - Q_1 \rangle
+ \lambda_2 \KLD(Q_2 \| Q_1) .
\end{align*}
Adding the two inequalities yields
\[
(\mathcal{J}_1 + \mathcal{J}_2)(Q_2)
\ge
(\mathcal{J}_1 + \mathcal{J}_2)(Q_1)
+ \langle \nabla_{\mathrm{V}} (\mathcal{J}_1 + \mathcal{J}_2)(Q_1) ,\, Q_2 - Q_1 \rangle
+ (\lambda_1 + \lambda_2) \KLD(Q_2 \| Q_1) ,
\]
which proves the result.
\end{proof}

\begin{proposition}[Strong Convexity is Preserved Under Infimal Convolution]
\label{prop: convolution}
Let $\mathcal{L} : \mathcal{P}(\mathbb{R}^d) \rightarrow \mathbb{R}$ be $\lambda$-strongly convex with respect to \ac{KLD}, with $\nabla_{\mathrm{V}} \mathcal{L}$ well-defined. 
Then the infimal convolution of $\mathcal{L}$ with the \ac{KLD},
\[
\mathcal{L}_*(P) = \inf_{Q \in \mathcal{P}(\mathbb{R}^d)} \;  \mathcal{L}(Q) + \KLD(Q \,\|\, P) ,
\]
is $\lambda$-strongly convex with respect to \ac{KLD}.
\end{proposition}
\begin{proof}
Fix $P_1, P_2 \in \mathcal{P}(\mathbb{R}^d)$, and define
\[
Q_i \in \argmin_{Q \in \mathcal{P}(\mathbb{R}^d)} \;\mathcal{L}(Q) + \KLD(Q \,\|\, P_i)  ,
\]
so that by first-order optimality,
\begin{align}
0 = \nabla_{\mathrm{V}} \mathcal{L}(Q_i) + \nabla_{\mathrm{V},1} \KLD(Q \,\|\, P_i) |_{Q = Q_i} = \nabla_{\mathrm{V}} \mathcal{L}(Q_i) + \log\frac{\mathrm{d}Q_i}{\mathrm{d}P_i} , \label{eq: grad f}
\end{align}
where $\nabla_{\mathrm{V},i}$ indicates that the variational gradient is taken with respect to the $i$th argument.
In addition, from Danskin's theorem applied to $\mathcal{L}_*$ at $P_i$,
\begin{align}
\nabla_{\mathrm{V}} \mathcal{L}_*(P_i) = \nabla_{\mathrm{V},2} \KLD(Q_i || P) |_{P = P_i} = - \frac{\mathrm{d}Q_i}{\mathrm{d}P_i}.  \label{eq: Danskin}
\end{align}
From $\lambda$-strong convexity of $\mathcal{L}$,
\begin{align}
\mathcal{L}_*(P_1)
&= \mathcal{L}(Q_1) + \KLD(Q_1 \,\|\, P_1) \nonumber \\
&\ge \mathcal{L}(Q_2)
+ \langle \nabla_{\mathrm{V}} \mathcal{L}(Q_2), Q_1 - Q_2 \rangle
+ \lambda \KLD(Q_1 \,\|\, Q_2)
+ \KLD(Q_1 \,\|\, P_1).  \label{eq: lower bd}
\end{align}
Using \eqref{eq: grad f} at \(Q_2\),
\begin{align}
\langle \nabla_{\mathrm{V}} \mathcal{L}(Q_2), Q_1 - Q_2 \rangle
= - \left\langle \log\frac{\mathrm{d}Q_2}{\mathrm{d}P_2}, Q_1 - Q_2 \right\rangle.  \label{eq: inn prod}
\end{align}
In addition, using the three-point identity for \ac{KLD} (\Cref{lem: three point}),
\begin{align}
\KLD(Q_1 \,\|\, P_1)
= \KLD(Q_1 \,\|\, P_2)
+ \left\langle \log\frac{\mathrm{d}P_2}{\mathrm{d}P_1}, Q_1 \right\rangle. \label{eq: 3 point}
\end{align}
Substituting \eqref{eq: inn prod} and \eqref{eq: 3 point} into \eqref{eq: lower bd} and rearranging, 
\begin{align*}
\mathcal{L}_*(P_1)
&\ge \mathcal{L}(Q_2) - \left\langle \log\frac{\mathrm{d}Q_2}{\mathrm{d}P_2}, Q_1 - Q_2 \right\rangle + \lambda \KLD(Q_1 \,\|\, Q_2)
+ \KLD(Q_1 \,\|\, P_2) \\
& \qquad + \left\langle \log\frac{\mathrm{d}P_2}{\mathrm{d}P_1}, Q_1 \right\rangle \\
& = \left[ \mathcal{L}(Q_2) + \KLD(Q_2 \, || \, P_2) \right] - \left\langle \log\frac{\mathrm{d}Q_2}{\mathrm{d}P_2} , Q_1 \right\rangle + \KLD(Q_1 \,\|\, P_2)
+ \left\langle \log\frac{\mathrm{d}P_2}{\mathrm{d}P_1}, Q_1 \right\rangle \\
& \qquad + \lambda \KLD(P_1 \, || \, P_2)  .
\end{align*}
Then, using the inequality in \Cref{prop: technical},
\begin{align*}
\mathcal{L}_*(P_1) & \geq \left[ \mathcal{L}(Q_2) + \KLD(Q_2 \, || \, P_2) \right] - \left\langle \frac{\mathrm{d}Q_2}{\mathrm{d}P_2} , P_1 - P_2 \right\rangle + \lambda \KLD(P_1 \, || \, P_2) \\
& = \mathcal{L}_*(P_2)
+ \left\langle \nabla_{\mathrm{V}} \mathcal{L}_*(P_2), P_1 - P_2 \right\rangle
+ \lambda \KLD(P_1 \,\|\, P_2),
\end{align*}
where for the final equality we used \eqref{eq: Danskin} to recognise $\nabla_{\mathrm{V}} \mathcal{L}_*(P_2)$.
This establishes $\lambda$-strong convexity of $\mathcal{L}_*$ with respect to the \ac{KLD}.
\end{proof}

Finally we present the technical results in \Cref{lem: three point,prop: technical}, which were used in the proof of \Cref{prop: convolution}.

\begin{proposition}[Three-point identity for KLD]
\label{lem: three point}
Let $P_1, P_2, Q \in \mathcal{P}(\mathbb{R}^d)$. 
Then
\[
\KLD(Q \,\|\, P_1)
= \KLD(Q \,\|\, P_2)
+ \left\langle \log \frac{\mathrm{d}P_2}{\mathrm{d}P_1}, \, Q \right\rangle
\]
whenever these quantities are well-defined.
\end{proposition}
\begin{proof}
From direct computation,
\begin{align*}
\KLD(Q \,\|\, P_1)
= \int \log \frac{\mathrm{d}Q}{\mathrm{d}P_1} \, \mathrm{d}Q 
& = \int \left( \log \frac{\mathrm{d}Q}{\mathrm{d}P_2} + \log \frac{\mathrm{d}P_2}{\mathrm{d}P_1} \right) \, \mathrm{d}Q \\
& = \KLD(Q \,\|\, P_2)
+ \left\langle \log \frac{\mathrm{d}P_2}{\mathrm{d}P_1}, \, Q \right\rangle ,
\end{align*}
as claimed.
\end{proof}

\begin{proposition}[An Inequality for KLD]
\label{prop: technical}
Let $P_1, P_2, Q_1, Q_2 \in \mathcal{P}(\mathbb{R}^d)$.  
Then
\begin{equation}\label{eq:main}
  - \left\langle \log\frac{\mathrm{d}Q_2}{\mathrm{d}P_2},\, Q_1 \right\rangle
  + \KLD(Q_1 \,\|\, P_2)
  + \left\langle \log\frac{\mathrm{d}P_2}{\mathrm{d}P_1},\, Q_1 \right\rangle
  \;\geq\;
  - \left\langle \frac{\mathrm{d}Q_2}{\mathrm{d}P_2},\, P_1 - P_2 \right\rangle 
\end{equation}
whenever these quantities are well-defined.
\end{proposition}
\begin{proof}
Expanding the \ac{KLD}, the left side of \eqref{eq:main} equals
\begin{align*}
  \left\langle
    -\log\frac{\mathrm{d}Q_2}{\mathrm{d}P_2} + \log\frac{\mathrm{d}P_2}{\mathrm{d}P_1} + \log\frac{\mathrm{d}Q_1}{\mathrm{d}P_2},\,
    Q_1
  \right\rangle
  & =
  \left\langle \log\frac{\mathrm{d}Q_1}{\mathrm{d}P_1},\, Q_1 \right\rangle -
  \left\langle \log\frac{\mathrm{d}Q_2}{\mathrm{d}P_2},\, Q_1 \right\rangle \\
  & = \KLD(Q_1 \, || \, P_1) -
  \left\langle \log\frac{\mathrm{d}Q_2}{\mathrm{d}P_2},\, Q_1 \right\rangle .
\end{align*}
On the other hand, since $Q_2$ is a probability distribution, $\langle \mathrm{d}Q_2/\mathrm{d}P_2,\, P_2\rangle = \int \mathrm{d}Q_2 = 1$, and the right hand side of \eqref{eq:main} equals $- \langle \mathrm{d}Q_2 / \mathrm{d}P_2 , P_1 \rangle + 1$.
Thus \eqref{eq:main} is equivalent to
\begin{align}
   \KLD(Q_1 \, || \, P_1) -
  \left\langle \log\frac{\mathrm{d}Q_2}{\mathrm{d}P_2},\, Q_1 \right\rangle 
  + \left\langle \frac{\mathrm{d}Q_2}{\mathrm{d}P_2},\, P_1 \right\rangle
  \geq 1.  \label{eq: equiv ineq}
\end{align}
From the Donsker--Varadhan variational formula,
\[
  \KLD(Q_1 \,\|\, P_1) \;\geq\; \langle \log f,\, Q_1\rangle - \log \langle f,\, P_1\rangle
\]
for any measurable function $f > 0$.
Therefore, setting $f := \mathrm{d}Q_2/\mathrm{d}P_2$,
\[
  \KLD(Q_1\|P_1) - \langle \log f,\, Q_1\rangle + \langle f,\, P_1\rangle 
  \;\geq\; - \log\langle f,\, P_1\rangle +
  \langle f,\, P_1\rangle .
\]
The final expression has the form $t - \log t$ where  $t := \langle f,\, P_1\rangle > 0$.
From the fact that $\log t \leq t - 1$, we obtain \eqref{eq: equiv ineq}, and hence \eqref{eq:main} is established.
\end{proof}

\subsubsection{Uniform Strong Law of Large Numbers for MMD}

This appendix is dedicated to establishing the correctness of the second step in the proof of \Cref{thm: bootstrap}; the uniform strong law of large numbers for the \ac{MMD}:

\begin{proposition}[Uniform Strong Law of Large Numbers for MMD]
\label{lem: uniform}
Assume that:
\begin{enumerate}
    \item[(i)] \emph{Covariates in a compact set}: $(\mathcal{X},d_{\mathcal{X}})$ is a compact Hausdorff metric space.
    \item[(ii)] \emph{Bounded mean embedding of the model}:
    $\sup_{x, \, \theta} \|\mu_{P_\theta(\cdot | x)}\|_{\mathcal{H}_{\kappa}} \le M < \infty$
    \item[(iii)] \emph{Uniform continuity of MMD}: $\mathrm{MMD}_\kappa^2(P_\theta(\cdot | x) , P_\theta(\cdot | x')) \leq C d_{\mathcal{X}}(x,x')$ for some $C \geq 0$ and all $\theta$, $x$, and $x'$.
\end{enumerate}
Then
\[
\sup_{\theta, \vartheta } \big| \mathcal{D}_n(P_\theta, P_\vartheta) - \mathcal{D}(P_\theta, P_\vartheta) \big|
\;\xrightarrow{a.s.}\; 0
\]
as $n \rightarrow \infty$, where randomness is with respect to the covariates $x_i \stackrel{\mathrm{iid}}{\sim} \rho$.
\end{proposition}
\begin{proof}
Our aim is to show that the function class
\[
\mathcal{F}
:=
\Big\{
f_{\theta,\vartheta}(x)
:=
\mathrm{MMD}_\kappa^2\big(P_\theta(\cdot \mid x), P_\vartheta(\cdot \mid x)\big)
:\; \theta, \vartheta \in \mathbb{R}^d \Big\}
\]
is Glivenko--Cantelli, meaning that
\[
\sup_{f \in \mathcal{F}}
\left|
\frac{1}{n} \sum_{i=1}^n f(x_i)
-
\mathbb{E}_{x \sim \rho}[f(x)]
\right|
\;\xrightarrow{a.s.}\; 0 ,
\]
where randomness is with respect to the covariates $x_i \stackrel{\mathrm{iid}}{\sim} \rho$.
Indeed, substituting $f = f_{\theta,\vartheta}$ will yield the desired result.

Following standard arguments, $\mathcal{F}$ is Glivenko--Cantelli whenever $\mathcal{F}$ admits finite $\epsilon$-covers in the supremum norm for every $\epsilon>0$; denote these $\mathcal{F}_\epsilon \subset \mathcal{F}$.
Indeed, given $\epsilon > 0$, we can apply the strong law of large numbers to each $f_{\epsilon,j} \in \mathcal{F}_\epsilon$ to deduce that there almost surely exists $N_{\epsilon,j} \in \mathbb{N}$ such that $|\frac{1}{n} \sum_{i=1}^n f_{\epsilon,j}(x_i) - \mathbb{E}_{x \sim \rho}[f_{\epsilon,j}(x)]| < \epsilon$ for all $n > N_{\epsilon,j}$.
For $n > N_\epsilon := \max_j N_{\epsilon,j} $, we therefore have that $|\frac{1}{n} \sum_{i=1}^n f_{\epsilon_j}(x_i) - \mathbb{E}_{x \sim \rho}[f_{\epsilon_j}(x)]| < \epsilon$.
Thus there almost surely exists $N_\epsilon$ such that, for any $f \in \mathcal{F}$ we can pick an $\epsilon$-accurate approximation $f_\epsilon$ to $f$ from the finite cover $\mathcal{F}_\epsilon$ and use the triangle inequality to deduce that $|\frac{1}{n} \sum_{i=1}^n f(x_i) - \mathbb{E}_{x \sim \rho}[f(x)]| < 2 \epsilon$ for all $n > N_\epsilon$.

To establish the existence of finite $\epsilon$-covers, it is sufficient to show that $\mathcal{F}$ is compact in  $(C(\mathcal{X}), \|\cdot\|_\infty)$.
From Arzel\`a--Ascoli, this amounts to establishing equicontinuity and pointwise boundedness of $\mathcal{F}$:

\medskip

\noindent \textit{Equicontinuity}:
From \Cref{lem: MMD2 Lips}, boundedness of the kernel mean embeddings, and continuity of the (squared) \ac{MMD} in $x$, for any $x,x' \in \mathcal{X}$,
\begin{align*}
|f_{\theta,\vartheta}(x) - f_{\theta,\vartheta}(x')| 
&=
\Big|
\mathrm{MMD}_\kappa^2(P_\theta(\cdot | x), P_\vartheta(\cdot | x))
-
\mathrm{MMD}_\kappa^2(P_\theta(\cdot | x'), P_\vartheta(\cdot | x'))
\Big| \\
&\le
4 M \left[ \mathrm{MMD}_\kappa^2(P_\theta(\cdot | x) , P_\theta(\cdot | x')) + \mathrm{MMD}_\kappa^2(P_\vartheta(\cdot | x) , P_\vartheta(\cdot | x')) \right] \\
& \leq 8 M C d_{\mathcal{X}}(x,x') ,
\end{align*}
establishing equicontinuity of $\mathcal{F}$.

\medskip

\noindent \textit{Pointwise Boundedness}:
From the expression $\mathrm{MMD}_\kappa(P_\theta(\cdot | x) , P_\vartheta(\cdot | x)) = \| \mu_{P_\theta(\cdot | x)} - \mu_{P_\vartheta(\cdot | x)} \|_{\mathcal{H}_{\kappa}}$, the triangle inequality, and boundedness of the kernel mean embeddings, we have $|f(x)| \leq 4 M$ for each $f \in \mathcal{F}$ and $x \in \mathcal{X}$.

\medskip

\noindent Thus the sufficient conditions for compactness of $\mathcal{F}$ have been established, completing the argument.
\end{proof}

\begin{proposition}[A Continuity Result for MMD]
\label{lem: MMD2 Lips}
Let each $P_i$ be a probability distribution with a well-defined kernel mean embedding $\mu_{P_i}$, here for $i \in \{1,2,3,4\}$.
Then
\[
\big|
\mathrm{MMD}_\kappa^2(P_1,P_2) - \mathrm{MMD}_\kappa^2(P_3,P_4)
\big|
\;\le\;
4 m
\Big(
\|\mu_{P_1} - \mu_{P_3}\|_{\mathcal{H}_{\kappa}}
+
\|\mu_{P_2} - \mu_{P_4}\|_{\mathcal{H}_{\kappa}}
\Big)
\]
where $m = \max\{ \|\mu_{P_i}\|_{\mathcal{H}_{\kappa}} : i = 1,2,3,4 \}$.
\end{proposition}
\begin{proof}
Recall that $\mathrm{MMD}_\kappa^2(P,Q) = \|\mu_P - \mu_Q\|_{\mathcal{H}_{\kappa}}^2$ and let $a := \mu_{P_1} - \mu_{P_2}$ and $b := \mu_{P_3} - \mu_{P_4}$.
Then
\[
\mathrm{MMD}_\kappa^2(P_1,P_2) - \mathrm{MMD}_\kappa^2(P_3,P_4)
=
\|a\|_{\mathcal{H}_{\kappa}}^2 - \|b\|_{\mathcal{H}_{\kappa}}^2.
\]
Using the identity $\|a\|^2 - \|b\|^2 = \langle a-b,\, a+b \rangle$, we obtain
\begin{align}
\big|
\|a\|_{\mathcal{H}_{\kappa}}^2 - \|b\|_{\mathcal{H}_{\kappa}}^2
\big|
\le
\|a-b\|_{\mathcal{H}_{\kappa}} \, \|a+b\|_{\mathcal{H}_{\kappa}}. \label{eq: a b}
\end{align}
The first term in \eqref{eq: a b} can be bounded as
\begin{align*}
\|a-b\|_{\mathcal{H}_{\kappa}}
= \| (\mu_{P_1} - \mu_{P_3}) - (\mu_{P_2} - \mu_{P_4}) \|_{\mathcal{H}_{\kappa}} 
\le \|\mu_{P_1} - \mu_{P_3}\|_{\mathcal{H}_{\kappa}} + \|\mu_{P_2} - \mu_{P_4}\|_{\mathcal{H}_{\kappa}} ,
\end{align*}
while the second term in \eqref{eq: a b} can be bounded as
\begin{align*}
\|a+b\|_{\mathcal{H}_{\kappa}}
& = \| (\mu_{P_1} - \mu_{P_2}) + (\mu_{P_3} - \mu_{P_4}) \|_{\mathcal{H}_{\kappa}} \\
& \le
\|\mu_{P_1}\|_{\mathcal{H}_{\kappa}} + \|\mu_{P_2}\|_{\mathcal{H}_{\kappa}}
+
\|\mu_{P_3}\|_{\mathcal{H}_{\kappa}} + \|\mu_{P_4}\|_{\mathcal{H}_{\kappa}} .
\end{align*}
Combining these bounds, and using the definition of $m$, yields the result.
\end{proof}

\color{black}

\section{Experimental Protocol}
\label{app: sims}

This appendix contains the details required to reproduce the experiments reported in \Cref{sec: empirical}.
The test problems that we consider are specified in \Cref{app: test problems}, details of the maximum mean discrepancy test statistic are contained in \Cref{app: mmd}, implementational aspects of \ac{VGD} are discussed in \Cref{app: vgd}, and additional empirical results are contained in \Cref{app: additional}.
Full details for the seismic travel time tomography case study are contained in \Cref{app: seismic}.

\paragraph{Code} 

Code to reproduce our simulation study from \Cref{sec: sims} is available at \url{https://github.com/liuqingyang27/Detecting-Model-Misspecification-via-VGD}.
Code to reproduce our seismic travel time tomography experiments from \Cref{sec: waveform} is available at \url{https://github.com/XuebinZhaoZXB/Detecting-Model-Misspecification/}.

\subsection{Test Problems}
\label{app: test problems}

The regression functions that we considered for our simulation study in \Cref{sec: sims} are as follows:
\begin{enumerate}
    \item $f_\theta(x) = \theta x^2$ with $\theta \in \mathbb{R}$ and $\{x_i\}$ uniformly sampled from $[0,1]$
    \item $f_\theta(x) = \frac{1}{1+e^{-\theta x}}$ with $\theta \in \mathbb{R}$ and $\{x_i\}$ uniformly sampled from $[-1,1]$
    \item $f_\theta(x) = \theta_1 + \theta_2 x$ with $\theta \in \mathbb{R}^2$ and $\{x_i\}$ uniformly sampled from $[-2,2]$
\end{enumerate}
In the well-specified scenario, data are generated by $y_i=f_\theta(x_i)+z_i$ for all $i\in \{1,\cdots, n\}$, where $z_i$ are i.i.d $\mathcal{N}(0,\sigma^2)$. For the three cases above, the noise levels are separately $0.5, 0.05$ and $0.8$.
The true data-parameters parameters in this case were $\theta = 5$ for the quadratic model, $\theta = 5$ for the sigmoid model and $\theta = [5, 3]$ for the linear model.
To generate data that are misspecified we proceeded as follows for each of the above tasks:
\begin{enumerate}
    \item $y_i = (5+3 u_i)x_i^2 +z_i$ for all $i\in \{1,\cdots, n\}$, where $u_i\sim \mathcal{N}(0,1),\ z_i\sim \mathcal{N}(0,0.5^2)$ 
    \item data points are generated from a uniform distribution with density
    \begin{equation*}
      f(x, y) =
      \begin{cases}
        \frac{1}{2}, & \text{if } (x, y) \in (0, 1) \times (0, 1), \\
        \frac{1}{2}, & \text{if } (x, y) \in (-1, 0) \times (-1, 0), \\
        0, & \text{otherwise}.
      \end{cases}
    \end{equation*}    
    \item $y_i = 5 + 3 x_i + 2 x_i^2 + z_i$ for all $i\in \{1,\cdots, n\}$, where $z_i\sim \mathcal{N}(0,0.8^2).$
\end{enumerate}

In the second of the above examples $\theta \mapsto f_\theta(x)$, $\nabla_\theta f_\theta(x)$ and $\Delta_\theta(x)$ are bounded, so from Proposition \ref{prop: Gauss location MT} the sufficient conditions of our theory are satisfied whenever the kernel $k$ is bounded.
On the other hand, in the first and third cases our theoretical assumptions are \emph{not} satisfied; this enables us to test the performance of \Cref{alg: svgd} outside the setting where our theoretical results hold.

\subsection{Maximum Mean Discrepancy}
\label{app: mmd}

The maximum mean discrepancy employed in these experiments utilised a Gaussian kernel 
\begin{align*}
    \kappa(y,y') = \exp \left( - \frac{\|y-y'\|^2}{2 \ell^2} \right)
\end{align*}
where the lengthscale was selected as the standard deviation of $\{y_i\}_{i=1}^n$.
For the synthetic examples we present in \Cref{sec: sims}, the dimension of the response variable is always $p = 1$, but for completeness here we work with the general form of the Gaussian kernel.
The choice of a Gaussian kernel together with the Gaussian measurement error model \eqref{eq: Gaussian location} with covariance matrix $\Sigma = \sigma^2 I_{p \times p}$ enables \eqref{eq: MMD} to be explicitly computed using the analytic form of the integral
\begin{align*}
    \mathfrak{k}(\theta,\vartheta | x_i) & := \iint \exp \left( - \frac{\|y - y' \|^2}{2 \ell^2} \right) \; \mathrm{d}P_\theta(y|x_i) \mathrm{d}P_\vartheta(y'|x_i) \\
    & = \left( \frac{\ell^2}{\ell^2 + 2 \sigma^2} \right)^{p/2} \exp\left( - \frac{ \| f_\theta(x_i) - f_\vartheta(x_i) \|^2 }{2(\ell^2 + 2\sigma^2)} \right) 
\end{align*}
together with
\begin{align}
    \mathrm{MMD}^2(\PBayes(\cdot | x_i),\PPC(\cdot | x_i)) & = \iint \mathfrak{k}(\theta,\vartheta | x_i) \; \mathrm{d}\QBayes(\theta) \mathrm{d}\QBayes(\vartheta) \nonumber \\
    & \qquad - 2 \iint \mathfrak{k}(\theta,\vartheta | x_i) \; \mathrm{d}\QBayes(\theta) \mathrm{d}\QPC(\vartheta) \nonumber \\
    & \qquad + \iint \mathfrak{k}(\theta,\vartheta | x_i) \; \mathrm{d}\QPC(\theta) \mathrm{d}\QPC(\vartheta). \label{eq: explicit MMD}
\end{align}
In practice both $\QBayes$ and $\QPC$ are approximated using \ac{VGD}, so we obtain a particle-based representation $\{\theta_i^{\mathrm{Bayes}}\}_{i=1}^N$ for $\QBayes$ and $\{\theta_i^{\mathrm{PrO}}\}_{i=1}^N$ for $\QPC$.
Substituting these empirical measures into \eqref{eq: explicit MMD} we obtain
\begin{align}
    \mathrm{MMD}^2(\PBayes(\cdot | x_i),\PPC(\cdot | x_i)) & \approx \frac{1}{N^2} \sum_{r=1}^N \sum_{s=1}^N \mathfrak{k}(\theta_r^{\mathrm{Bayes}},\theta_s^{\mathrm{Bayes}} | x_i) \nonumber  \\
    & \qquad - 2 \frac{1}{N^2} \sum_{r=1}^N \sum_{s=1}^N \mathfrak{k}(\theta_r^{\mathrm{Bayes}},\theta_s^{\mathrm{PrO}} | x_i)  \nonumber \\
    & \qquad + \frac{1}{N^2} \sum_{r=1}^N \sum_{s=1}^N \mathfrak{k}(\theta_r^{\mathrm{PrO}},\theta_s^{\mathrm{PrO}} | x_i) . \label{eq: MMD empirical}
\end{align}
The approximate values in \eqref{eq: MMD empirical} were used for the experiments that we report in \Cref{sec: empirical} of the main text.

\subsection{Variational Gradient Descent}
\label{app: vgd}

For our toy experiments we utilised the inverse multiquadric kernel
$$
k(\theta,\vartheta) = \left(c^2 + \frac{\|\theta - \vartheta\|^2}{l^2}\right)^{-\beta}
$$
with $\beta = 0.5$.
To select an appropriate length scale $l$, we employed the median heuristic \citep{garreau2017large} at each iteration of \Cref{alg: svgd}.
The step size and iteration number in each experiment were manually selected to ensure convergence, as quantified by \ac{KGD} (see e.g. \Cref{fig: sim study 2}). 
In all experiments $N=20$ particles were used.

\subsection{Additional Empirical Results}
\label{app: additional}

The posterior distributions $\QBayes$ and $\QPC$ corresponding to the regression tasks in \Cref{fig: sim study} are displayed in \Cref{fig: sim study 2}, alongside the values of the \ac{KGD} in \eqref{eq: KGD} obtained along the trajectory of \ac{VGD}.
A kernel density estimator has been applied to the particle representations of $\QBayes$ and $\QPC$ to aid visualisation in \Cref{fig: sim study 2}.
It can be seen that the standard Bayesian posterior $\QBayes$ is rather concentrated in all scenarios, irrespective of whether the statistical model is well-specified or misspecified, while $\QPC$ tends to be more diffuse when the statistical model is misspecified.
The values of \ac{KGD} obtained along the trajectory of \ac{VGD} appear to generally decrease and converge to a limit in all cases, consistent with an accurate $N$-particle approximation having been found.

\begin{figure}[t!]
    \includegraphics[width=\textwidth]{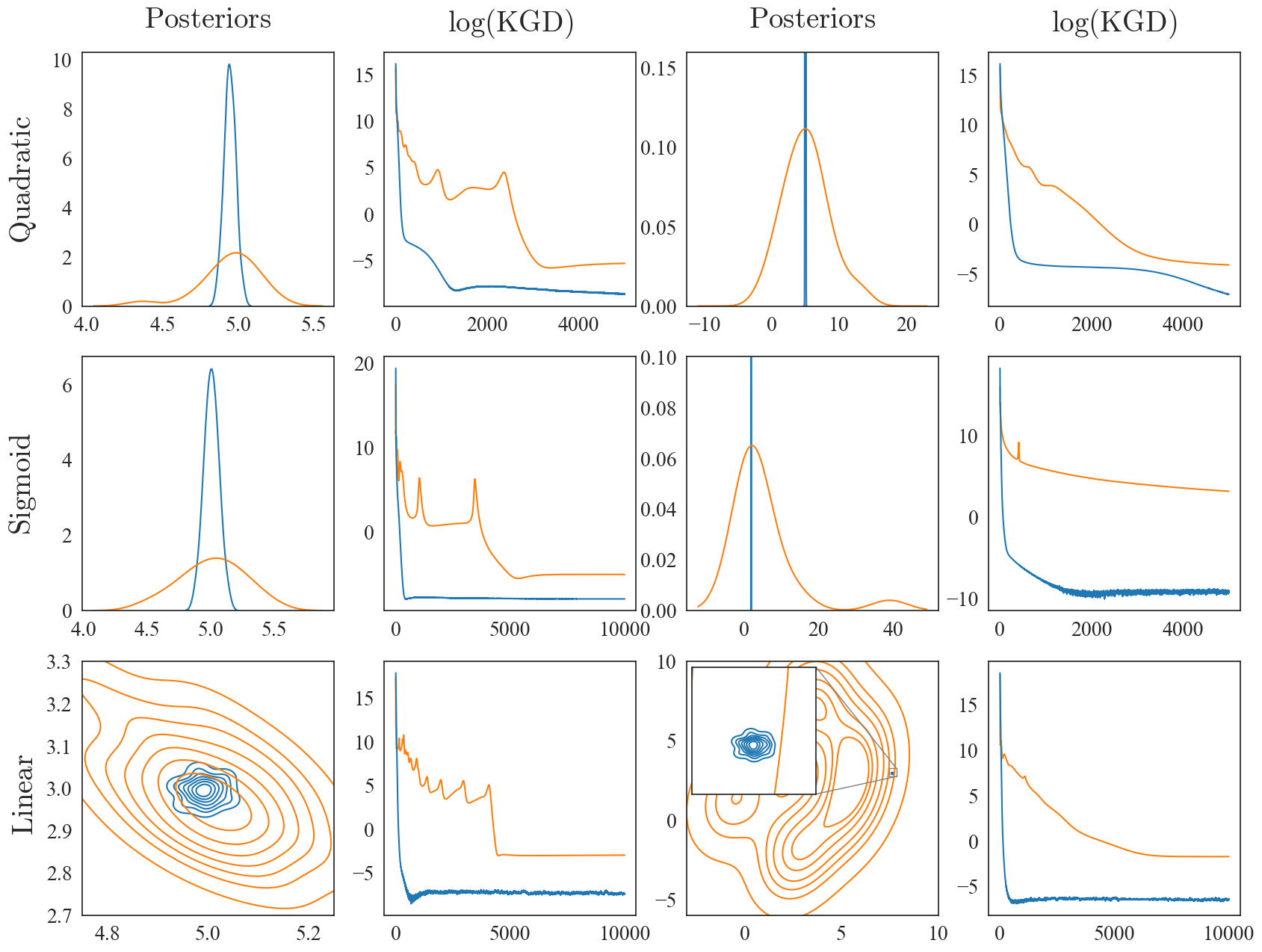}
    \caption{Simulation Study.  Each row corresponds to a regression task in \Cref{fig: sim study} in which the data are either generated from the statistical model (well-specified, left) or not generated from the statistical model (misspecified, right). The posterior distributions $\QBayes$ (blue) and $\QPC$ (orange) are displayed, together with the values of the \ac{KGD} in \eqref{eq: KGD} obtained along the trajectory of \ac{VGD}.
    \alttext{A figure consisting of 12 panels arranged into a $3 \times 4$ grid.  Each row represents a different dataset, with both the parameter posteriors and the (log) KGD values displayed for settings in which the statistical model is well-specified and misspecified.}
    }
    \label{fig: sim study 2}
\end{figure}

Further to the analysis in the main text, we investigate the performance of our method under different data sizes $n$ and different dimensions of $\theta$. 

First, in \Cref{fig: diff data size}, we considered the sigmoid regression task with varying sample sizes of $n = 100, 1000$ and $10000$. 
The test statistic values $\mathcal{T}(\{(x_i,\tilde{y}_i)\}_{i=1}^n)$ calculated under the (bootstrap) null typically decrease as the data size grows, while under misspecification the actual $\mathcal{T}$ values are effectively $n$-independent. 
Consequently, misspecified models are easier to detect when we have a larger dataset, as would be expected.

Second, in \Cref{fig: diff dimension} we consider a regression model defined by 
\begin{equation}
f_\theta(x) = \sum^{d}_{p=1} \theta_p \sin(px).\label{eq: diff dim model}    
\end{equation}
For the misspecified scenario, the data are generated according to $y_i = \sin(1/x_i) + z_i$ where $z_i \sim \mathcal{N}(0,\sigma^2)$ and $\sigma = 0.2$, a function that remains misspecified regardless of the dimension $p$ of the model \eqref{eq: diff dim model}.
In particular, we cannot expect \eqref{eq: diff dim model} to resolve the rapid oscillations around $x = 0$.
For presentational purposes we consider $p \in \{5,20,50\}$.
The predictive distributions $\PBayes$ and $\PPC$ perform generally well when the model is well-specified.
In the misspecified case, $\PBayes$ is over-confident around $x = 0$ and this is partially remedied in $\PPC$.
The power of diagnostic declines with increasing parameter dimension $p$, as the actual \ac{MMD} value gets closer to the effective support of the null; however, the test still had power to reject the well-specified null even when $p = 50$.

In summary, our additional experiments confirm the intuition that larger dataset sizes $n$ increase our power to detect when the model is misspecified, while larger parameter dimension $p$ decreases our power to detect when the model is misspecified.

\begin{figure}[t!]
    \includegraphics[width=\textwidth]{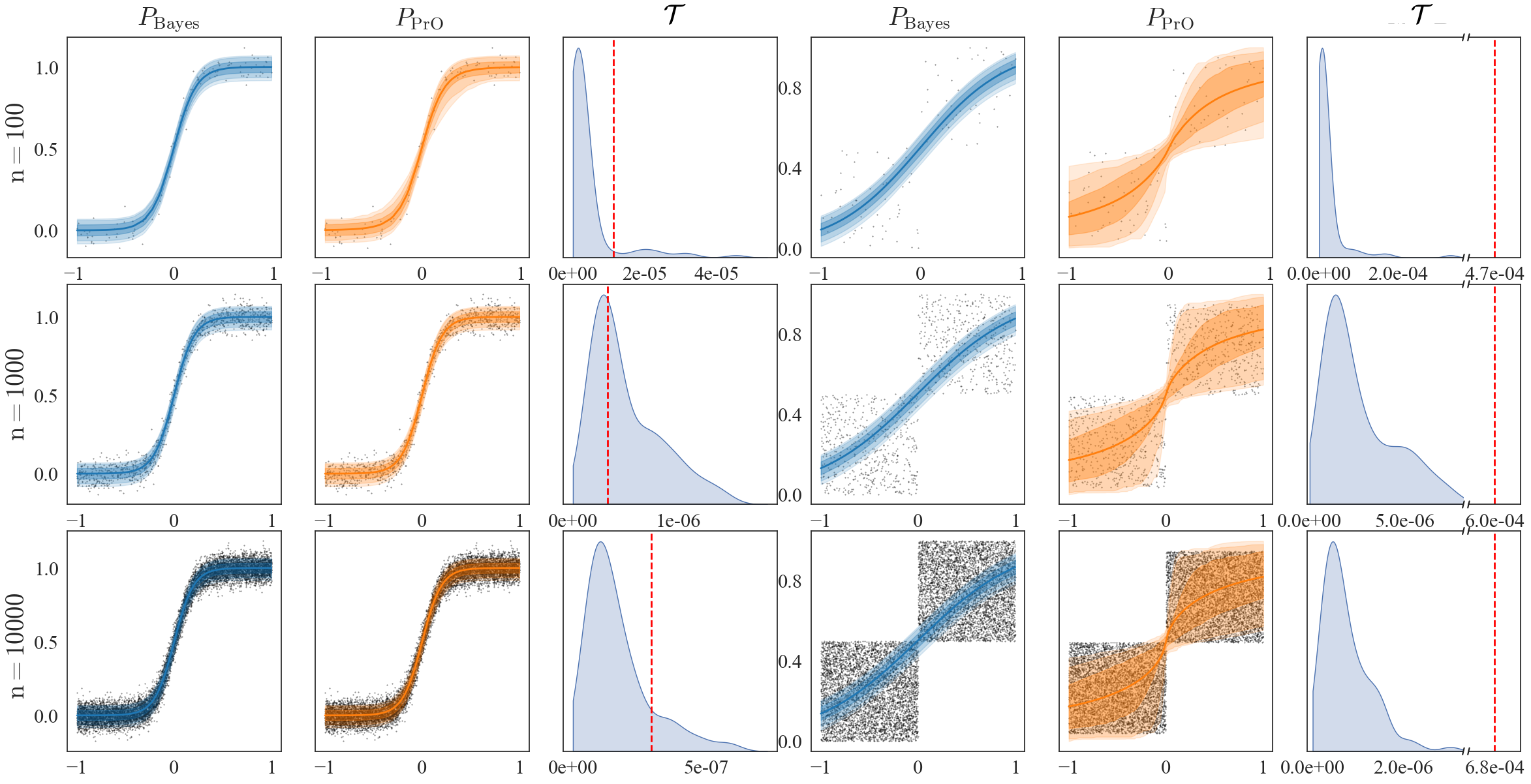}
    \caption{Additional simulation study, varying the size $n$ of the dataset. Each row corresponds to the sigmoid regression task in \Cref{fig: sim study} with different data sizes in which the data are either generated from the statistical model (well-specified, left) or not generated from the statistical model (misspecified, right). The posterior predictive distributions $\PBayes$ and $\PPC$ are displayed, along with the null distribution under the hypothesis that the statistical model is well-specified, and actual realised value of the test statistic $\mathcal{T}$ in \eqref{eq: MMD} (red dashed).
    \alttext{A figure with 18 panels arranged into a $3 \times 6$ grid.  On each row the posterior predictive fits are displayed for both the standard and PrO posteriors, along with the distribution of the test statistic $\mathcal{T}$, when the statistical model is both well-specified and misspecified.  Each row corresponds to a different number of samples $n$ in the dataset.}
    }
    \label{fig: diff data size}
\end{figure}

\begin{figure}[t!]
    \includegraphics[width=\textwidth]{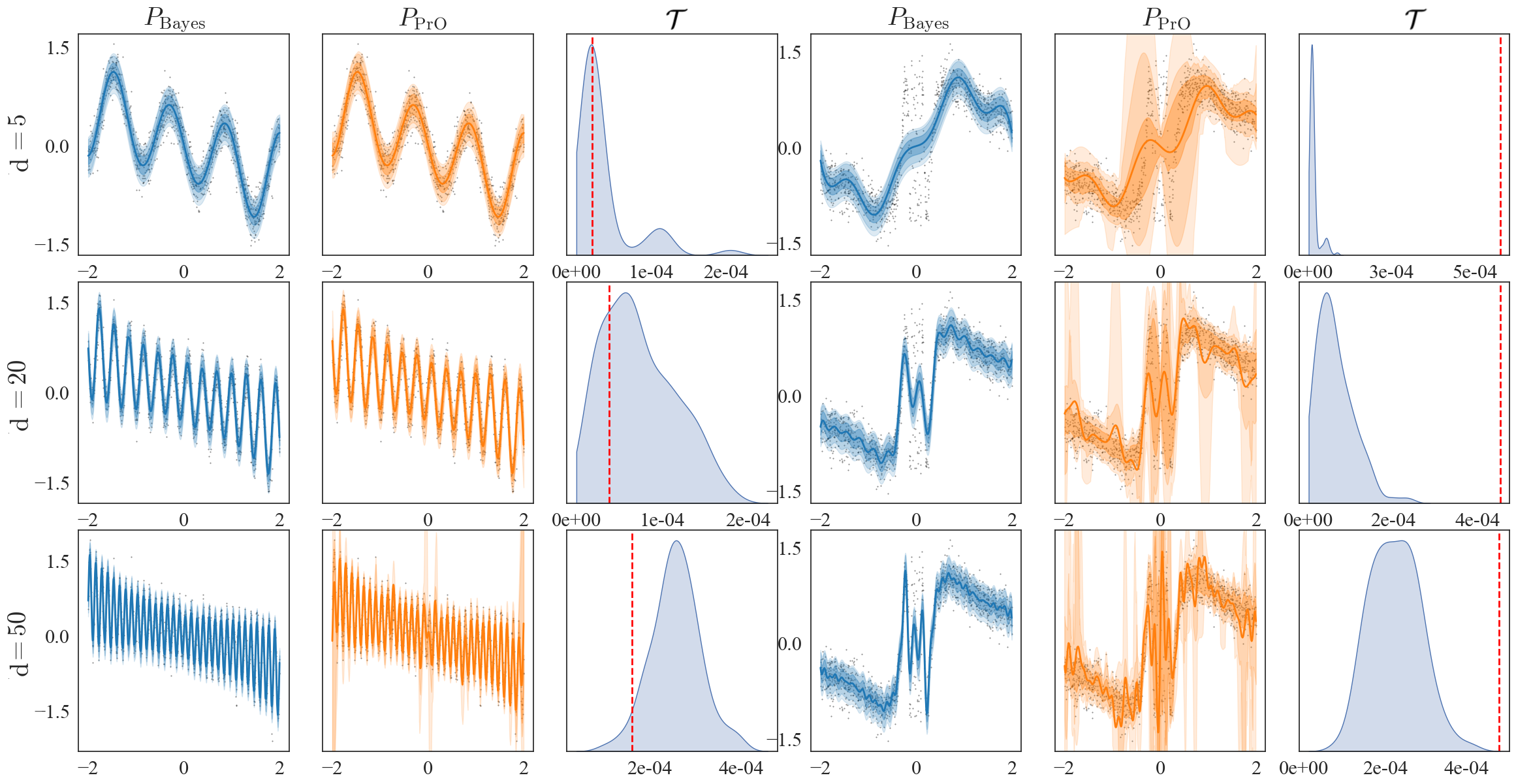}
    \caption{Additional simulation study, varying the number $d$ of parameters in the model. Each row corresponds to a regression task using model \eqref{eq: diff dim model} with different parameters dimension, in which the data are either generated from the statistical model (well-specified, left) or not generated from the statistical model (misspecified, right). The posterior predictive distributions $\PBayes$ and $\PPC$ are displayed, along with the null distribution under the hypothesis that the statistical model is well-specified, and actual realised value of the test statistic $\mathcal{T}$ in \eqref{eq: MMD} (red dashed). 
    \alttext{A figure with 18 panels arranged into a $3 \times 6$ grid.  On each row the posterior predictive fits are displayed for both the standard and PrO posteriors, along with the distribution of the test statistic $\mathcal{T}$, when the statistical model is both well-specified and misspecified.  Each row corresponds to a different number of parameters $d$ in the statistical model.}
    }
    \label{fig: diff dimension}
\end{figure}

\subsection{Details for Seismic Travel Time Tomography}
\label{app: seismic}

For computation using \ac{VGD}, a set of $N = 600$ particles $\{\theta_j^0\}_{j=1}^N$ were initialised by sampling from the prior $Q_0$.
A total of $T = 500$ iterations of \ac{VGD} were performed with step size $\epsilon = 0.1$. 
The Gaussian kernel was used, in line with earlier work in this context, and the length scale was calculated by the median of pairwise distances between particles \citep{garreau2017large}.

\end{document}